\newtheorem{definition}{Definition}[section]
\newtheorem{lemma}[definition]{Lemma}
\newtheorem{theorem}[definition]{Theorem}
\newtheorem{corollary}[definition]{Corollary}
\theoremstyle{definition}
\newtheorem{remark}[definition]{Remark}
\numberwithin{equation}{section}
\def\cN{\mathcal{N}}
\def\a0{{\rm a}_0}
\def\Re{\mathrm{Re}}
\def\Im{\mathrm{Im}}
\title[Out-of-time-ordered correlators of mean-field bosons via Bogoliubov theory]{Out-of-time-ordered correlators of mean-field bosons via Bogoliubov theory}
\author{Marius Lemm}
\address{Department of Mathematics, University of T\"ubingen, Auf der Morgenstelle 10, 72076 T\"ubingen, Germany}
\email{marius.lemm@uni-tuebingen.de}
\author{Simone Rademacher}
\address{Department of Mathematics, Ludwig-Maximilians-Universit\"at M\"unchen, Theresienstr. 39, 80333 Munich, Germany}
\email{simone.rademacher@math.lmu.de}
\begin{document}

\begin{abstract}
Quantum many-body chaos concerns the scrambling of quantum information among large numbers of degrees of freedom. It rests on the prediction that out-of-time-ordered correlators  (OTOCs)  of the form $\langle [A(t),B]^2\rangle$ can be connected to classical symplectic dynamics. We rigorously prove a variant of this correspondence principle for mean-field bosons. We show that the $N\to\infty$ limit of the OTOC $\langle [A(t),B]^2\rangle$ is explicitly given by a suitable symplectic Bogoliubov dynamics. In practical terms, we describe the dynamical build-up of many-body entanglement between a particle and the whole system by an explicit nonlinear PDE on $L^2(\mathbb{R}^3) \oplus L^2(\mathbb{R}^3)$. For higher-order correlators, we   obtain an out-of-time-ordered analog of the Wick rule. The proof uses Bogoliubov theory. Our finding spotlights a new problem in nonlinear dispersive PDE with implications for quantum many-body chaos.
\end{abstract}

\maketitle 

\section{Introduction}
The central objective of quantum chaos is to uncover connections between quantum dynamics and regular or chaotic features of the underlying classical dynamics. The subject took off in the 1980s and produced several success stories in physics and mathematics, e.g., the Gutzwiller 
trace formula and Berry's random wave model \cite{Haake,St}.

In 2008, motivated by the black hole information paradox, Sekino and Susskind \cite{SeSu} introduced the concept of ``fast information scramblers'', which is a notion of \textit{quantum many-body chaotic system}. In an influential talk in 2014, Kitaev \cite{Kitaev} identified a new link between black hole physics and quantum spin glasses. Kitaev proposed to capture information scrambling in quantum many-body systems through the notion of \textit{out-of-time-ordered-correlators} (which were originally introduced by Larkin and Ovchinnikov \cite{LO} in the context of superconductivity). The standard OTOC between two observables $A$ and $B$ is
\[
C_{A,B}(t)=-\langle\psi,  [A(t),B]^2\psi\rangle,\qquad \textnormal{where } A(t)=e^{\mathrm{i}tH} Ae^{-\mathrm{i}tH}.
\]
Here $A$ and $B$ are Hermitian operators such that initially $[A,B]=0$ and $\psi$ is a fixed reference state (e.g., an equilibrium state). For example, consider a bipartition of the Hilbert spaces $\mathcal H=\mathcal H_A\otimes \mathcal H_B$ and observables $A=A'\otimes \mathbbm 1_B$ and  $B=\mathbbm 1_A\otimes B'$. Then $[A,B]=0$, but $[A(t),B]\neq0$ for $t>0$, assuming the Hamiltonian $H$ couples $\mathcal H_A$ and $\mathcal H_B$. The OTOC $C_{A,B}(t)\geq 0$ measures the size of the commutator with respect to a fixed reference state.

In the past 10 years, OTOCs have emerged as the premier measure of entanglement growth in quantum many-body systems. They are being investigated and used by physicists from a variety of research areas, ranging from high-energy physics to condensed-matter physics and quantum information theory \cite{AFI,Blake,CZHF,FZSZ,FS,Hartnoll,HMY,HQRY,KH,KGP,LM,LS,MSS,NSZ,RSS,SZFZ,ShSt1,ShSt2,Stanford,Swingle,SBSH,SLSMD,YCZ, ZJL}. In particular, exponential growth of OTOCs on intermediate time scales ($C_{A,B}(t)\sim e^{v_{\mathrm{B}} t}$ where $v_{\mathrm{B}}\geq 0$ is the so-called butterfly velocity) is a key quantifier of quantum many-body chaos. By contrast, slow growth of OTOCs in time (say, logarithmic) is associated with many-body localization \cite{CZHF,FZSZ}. New experiments have been devised to directly measure OTOCs \cite{Betal,Getal,Mietal} (see also \cite{SBSH}). The Loschmidt echo \cite{GPSZ,YCZ} is a close relative which has a long history including on the experimental side \cite{AKD}. For additional background on OTOCs, see the reviews \cite{LSHOH,XS}. As emphasized in many of the references above, understanding OTOCs is one of the central problems in modern physics because it directly relates to understanding the mechanisms underpinning quantum many-body chaos and many-body localization.

What is the connection of the OTOC to classical physics? Essentially, since it is a commutator, a natural classical counterpart of the OTOC is the Poisson bracket.  For example, in a semiclassical situation in few-body quantum mechanics where $A$ and $B$ are pseudodifferential operators with symbols $a$ and $b$, $C_{A,B}(t)$ should be roughly given by $(\{a(t),b\})^2$ for times sufficiently short compared to the Ehrenfest time, where $a(t)$ is given in terms of the classical symplectic flow. The correspondence connects the notion of classical chaos (i.e., positive Lyapunov exponent) to rapid information scrambling  (i.e., exponential growth of OTOC $C_{A,B}(t)\sim e^{v_{\mathrm{B}} t}$). To see the heuristics, consider a single quantum particle living on a domain with chaotic geodesic flow and take the observables $A=x$, the position operator, and $B=-i\nabla_x$, the momentum operator. Then, assuming the correspondence between OTOC and Poisson bracket
\begin{equation}\label{eq:Poisson}
    C_{A,B}(t)\approx (\{x(t),-i\nabla_x\})^2=(\nabla_x x(t))^2\approx e^{2\lambda t},
\end{equation}
where we used that the Lyapunov exponent $\lambda$ quantifies the exponential divergence of trajectories $x(t)$.
The butterfly velocity $v_\mathrm{B}$ of information scrambling is then given by twice the classical Lyapunov exponent, $v_\mathrm{B}=2\lambda$. This semiclassical picture has been considered in low-dimensional quantum systems, e.g., the quantum kicked-rotor or kicked-top model, \cite{ASAND,JGW,RGG}.
Given the central role of the OTOC in quantum \textit{many-body} physics, it is an important problem to make an analogous correspondence principle between the OTOC and a symplectic counterpart rigorous in a many-body setting.\\

\subsection{Hamiltonian.}
In the present paper, we rigorously prove such a correspondence principle between OTOC and symplectic dynamics for a paradigmatic many-body system of condensed-matter physics: \textit{mean-field bosons}. That is, we consider a system of $N$ bosons in $\mathbb R^3$ in the mean-field regime, which is described by the Hamiltonian 
\begin{align}\label{eq:H}
H_N = \sum_{j=1}^N ( - \Delta_j ) + \frac{1}{N} \sum_{1\leq i<j\leq N} v(x_i-x_j) 
\end{align}
acting on the permutation-symmetric (i.e., bosonic) $N$-particle Hilbert space $L_s^2( \mathbb{R}^{3N})$. We assume that the two-body interaction potential $v:\mathbb R^3\to \mathbb R$ satisfies a weak regularity assumption
\begin{align}
\label{ass:v}
v^2 \leq C ( 1- \Delta)  \; 
\end{align}
which allows for singularities (e.g., the singular Coulomb interaction $v(x)=\frac{C}{|x|}$ in three dimensions by Hardy's inequality).  The mean-field regime is characterized by the prefactor $\frac{1}{N}$ in \eqref{eq:H} which corresponds to a special kind of weak coupling scaling limit.\footnote{Our model has some similarity with a many-body model considered in \cite{Stanford} by Stanford, see also \cite{LSHOH}, who identified the weak-coupling limit of the OTOC through a non-rigorous perturbative expansion.  
 The mean-field model being a scaling limit variant of weak coupling makes our model substantially easier to analyze and allows us to rigorously identify the large-$N$ limit of the OTOC by completely different mathematical methods.} 

 Standard arguments show that the (unbounded) Hamiltonian $H_N$ is self-adjoint on a suitable dense domain and so $e^{-\mathrm{i}t H_N}:L_s^2( \mathbb{R}^{3N})\to L_s^2( \mathbb{R}^{3N})$ is a unitary operator. This allows us to define the usual Heisenberg dynamics as 
\[
A(t)=e^{\mathrm{i}t H_N}Ae^{-\mathrm{i}t H_N}
\]
for suitable operators $A$ (the precise conditions on $A$ are given later). 

We will study the quantum many-body dynamics generated from $H$ in the usual way through the $N$-particle Schr\"odinger equation
\begin{align}
\label{eq:Schroe}
i \partial_t \psi_{N,t} = e^{-i t H_N} \psi_{N,t}
\end{align}

\subsection{Initial state.}
For the initial state, we are interested in a pure Bose-Einstein condensate state, a product state 
\begin{equation}\label{eq:psidefn}
\psi_N(x_1,\ldots,x_N) = \prod_{j=1}^N\varphi_0(x_j)
\end{equation}
for a one-body wave function $\varphi_0\in L^2(\mathbb R^3)$. We recall that trapped Bose systems at low temperatures are known to relax to the ground state and exhibit Bose-Einstein condensation experimentally \cite{Ketterle}.  The initial state \eqref{eq:psidefn} can thus be experimentally realized by trapping the bosons in an electric field, tuning the interaction strength down and cooling the system.   

Releasing the trap means that one considers the many-body time evolution \eqref{eq:Schroe} of the initial product state \eqref{eq:psidefn}. Our results will depend on the initial state displaying a BEC. The occurrence of BEC allows for an approximate, mathematically rigorous complexity reduction of the interacting many-body system. Heuristically, this is because, (i), the Bose-Einstein condensate  is to leading order described by the order parameter $\varphi_t$ and, (ii), the fluctuations around the condensate can also be described to leading order in terms of $\varphi_t$ by Bogoliubov theory. This twofold complexity reduction is what allows us to describe many-body information scrambling (at least to leading order) by one-body quantities. If the initial state is not close to a BEC, no meaningful complexity reduction of this kind is known and so there is no starting point for such an analysis. So the key physical mechanism underlying our result about describing the quantum many-body chaos of mean-field bosons is Bose-Einstein condensation.

\subsection{Description of the condensate through the nonlinear Hartree equation}
 A key fact is that the BEC is dynamically preserved to leading order. The condensate order parameter is a function $\varphi_t\in L^2(\mathbb R^3)$, called the condensate wave function. Its evolution is described by the nonlinear Hartree equation on $L^2(\mathbb R^3)$:
\begin{align}
\label{def:hartree}
i \partial_t \varphi_t = h_{\varphi_t} \varphi_t, \quad h_{\varphi_t} = - \Delta + (v*\vert \varphi_t \vert^2) \; . 
\end{align}
 The Hartree equation \eqref{def:hartree} is also known as the nonlinear Schr\"odinger equation (NLS), a prototypical nonlinear dispersive PDE that has entire monographs written on it \cite{Fibich}. It is of relevance not just for the description of BECs \cite{Ketterle}, but also for light propagation through nonlinear optical fibers and planar waveguides \cite{NLSreview}.

Indeed, the mathematical physics community has rigorously proved that the many-body dynamics \eqref{eq:psidefn} can be approximated by solution to the nonlinear Hartree equation \eqref{def:hartree} in a suitable sense; see for example \cite{AGT,AFP,BGM,CLS,C,EY,FKP,FKS,GV,KP,RoS,Sp}). Moreover the quantum fluctuations around the Hartree equation can be described \cite{BPPS,BS,C,GMM,H,LNS,MPP} in both dynamical and stationary settings by rigorous versions of Bogoliubov theory \cite{Bog}. Such boson systems have also been characterized in probabilistic language \cite{BKS,BSS,KRS,R,RS,Rsing}. In this paper, we shall draw on these techniques and use Bogoliubov theory to describe \textit{dynamical entanglement generation} via OTOCs between one boson and the whole system.

 The solution $\varphi_t$ to \eqref{def:hartree} is an important object for us, because our results will express the many-body OTOC (and its higher-order variants) to leading order in $N$ as commutators depending only on $\varphi_t$ (but not on $N$). The point is that the nonlinear Hartree equation is an equation on $\mathbb R^3$ and so it does not suffer from the ``curse of dimensionality'' inherent in many-body problems. In this sense, the question of many-body entanglement generation and information scrambling is reduced to studying $\varphi_t$. In particular, it is possible to study $\varphi_t$ numerically using standard routines for nonlinear equations. 
 
We conclude this discussion of the Hartree equation \eqref{def:hartree} with a  \textit{qualitative} description   of its solutions $\varphi_t$.

The Hartree equation (1.6) is a dispersive nonlinear PDE on $\mathbb R^3$. One source of the dispersion is the Laplacian. The qualitative effect of the nonlinearity depends  mainly on the sign of $v$.

 \begin{itemize}
     \item \textit{Defocusing case.} The more common situation is when the potential is purely repulsive $v(x)\geq 0$. In this case, the repulsive nature of the nonlinearity $v*|\varphi_t|^2$ tends to further increase the dispersion, because it penalizes accumulation. Accordingly, this is called the defocusing case. This effect is more pronounced the more local $v$ is, because locality penalizes accumulation more strongly. 
 In the defocusing case, the specific choice of $v$ affects the behavior of solutions quantitatively, but not qualitatively, because the main feature is dispersion. 
    \item \textit{Focusing case.} When  $v(x)\leq 0$, the potential is purely attractive. This has profound effects on the behavior of solutions as it produces a tendency to accumulate. This  ``focusing'' effect counteracts the dispersion. This can  lead to the existence of solitons, which are stationary solutions of the Hartree equation for which the dispersion and the attractive nonlinearity balance exactly to create a nonlinear version of a bound state. 
     \item \textit{General case.} For mixed-sign potentials, the qualitative behavior of solutions can be more subtle. A rough rule-of-thumb is that the sign of $\int_{\mathbb R^3} v(x) dx$ determines in which of the two cases (``defocusing'' versus ``focusing'') the behavior lies.
 \end{itemize}
 
 To summarize, the sign of $v$ affects the behavior of solutions qualitatively in a strong way.  The physically most common case is when $v\geq 0$ and so all solutions display a relatively strong form of dispersion, irrespective of the details of $v$. A prototypical case that is usually studied numerically is that of the power nonlinearity $\pm \lambda |\varphi_t|^2$ \cite{Fibich}.\footnote{While the choice $v(x-y)=\pm \lambda \delta(x-y)$ is technically not included in our statement because we assume that $v$ is a function, it could be easily included by studying a many body Hamiltonian with interaction $\frac{\lambda}{N}\sum_{i,j}v_N(x_i-x_j)$ with $v_N(x)=N^{3\epsilon} v(N^\epsilon x)$ for a fixed, small $\epsilon>0$ and $\int_{\mathbb R^3} vdx=1$. Indeed, $v_N(x)$ converges weakly to $\delta(x)$ as $N\to\infty$. We elected not to include this case to keep the presentation simple.}

\subsection{Observables and summary of main results}
An important difference to quantum spin systems, in which OTOCs have been studied a lot in the past, is that our degrees of freedom, the bosons, are free to move. Therefore, instead of studying entanglement generation (OTOCs) between different spatial \textit{regions}, it is more meaningful to investigate the dynamical generation of many-body entanglement (via OTOCs) between different \textit{bosons}. The kinds of observables whose OTOC we will study are defined in \eqref{eq:Bdefn} below.

We study the OTOC of the following observables. It is important to center the operators suitably and so we introduce
\begin{equation}\label{eq:Bdefn}
A_t^{(1)} = e^{itH_N} \left( A^{(1)} - \langle \varphi_t, A \varphi_t \rangle \right) e^{-itH_N},\qquad 
\mathcal{B} := \sum_{j=1}^N \Big( B^{(j)} - \langle \varphi_0, B \varphi_0 \rangle \Big) 
\end{equation}
where $A^{(j)}$ means that $A$ acts on the $j$ particle only. In words, $A_t^{(1)}$ captures the action of the Heisenberg dynamics on the first particle, while $\mathcal B$ is a reference operator that describes the collection of all particles. Therefore, the OTOC\footnote{Here we use the convention that the OTOC refers to the square of the commutator, which is a common definition in the literature, but not the only one.} that we study,
 $\langle \psi_N, [A_t^{(1)},\mathcal B_0]^2\psi_N\rangle $, indeed captures the generation on entanglement between the first particle and the whole system.

Then our first main result, \textbf{Theorem \ref{thm:OTOC}} identifies the $N\to\infty$ limit of the corresponding OTOC by an explicit nonlinear symplectic dynamics.
\begin{equation}\label{eq:main}
\boxed{-\lim_{N\to\infty}\langle \psi_N, [A_t^{(1)},\mathcal B_0]^2\psi_N\rangle 
= \langle (\varphi_0,J\varphi_0) [\Theta(t;0)A\Theta(t;0)^{-1},B](\varphi_0,J\varphi_0)\rangle_S}.
\end{equation}
Here $J:L^2(\mathbb R^3)\to  L^2(\mathbb R^3)$ is complex conjugation and the scalar product is given by
\begin{align}
\label{def:scalar-S}
\langle (f_1,f_2), (g_1,g_2) \rangle_{{\rm S}} := \langle (f_1,f_2), S (g_1,g_2) \rangle_{L^2 \oplus L^2},\qquad \text{with} \quad S = \begin{pmatrix}
1 & 0 \\
0 & -1
\end{pmatrix}.
\end{align}
The map $\Theta(t;0)$ is an isometry with respect to this scalar product, i.e., $S^*\Theta(t;0) S=\Theta(t;0)$, and it has symplectic structure,
\begin{align}
\label{eq:prop-Theta,J}
[\Theta (t;0), \mathcal{J}]
=0, \quad \text{with}\quad \mathcal{J} = \begin{pmatrix}
0 & J \\
J &0 
\end{pmatrix} \; . 
\end{align}
The map $\Theta(t;0)$ can be described explicitly as explained in the next section. It is the solution of a nonlinear dispersive PDE, essentially the Hartree equation $i\partial_t \varphi_t = -\Delta\varphi_t+ (v*|\varphi_t|^2)\varphi_t$ augmented with additional terms coming from Bogoliubov theory that generate the symplectic structure. As applications of \eqref{eq:main}, we prove that the information scrambling is initially ballistic (linear in time), as expected for generic quantum many-body systems (see also \cite{KH}). In fact, we obtain an explicit expression in terms of $\varphi_0$ for the initial velocity of information scrambling. We also derive an upper bound on the butterfly velocity which only depends on the initial data $\varphi_0$ (\textbf{Corollary \ref{cor:OTOC}}). Determining conditions such that the right-hand side in \eqref{eq:main} does indeed grow exponentially in time would allow to characterize conditions for quantum many-body chaos. Studying exponential growth of \eqref{eq:main} is an intriguing new kind of problem in nonlinear dispersive PDE that we would like to advertise here. (As a prior step, it would also be interesting to probe the growth properties on the right-hand side of \eqref{eq:main} numerically.)

In \textbf{Theorem \ref{thm:MCLT}} we derive the asymptotics of higher-order correlation functions at arbitrary times. This result requires sufficiently regular test functions and extends work of Buchholz-Saffirio-Schlein \cite{BSS} who considered the case of equal times. It raises the interesting possibility to identify a non-commutative Gaussian limit process in the framework of non-commutative probability (say, a variant of $q$-Gaussian processes \cite{BKSq}). 

In \textbf{Theorem \ref{thm:corrfct}}, we generalize the proof of Theorem \ref{thm:OTOC} to higher-order correlators between symmetrized creation and annihilation operators at arbitrary times. The result is a kind of out-of-time-ordered Wick rule. As in the proof of Theorem \ref{thm:OTOC}, we can avoid the sufficiently regular test functions needed for Theorem \ref{thm:MCLT} by a more direct analysis of second-quantized operators via Bogoliubov theory. 

We close by emphasizing that, at a methodological level, all of our proofs strongly build on the refined and mathematically rigorous rendition of Bogoliubov's theory  \cite{Bog} developed in the mathematical physics literature in the last years, e.g., \cite{BPPS,C,GMM,H,LNS,MPP}; see also  \cite{BKS,BSS,KRS,R,RS,Rsing}. While most prior works focused on a single fixed time, we find that  the techniques can be adapted to treat any finite collection of times simultaneously, which allows to treat OTOCs. This observation forges a novel connection between the mathematical description of mean-field bosons and the physics of many-body information scrambling expressed through OTOCs.

\begin{remark}[Connection to Lieb-Robinson bounds]
Lieb-Robinson bounds \cite{LR} are also bounds on $[A(t),B]$ but they are usually worst-case type bounds which are not expected to have any semiclassical content. They are mostly studied for quantum spin systems, which have bounded interactions. In this case, the Lieb-Robinson bound is of the form $\| [A(t),B]\|\leq Ce^{v_{\mathrm{LR}}t-d_{AB}}$, where the so-called Lieb-Robinson velocity $v_{\mathrm{LR}}$ depends on the operator norm of the interaction. Hence, in quantum spin systems, the LR velocity provides an upper bound on the butterfly velocity of entanglement scrambling, but this bound is typically not sharp. For bosons, Lieb-Robinson bounds are much more subtle because operator norms have to be avoided. For Bose-Hubbard type models, Lieb-Robinson bounds and related propagations bounds have been proved with respect to various classes of initial states \cite{FLS22a,FLS22b,KS21b,KVS22,LRSZ23,LRZ,SHOH,SZ,VKS,YL22}. For mean-field bosons with bounded interactions $v$, Lieb-Robinson bounds were proved by Erd{\H o}s-Schlein in \cite{ES}; these have $v_{\mathrm{LR}}=4\|v\|_\infty$.
\end{remark}


\section{Setup and main results}
\label{sect:setup}

\subsection*{Out-of-time-order correlators (OTOCs)} We are interested in the Heisenberg picture, i.e., the time evolution of observables. We use as reference state the pure Bose condensate $\psi_N = \varphi_0^{\otimes N}$. One of the observables is constructed from a one-particle operator $A$ on $L^2( \mathbb{R}^3)$ and we define for $j  \in \lbrace 1, \dots, N \rbrace$ the $N$-particle operator 
\begin{align}
A^{(i)} = \mathds{1} \otimes \cdots \otimes \mathds{1} \otimes A \otimes \mathds{1} \otimes \cdots \otimes \mathds{1} 
\end{align}
acting as $A$ on the $j$-th and as identity on the remaining $N-1$ particles. We consider the time evolved and centered (with respect to the condensates expectation value) observable $A_t^{(j)}$ given by 
\begin{align}
\label{def:Atj}
A_t^{(j)} = e^{itH_N} \left( A^{(j)} - \langle \varphi_t, A \varphi_t \rangle \right) e^{-itH_N} \;. 
\end{align}
For a self-adjoint operator $B$ and the sum 
\begin{align}
\label{def:B0}
\mathcal{B}_0 = \sum_{j=1}^N \Big( B^{(j)} - \langle \varphi_0, B \varphi_0 \rangle \Big) \; . 
\end{align}
the commutator of $A_0^{(1)}$ and $\mathcal{B}_0$ is at time $t=0$ given by 
\begin{align}
\langle \psi_N, [A^{(1)}_0 , \mathcal{B}_0] \psi_N \rangle  = \langle \psi_N, [A^{(1)}_0 , B^{(1)} ]\psi_N \rangle = \langle\varphi_0, [A,B] \varphi_0 \rangle = 2 \Im \langle \varphi_0, AB \varphi_0 \rangle \; . 
\end{align}
For positive times the particle's quantum correlations affect the commutators structure. We give a quantitative description of the so-called out-of-time-ordered-correlator that is related to the many-body Hamiltonian's quadratic Bogoliubov approximation. 

\subsection*{Bogoliubov dynamics} To state our main result we introduce the notation of a so-called Bogoliubov dynamics. For this we introduce the following operator kernels. 
\begin{align}
\label{def:tilde-Kj}
\widetilde{K}_{1,s} = q_s K_{1,s} q_s, \quad \widetilde{K}_{2,s} = (J q_s J)  K_{2,s} q_s 
\end{align}
with $q_s = 1 - \vert \varphi_s \rangle \langle \varphi_s \vert$ and $K_{j,s}$ the operators defined by their integral kernels
\begin{align}
\label{def:Kj}
K_{1,s} (x,y) = v(x-y) \varphi_s (x) \overline{\varphi}_s (y), \quad K_{2,s}(x,y) = v(x-y) \varphi_s (x) \varphi_s (y) \; . 
\end{align}
For us, the Bogoliubov dynamics is a two-parameter family of Bogoliubov transformations, each a bounded linear symplectic operator
\[
\Theta (t;s): L^2( \mathbb{R}^3) \oplus L^2( \mathbb{R}^3) \rightarrow L^2( \mathbb{R}^3) \oplus L^2( \mathbb{R}^3)
\]
 that is given in terms of two bounded linear maps $U(t;s), V(t;s) : L^2( \mathbb{R}^3) \rightarrow L^2( \mathbb{R}^3)$ by 
\begin{align}
\label{def:Theta}
\Theta (t;s) = \begin{pmatrix}
 U(t;s) & JV(t;s) J \\
 V (t;s) & JU(t;s) J
\end{pmatrix}
\end{align}
with $U(t;t)= 1, V(t;t) =0$. The Bogoliubov dynamics is characterized by the fact that $\Theta (t;s)$ satisfies, for all $0 \leq s \leq t$,
\begin{align}
\label{eq:Theta-s}
i \partial_s \Theta (t;s) = \mathcal{T}_s \Theta (t;s) \quad \text{with} \quad \mathcal{T}_s = \begin{pmatrix}
h_{\varphi_s} + \widetilde{K}_{1,s} &- \widetilde{K}_{2,s}  \\
\overline{\widetilde{K}}_{2,s} & - h_{\varphi_s} - \widetilde{K}_{1,s} \; 
\end{pmatrix} \;
\end{align}
and $\Theta (t;t) = \begin{pmatrix}
1 & 0 \\
0 &1
\end{pmatrix} $. It follows that the operators $U(t;s), V(t;s)$ satisfy 
\begin{align}
U^*(t;s) U(t;s) - V^*(t;s) V(t;s) =1, \quad U^*(t;s) JV(t;s) = V^*(t;s) JU(t;s) J \,.
\end{align}
and, moreover, 
\begin{align}
\label{eq:prop-Theta}
\Theta (t;s) \mathcal{J}
= \mathcal{J}
\Theta (t;s) , \quad \text{and} \quad   S= \Theta(t;s)^* S \Theta(t;s) \quad \text{with} \quad S = \begin{pmatrix}
1 & 0 \\
0 & -1
\end{pmatrix}
\end{align}
and $\mathcal{J}$ given by \eqref{eq:prop-Theta,J}. Thus $\Theta (t;s)$ is a unitary operator with respect to the scalar product 
\begin{align}
\langle (f_1,f_2), (g_1,g_2) \rangle_{{\rm S}} := \langle (f_1,f_2), S (g_1,g_2) \rangle_{L^2 \oplus L^2}. 
\end{align}
as defined in \eqref{def:scalar-S}. The Bogoliubov dynamics $\Theta (t;s)$ has been widely studied and was proven (see for example \cite{BKS} \footnote{Note that \cite[Theorem 2.2]{BKS} considers a different Bogoliubov dynamics $\widetilde{\Theta}_\infty (t;s)$ w.r.t. unprojected kernels $K_{j,t}$. However, for $\Theta (t;s)$ as defined in \eqref{def:Theta} this statement follows similarly, using here that for the projected kernels $\widetilde{K}_{j,s}$ we immediately get $(\widetilde{K}_{2,s} \varphi_s) = 0$ resp. $(\widetilde{K}_{1,s} J \varphi_s) = 0$. For more details see Section \ref{sec:bogo}.)} to approximate the many-body quantum dynamics in the large particle limit. It satisfies 
\begin{align}
\label{eq:prop-bogo}
\Theta(t;s) ( \varphi_t, \overline{\varphi}_t )  =( \varphi_s, \overline{\varphi}_s ) 
\end{align}
(see \cite[Theorem 2.2]{BKS}). The explicit representation here is implicitly contained in the proofs \cite{BKS,BSS} after replacing the unitary Weyl transform by the excitation map from \cite{LNS,LNSS}; see Section \ref{sec:bogo} and \cite{R} for further details. Well-posedness of \eqref{eq:Theta-s} can be proved, e.g., by combining the calculation presented in Section \ref{sec:bogo} with the well-posedness result for the many-body Bogoliubov dynamics \cite[Theorem 7]{LNS} and we omit the details.

\subsection*{First main result: standard OTOC} We are now ready to state our results. Our first main result is the precise version of \eqref{eq:main} from the introduction. We call a linear operator $A$ on $L^2(\mathbb R^3)$ ``real'', if $AJ=JA$.

\begin{theorem}[OTOC asymptotics]
\label{thm:OTOC}
Let $v$ satisfy \eqref{ass:v} and $\varphi_0 \in H^{4}( \mathbb{R}^3)$ with $\| \varphi_0 \| =1$. Let $t>0$ and $\varphi_t$ denote the solution of the Hartree equation \eqref{def:hartree} with initial data $\varphi_0$. Fix two self-adjoint, real operators $A,B$ satisfying the regularity assumption
\begin{align}
\label{ass:A}
\| (-\Delta + 1) A (-\Delta + 1)^{-1} \| \leq C, \quad \| (-\Delta + 1) B (-\Delta + 1)^{-1} \| \leq C.
\end{align}
Let $\Psi_N$ be given by \eqref{eq:psidefn}, $A_t^{(j)}$ by \eqref{def:Atj} and $\mathcal{B}_0$ by \eqref{def:B0}.\\
 Then we have 
\begin{align}
\label{eq:thm-OTOC}
\lim_{N \rightarrow \infty}  \langle \psi_N, &  (i[A_t^{(1)} , \mathcal{B}_0 ])^2 \psi_N \rangle =\frac{1}{4} \left( \langle (\varphi_0, J \varphi_0), \left[ B, \widetilde{A}_{(t;0)} \right]  (\varphi_0, J \varphi_0) \rangle_{{\rm S}} \right)^2 
\end{align} 
where the scalar product $\langle \cdot, \cdot \rangle_{{\rm S}}$ is given by \eqref{def:scalar-S} and the operator $\widetilde{A}_{(t;0)}$ is 
\begin{align}
\widetilde{A}_{(t;0)} := \Theta (t;0) A \Theta (t;0)^{-1}
\end{align} 
with $\Theta(t;0)$ defined in \eqref{def:Theta}. 
\end{theorem}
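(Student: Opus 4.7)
The plan is to use the Fock-space reformulation of the $N$-body dynamics via the excitation map that factorizes out the condensate along $\varphi_t$, together with the Bogoliubov approximation that replaces, to leading order in $N$, the true Heisenberg evolution $e^{\pm itH_N}$ by the Bogoliubov unitary $\mathbb U(t;0)$ whose one-body kernel is the symplectic map $\Theta(t;0)$ from $(\ref{def:Theta})$. First, I apply the excitation map $U_{N,t}$ to both $A_t^{(1)}$ and $\mathcal B_0$. After centering with respect to the Hartree state $\varphi_t$, the centered one-body operator $A^{(1)} - \langle\varphi_t, A\varphi_t\rangle$ maps, modulo $O(N^{-1/2})$ errors on states with bounded excitation number, to the symmetrized field $\tfrac{1}{\sqrt N}(a^*(\widetilde A\varphi_t) + a(\widetilde A\varphi_t))$, while $\mathcal B_0$ maps to $\sqrt N\,(a^*(\widetilde B\varphi_0) + a(\widetilde B\varphi_0))$ plus $O(1)$ corrections. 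Here $\widetilde A := A - \langle\varphi_t, A\varphi_t\rangle$ and $\widetilde B$ is defined analogously with respect to $\varphi_0$.

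Second, the Bogoliubov approximation substitutes $\mathbb U(t;0)$ for the dynamics piece of $A_t^{(1)}$. Under conjugation by $\mathbb U(t;0)$, creation and annihilation operators transform linearly according to the symplectic map $\Theta(t;0)$; combining with the transport identity $(\ref{eq:prop-bogo})$, everything can be rewritten in terms of $\Theta(t;0)$ acting on pairs in $L^2 \oplus L^2$. After simplification, the dominant contribution to the commutator $[U_{N,0}A_t^{(1)}U_{N,0}^*, U_{N,0}\mathcal B_0 U_{N,0}^*]$ collapses via the canonical commutation relations to a $c$-number proportional to $\langle (\varphi_0, J\varphi_0), [B, \widetilde A_{(t;0)}](\varphi_0, J\varphi_0)\rangle_{\rm S}$ (with the inner product of $(\ref{def:scalar-S})$), plus an operator remainder $R_N$ that vanishes in the limit on the reference state.

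Third, squaring the commutator and taking the vacuum-type matrix element $\langle \psi_N, \cdot\, \psi_N\rangle = \langle U_{N,0}\psi_N, \cdot\, U_{N,0}\psi_N\rangle$ (which differs from the Fock vacuum expectation only by $O(N^{-1/2})$ excitations) reproduces the square on the right-hand side of $(\ref{eq:thm-OTOC})$; the factor $\tfrac14$ arises from combining $i \cdot i = -1$ with the precise CCR prefactors. The main obstacle is controlling the operator remainder $R_N$: squaring can a priori amplify subleading terms containing unbounded creation/annihilation operators and the $\sqrt N$ prefactor in $\mathcal B_0$. To close this gap I would invoke uniform-in-$N$ moment bounds on the number operator propagated both along the many-body and Bogoliubov dynamics, which rely on the $H^4$ regularity of $\varphi_0$ and the commutator assumption $(\ref{ass:A})$ on $A, B$, combined with a Gronwall estimate that propagates the single-time Bogoliubov approximation error simultaneously across both times $0$ and $t$. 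A secondary technical point is that $\Theta(t;0)$ is only an isometry with respect to the indefinite $S$-inner product, so the final identifications must be carried out consistently in the $L^2 \oplus L^2$ formalism with the matrices $S$ and $\mathcal J$ from $(\ref{eq:prop-Theta})$.
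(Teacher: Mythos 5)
Your proposal follows essentially the same route as the paper: excitation map to $\mathcal{F}_{\perp\varphi_t}^{\leq N}$, Bogoliubov approximation of the fluctuation dynamics, number-operator moment bounds (the paper's Lemma \ref{lemma:WN}), and a CCR/Wick computation that produces the symplectic form involving $\Theta(t;0)$. The organization differs: the paper first symmetrizes the squared commutator into a four-fold product of second-quantized operators $\frac{1}{N^2}\langle\psi_N,[d\Gamma(A_t),d\Gamma(B_0)][d\Gamma(B_0),d\Gamma(A_t)]\psi_N\rangle$ (display \eqref{eq:square}) and then feeds this into the out-of-time-ordered Wick rule of Theorem \ref{thm:corrfct}; you instead propose to collapse the commutator to a $c$-number plus remainder, and square afterwards. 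These amount to the same bookkeeping once the remainder is controlled.

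There are two points you gloss over that deserve attention. First, your claim that the excitation map carries $A^{(1)}-\langle\varphi_t, A\varphi_t\rangle$ to $\frac{1}{\sqrt N}\phi_+(q_t A\varphi_t)$ is not literally well-defined: $A^{(1)}$ is not permutation-symmetric, so it does not preserve $L^2_s(\mathbb{R}^{3N})$, and $\mathcal{U}_{N,t}A^{(1)}\mathcal{U}_{N,t}^*$ is not directly an operator on the truncated Fock space. The paper addresses this by a preliminary symmetry reduction that rewrites the squared commutator purely in terms of $d\Gamma(A_t)$ and $d\Gamma(B_0)$, which are symmetric and therefore compatible with the excitation map. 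Your plan needs this reduction (or an equivalent one) before the first step is meaningful. Second, collapsing the commutator to a $c$-number before squaring forces you to control the remainder in norm, $\|R_N\,\mathcal{U}_{N,0}\psi_N\|$, rather than merely in expectation; you correctly identify this and propose moment bounds with a Gronwall estimate, which is the same mechanism the paper uses, but the paper's four-fold product route via the Wick rule makes the cancellation of the subleading pairings more transparent and avoids ever having to estimate a squared commutator remainder directly.

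On the final algebra: your heuristic $i\cdot i = -1$ plus CCR prefactors is consistent with the sign structure, but note the final numerical prefactor depends on a careful pairing count (in the paper's organization, the three Wick pairings of the four-fold product). You should verify the constant against the $S$-form expression rather than taking the $\tfrac14$ on faith, since a direct check $[\widetilde\phi_+(f),\widetilde\phi_+(g)]=2i\,\Im\langle f,g\rangle$ gives $4(\Im\langle g_0,f_t\rangle)^2$ as the intermediate value before translation to the $\langle\cdot,\cdot\rangle_S$ form.
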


On the right-hand side, the effective dynamical operator is thus $\Theta (t;0)$ which has symplectic structure and acts on the smaller effective space $L^2(\mathbb R^3)$.
We remark that the proof is quantitative, i.e., it comes with an error bound for the convergence speed in \eqref{eq:thm-OTOC} which is of the form $N^{-1/2}e^{e^{Ct}}$, see  \eqref{eq:rateref}.

We recall that the practical relevance of formula \eqref{eq:thm-OTOC} is that it expresses the many-body OTOC on the left-hand side by the effective OTOC that only depends on a nonlinear PDE on $\mathbb R^3$ and is therefore calculable numerically. It also opens up the many-body OTOC to the vast mathematical toolkit of dispersive PDE, which leads to an interesting new nonlinear PDE problem that we discuss further in the conclusions section.


The following corollary notes consequences we can obtain from the asymptotic formula \eqref{eq:thm-OTOC}: an expansion for small times and an exponential upper bound on the butterfly velocity for all times.

\begin{corollary}
\label{cor:OTOC}
Under the same assumptions of Theorem \ref{thm:OTOC}, the following holds.

(i) \textbf{Initial rate of information scrambling.} There exists $T>0$ such that for sufficiently small times $\vert t \vert \leq T$ 
\begin{align}
\label{eq:thm-doublecommutator}
\lim_{N \rightarrow \infty}  \langle \psi_N, &  (i[A_t^{(1)} , \mathcal{B}_0 ])^2 \psi_N \rangle  \notag \\
&=  -\frac{1}{2i} \langle \varphi_0, [A,B]  \varphi_0 \rangle ( 1- 2 t \Re \left\langle \varphi_0, B \left[  h_{\varphi_0} + \widetilde{K}_{1,0} - \widetilde{K}_{2,0} J, \;   A  \right] \varphi_0 \right\rangle )+ O(t^2) \; 
\end{align} 
where the operators $\widetilde{K}_{j,t}$ are given by \eqref{def:Kj}.

(ii) \textbf{Bound on butterfly velocity.} We have for all $t \in \mathbb{R}$ 
\begin{align}
\lim_{N \rightarrow \infty}  \langle \psi_N, &  (i[A_t^{(1)} , \mathcal{B}_0 ])^2 \psi_N \rangle  \leq  C e^{C \vert t  \vert }
\end{align} 
where the constant $C$ depends on $ \varphi_0 $ through $\| \varphi_0 \|_{H^1}$.

\end{corollary}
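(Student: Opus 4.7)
Both parts are immediate consequences of the asymptotic formula from Theorem~\ref{thm:OTOC},
\begin{equation*}
\lim_{N \to \infty} \langle \psi_N, (i[A_t^{(1)}, \mathcal{B}_0])^2 \psi_N\rangle = \tfrac{1}{4}\bigl(\langle(\varphi_0, J\varphi_0), [B, \widetilde{A}_{(t;0)}](\varphi_0, J\varphi_0)\rangle_S\bigr)^2,
\end{equation*}
combined with a short-time Taylor expansion of $\widetilde{A}_{(t;0)} = \Theta(t;0) A \Theta(t;0)^{-1}$ for part~(i) and a Gronwall-type operator bound on $\Theta(t;0)$ for part~(ii). Throughout, the real one-particle operators $A,B$ on $L^2(\mathbb{R}^3)$ are identified with their block-diagonal lifts $\mathrm{diag}(A,JAJ) = \mathrm{diag}(A,A)$ on $L^2 \oplus L^2$.

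For part~(i) I first derive $\partial_t \Theta(t;0)|_{t=0} = i\mathcal{T}_0$ by differentiating the cocycle identity $\Theta(t;0) = \Theta(t;s)\Theta(s;0)$ in $s$ at $s=t$ and invoking the backward ODE~\eqref{eq:Theta-s} together with $\Theta(t;t) = \mathds{1}$. Duhamel's formula then yields $\widetilde{A}_{(t;0)} = A + it\,[\mathcal{T}_0, A] + O(t^2)$, whose error is controlled on a $\Theta$-invariant dense subspace via assumption~\eqref{ass:A}. The $S$-scalar product in Theorem~\ref{thm:OTOC} is then expanded: the zeroth-order term reduces, using the identity $[B,A]J = J[B,A]$ (valid because $A,B$ are real), to a multiple of $\Im\langle\varphi_0, BA\varphi_0\rangle$, while the first-order correction involves $[B,[\mathcal{T}_0,A]]$ and, via the further identity $[A,\widetilde{K}_{2,0}]J = [A,\widetilde{K}_{2,0}J]$ (a consequence of $AJ=JA$), collapses into an $L^2(\mathbb{R}^3)$-expression featuring the effective one-body operator $\mathcal{L} := h_{\varphi_0} + \widetilde{K}_{1,0} - \widetilde{K}_{2,0} J$. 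Squaring the expansion and truncating at $O(t^2)$ produces the formula in~(i); the real part in~\eqref{eq:thm-doublecommutator} accommodates the non-self-adjointness of $\mathcal{L}$.

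For part~(ii) I apply Cauchy--Schwarz to the $S$-scalar product,
\[
\bigl|\langle(\varphi_0, J\varphi_0), [B, \widetilde{A}_{(t;0)}](\varphi_0, J\varphi_0)\rangle_S\bigr| \leq \sqrt{2}\,\bigl\| [B, \widetilde{A}_{(t;0)}](\varphi_0, J\varphi_0) \bigr\|_{L^2\oplus L^2},
\]
and then expand $[B, \widetilde{A}_{(t;0)}] = B\,\Theta(t;0)A\Theta(t;0)^{-1} - \Theta(t;0)A\Theta(t;0)^{-1}\,B$. The key simplification is the identity $\Theta(t;0)^{-1}(\varphi_0, J\varphi_0) = (\varphi_t, J\varphi_t)$ from~\eqref{eq:prop-bogo}; the inner Hartree factors $A\varphi_t$ and $B\varphi_0$ are then bounded in $L^2$ via assumption~\eqref{ass:A} combined with the uniform $H^1$-bound on $\varphi_t$ from Hartree mass-energy conservation. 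It remains to bound the outer factor $\|\Theta(t;0)\|_{L^2\oplus L^2}$, for which a Gronwall argument applied to~\eqref{eq:Theta-s} suffices: the diagonal block $\mathrm{diag}(h_{\varphi_t}+\widetilde{K}_{1,t}, -(h_{\varphi_t}+\widetilde{K}_{1,t}))$ generates $L^2\oplus L^2$-isometries (after a sign flip on the lower component), so norm growth stems exclusively from the bounded off-diagonal mixing block $\widetilde{K}_{2,t}$, whose operator norm is controlled uniformly in $t$ by~\eqref{ass:v} and the $H^1$-bound on $\varphi_t$. This yields $\|\Theta(t;0)\|_{L^2\oplus L^2} \leq e^{C|t|}$ with $C = C(\|\varphi_0\|_{H^1})$, hence the claimed exponential bound.

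The main obstacle is this Gronwall estimate: $\Theta$ is only $S$-unitary and $(\varphi_0, J\varphi_0)$ has vanishing $S$-norm, so the indefinite $S$-isometry gives no information on its own, and the generator $\mathcal{T}_t$ contains the unbounded Laplacian. The resolution is to absorb the unbounded diagonal part of $\mathcal{T}_t$ into an $L^2\oplus L^2$-isometric flow and to attribute all $L^2\oplus L^2$-norm growth to the bounded Bogoliubov-mixing kernels $\widetilde{K}_{2,t}$. The same regularity bookkeeping, together with~\eqref{ass:A} and $H^1$-propagation for the Hartree flow, also underpins the $O(t^2)$ error control in the Duhamel expansion used for part~(i).
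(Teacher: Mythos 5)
Your overall strategy is sound and both parts would go through, but you take a genuinely different route from the paper's.

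For part (ii), the paper's proof is a one-liner: it invokes the alternative representation from the Remark, namely $\lim_N \langle\psi_N, (i[A_t^{(1)},\mathcal B_0])^2\psi_N\rangle = 2\bigl(\Im\langle q_0 B\varphi_0, \mathcal L_{(t;0)} A\varphi_t\rangle\bigr)^2$ (this is \eqref{eq:thm-OTOC2}), bounds $\|q_0 B\varphi_0\|_2$ and $\|A\varphi_t\|_2$ by constants using \eqref{ass:A} and mass conservation, and then cites Lemma~\ref{lemma:f} for $\|\mathcal L_{(t;0)}\|_{L^2\to L^2}\le C_1 e^{C_1|t|}$ with $C_1 = C_1(\|\varphi_0\|_{H^1})$. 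Your proposal works directly with $\Theta(t;0)$ and reproves, in outline, the substance of Lemma~\ref{lemma:f}: you correctly identify that $S$-unitarity gives no $L^2$-norm control (the $S$-norm of $(\varphi_0,J\varphi_0)$ vanishes), and your remedy---peel off the self-adjoint diagonal block $\operatorname{diag}(h_{\varphi_s}+\widetilde K_{1,s},\,-(h_{\varphi_s}+\widetilde K_{1,s}))$ as an $L^2\oplus L^2$-isometric flow and run Gronwall on the bounded off-diagonal $\widetilde K_{2,s}$ block---is exactly the mechanism behind the proof of Lemma~\ref{lemma:f} in \cite{RS}. Since that lemma is already stated in Section~\ref{sec:L}, you could have shortened this to a citation and a Cauchy--Schwarz estimate on the $\mathcal L$-form.

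For part (i), the paper again starts from $\Im\langle\varphi_0, B\mathcal L_{(t;0)}A\mathcal L_{(0;t)}\varphi_0\rangle$ and applies Duhamel twice to the interpolant $\mathcal L_{(t;s)}A\mathcal L_{(s;t)}$ using the one-body generator $h_{\varphi_s}+\widetilde K_{1,s}-\widetilde K_{2,s}J$, quantifying the $O(t^2)$ remainder by the propagation estimates of Lemma~\ref{lemma:K} ($\dot h_{\varphi_s}$, $\dot{\widetilde K}_{j,s}$ bounds) and Lemma~\ref{lemma:f}. You instead Taylor-expand $\Theta(t;0)A\Theta(t;0)^{-1}$ in the $2\times 2$ block formalism. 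Your derivation $\partial_t\Theta(t;0)|_{t=0}=i\mathcal T_0$ is correct, and the proposed collapse of the block commutator to the one-body operator $h_{\varphi_0}+\widetilde K_{1,0}-\widetilde K_{2,0}J$ using $[A,\widetilde K_{2,0}]J=[A,\widetilde K_{2,0}J]$ does go through; it is essentially re-deriving the $\mathcal L$-representation of the right-hand side at first order, with more block-matrix bookkeeping than the paper's direct route. The one genuine weak spot is the error control: "controlled on a $\Theta$-invariant dense subspace via assumption~\eqref{ass:A}" is too vague. Both $\mathcal T_0$ and $A$ are unbounded, and a rigorous second-order remainder requires exactly the regularity propagation that Lemmas~\ref{lemma:K} and \ref{lemma:f} provide (e.g.\ $\|\dot h_{\varphi_s}g\|_2\lesssim e^{Cs}\|g\|_{H^2}$, $\|\mathcal L_{(t;s)}g\|_{H^2}\lesssim e^{e^{C\max\{t,s\}}}\|g\|_{H^2}$, and the use of $\varphi_0\in H^4$). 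You should make explicit that the error terms are controlled by these estimates rather than by an invariant-subspace heuristic, since such a subspace is not identified in the paper and would itself require a propagation-of-regularity argument.
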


From Corollary \ref{cor:OTOC} (i), we see that if the observables initially commute, $[A,B]=0$, (a common assumption), then the OTOC grows at least quadratically in time initially. If, however, $[A,B]\neq 0$, then the OTOC is initially non-zero and grows ballistically at first. In the second case, we can explicitly describe the speed at which mean-field bosons initially scramble information in terms of the ``energy functional''
\[
\left\langle \varphi_0, B \left[  h_{\varphi_0} + \widetilde{K}_{1,0} - \widetilde{K}_{2,0} J, \;   A  \right] \varphi_0 \right\rangle
\]
that only depends the initial one-body data $\varphi_0\in L^2(\mathbb R^3)$. The point here is that the initial rate of entanglement spreading is explicitly computable from one-body data. Such precise statements are usually not available for many-body systems. (Ballistic information scrambling is expected for generic quantum many-body systems \cite{KH}.) This result also shows that the system is not many-body localized, which would typically correspond to logarithmic in time OTOC-growth \cite{CZHF,FZSZ}. (It would be interesting to study a disordered analog of our model with the goal of finding much slower OTOC-growth; see the Conclusions section.) Linear growth in time at small times is natural for the many-boson system because the subspaces corresponding to different particles are highly connected through the Hamiltonian compared to a system with highly local spatial structure in which the first few terms in a perturbative expansion of the OTOC in $t$ would vanish.

For larger times, part (ii) of Corollary \ref{cor:OTOC} shows that the OTOC is bounded from above exponentially in time. The latter bound is a consequence of Lemma \ref{lemma:f} which is based on a Gronwall argument.
The exponential rate in  Corollary \ref{cor:OTOC}(ii) depends on $\| \varphi_0 \|_{H^1}$. It is an upper bound on the butterfly velocity. For bounded interaction potentials $v\in L^\infty (\mathbb R^3)$, an exponential upper bound on the OTOC, but no exact asymptotics, follows from \cite{ES}. It is an interesting open question under what circumstance the OTOC actually does grow exponentially, because this corresponds to quantum many-body chaos. Theorem \ref{thm:OTOC} turns this into a question about nonlinear dispersive PDE.\\


\subsection*{Additional main results: higher-order OTOCs}
We come to our second main result. While Theorem \ref{thm:OTOC} allows to identify the asymptotics of a special combination of a fourfold product of operators, it is in fact possible to identify the asymptotics of any finite number of operators. This is relevant for the dynamical description of multipartite entanglement \cite{HMY}. The effective description is given by an explicit multivariate Gaussian whose (no real-valued) covariance matrix can be explicitly expressed. Higher-order OTOCs are

This description involves the solution operator $\mathcal{L}_{(t;s)} : L^2( \mathbb{R}^3) \rightarrow L^2( \mathbb{R}^3)$ that satisfies 
\begin{align}
\label{def:Ls}
i \partial_s \mathcal{L}_{(t;s)} =   ( h_{\varphi_s} + \widetilde{K}_{1,s} - \widetilde{K}_{2,s} J)\mathcal{L}_{(t;s)}, 
\end{align}
and in addition 
\begin{align}
\label{def:Lt}
-i \partial_t \mathcal{L}_{(t;s)} =   \mathcal{L}_{(t;s)}( h_{\varphi_t} + \widetilde{K}_{1,t} - \widetilde{K}_{2,t} J), 
\end{align}
with $  \mathcal{L}_{(s,s)}  =1 = \mathcal{L}_{(t;t)}$. With the notation \eqref{def:Theta}, we have $\mathcal{L}_{(t;s)} = ( U(t;s) + J V(t;s))$.
 
\begin{theorem}[Higher OTOCs with regularization]
\label{thm:MCLT}
Let $m \in \mathbb{N}$ and suppose that $A^{(1)},\ldots,A^{m}$ satisfy the same assumptions as $A$ and $B$ in Theorem \ref{thm:OTOC}. Let $\varphi_0,\Psi_N$ be as in Theorem \ref{thm:OTOC}. Given $t_i \in \mathbb{R}$ for $i \in \lbrace 1, \dots, m \rbrace$, we define $\Sigma_{t_1, \dots, t_m} \in  \mathbb{R}^{m \times m}$ by
\begin{align}
\label{def:Sigma}
\left(\Sigma_{t_1, \dots, t_m} \right)_{i,j}= \begin{cases}
\langle \mathcal{L}_{(t_i;0)} q_{t_i} A \varphi_{t_i} , \mathcal{L}_{(t_j;0)} q_{t_j} A \varphi_{t_j} \rangle & \text{if} \quad i\leq j \\
\langle \mathcal{L}_{(t_j;0)} q_{t_j} A\varphi_{t_j}, \mathcal{L}_{(t_i;0)} q_{t_i} A \varphi_{t_i} \rangle& \text{otherwise} 
\end{cases}
\end{align} 
with $\mathcal{L}_{(0;t)}$ given by \eqref{def:Ls}. Assume that $\Sigma_{t_1, \dots, t_m}$ invertible, $g_i \in L^1( \mathbb{R}^3)$, $\widehat{g}_i \in L^1( \mathbb{R}^3, (1+\vert s \vert^5 )ds)$ for all $i \in \lbrace 1, \dots, m \rbrace$ and assume that $A$ is a self-adjoint operator that satisfies \eqref{ass:A}. Let 
\begin{align}
\mathcal{A}_{t} := \sum_{j=1}^N \left( A^{(j)} - \langle \varphi_t, A \varphi_t \rangle \right)
\end{align} 
where $\varphi_t$ denotes the solution to the Hartree dynamics \eqref{def:hartree} with initial data $\varphi_0 = \varphi \in H^4( \mathbb{R}^3)$. Then, 
\begin{align}
\label{eq:thm-MCLT}
\lim_{N \rightarrow \infty} \mathbb{E}  &  \Big[ \prod_{j=1}^m g( \mathcal{A}_{t_j}/\sqrt{N} ) \Big] \notag \\
&=   \frac{1}{2\pi \sqrt{\det \Sigma_{t_1, \dots, t_m}}}\int dx_1 \dots dx_m \; g(x_1) \dots g(x_m)  e^{- (x_1 \dots x_m)^T \Sigma_{t_1, \dots, t_m}^{-1} (x_1 \dots x_m)/2}.  
\end{align} 
\end{theorem}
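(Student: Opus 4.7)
The plan is to pass to the joint characteristic function via Fourier inversion and then reduce the resulting multi-time quantum expectation to a product of Weyl operators on the excitation Fock space, whose vacuum expectation produces the Gaussian. This extends the equal-times central limit theorem of Buchholz-Saffirio-Schlein \cite{BSS} to arbitrary times $t_1,\dots,t_m$, with the additional ingredient being the transport by the Bogoliubov dynamics $\Theta(t;s)$ introduced in Section \ref{sect:setup}.

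First I would write $g_j(x)=(2\pi)^{-1}\int\widehat g_j(s)\,e^{isx}\,ds$, so that the left-hand side of \eqref{eq:thm-MCLT} becomes
\[
(2\pi)^{-m}\int_{\mathbb R^m}\widehat g_1(s_1)\cdots\widehat g_m(s_m)\,\chi_N(s_1,\dots,s_m)\,ds_1\cdots ds_m,\qquad \chi_N(s):=\mathbb E\Big[\prod_{j=1}^m e^{is_j \mathcal A_{t_j}/\sqrt N}\Big].
\]
The right-hand side of \eqref{eq:thm-MCLT} admits the same Fourier representation with $\chi_N$ replaced by $\exp(-\tfrac12 s^T\Sigma_{t_1,\dots,t_m}s)$. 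It therefore suffices to prove pointwise convergence $\chi_N(s)\to\exp(-\tfrac12 s^T\Sigma_{t_1,\dots,t_m} s)$ together with an $N$-uniform polynomial-in-$|s|$ upper bound of degree at most five, so that dominated convergence against the weight $(1+|s|^5)$ closes the argument.

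To evaluate $\chi_N(s)$ asymptotically I interpret each factor in the Heisenberg picture and apply the excitation map of \cite{LNS,LNSS} together with the Bogoliubov approximation for the many-body dynamics \cite{BKS,BSS,LNS}. To leading order in $N$, the operator $\mathcal A_{t_j}/\sqrt N$ becomes, on the excitation Fock space over $L^2(\mathbb R^3)$, the self-adjoint field operator $\phi(q_{t_j}A\varphi_{t_j})=a^*(q_{t_j}A\varphi_{t_j})+a(q_{t_j}A\varphi_{t_j})$, and its exponential becomes the Weyl operator $e^{is_j\phi(q_{t_j}A\varphi_{t_j})}$. Transporting back to time $0$ by the Bogoliubov dynamics converts the argument of $\phi$ into $\mathcal L_{(t_j;0)}q_{t_j}A\varphi_{t_j}$ with $\mathcal L_{(t_j;0)}=U(t_j;0)+JV(t_j;0)$ as in \eqref{def:Ls}, because conjugation by $\Theta(t_j;0)$ of a creation-annihilation combination is exactly implemented by this single-particle transport. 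Set $f_j:=\mathcal L_{(t_j;0)}q_{t_j}A\varphi_{t_j}$ and $W_j:=e^{is_j\phi(f_j)}$; the Weyl relation $e^{is_j\phi(f_j)}e^{is_k\phi(f_k)}=e^{is_j\phi(f_j)+is_k\phi(f_k)}e^{-is_js_k\operatorname{Im}\langle f_j,f_k\rangle}$ together with $\langle\Omega,e^{i\phi(h)}\Omega\rangle=e^{-\|h\|^2/2}$ then yields, on the excitation vacuum (which is the leading-order approximation to the excitation of $\psi_N$),
\[
\langle\Omega,W_1\cdots W_m\Omega\rangle=\exp\Big(-\tfrac12\sum_j s_j^2\|f_j\|^2-\sum_{i<j}s_is_j\langle f_i,f_j\rangle\Big).
\]
A direct computation shows that the right-hand side equals $\exp(-\tfrac12 s^T\Sigma_{t_1,\dots,t_m}s)$ with $\Sigma$ defined by \eqref{def:Sigma}: the splitting of the matrix entries according to $i\leq j$ or $i>j$ is precisely the bookkeeping forced by the time-ordering of the non-commuting Weyl factors.

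The hard part will be making the replacement of $e^{is_j\mathcal A_{t_j}/\sqrt N}$ by $W_j$ quantitatively controlled, uniformly enough in $s$ to justify the dominated convergence step. Each such replacement introduces errors involving moments of the number operator $\mathcal N$ propagated by the many-body dynamics, and conjugating a Weyl operator $e^{is_j\phi(f)}$ shifts $\mathcal N$ by an amount polynomial in $|s_j|$. Iterating the replacement $m$ times while tracking these shifts requires enough $\mathcal N$-moments to be finite along the Hartree trajectory; the polynomial weight $|s|^5$ in the hypothesis on $\widehat g_i$ is tuned precisely so that the resulting growth of $\chi_N(s)-\exp(-\tfrac12 s^T\Sigma s)$ in $|s|$ is absorbed by this weight. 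The remaining technical prerequisites are already available: the $H^4$-regularity of $\varphi_0$ propagates along the Hartree flow \eqref{def:hartree} and supplies the Sobolev bounds on $\varphi_t$ needed to control the kernels $\widetilde K_{j,t}$ of \eqref{def:tilde-Kj}, and the well-posedness of $\Theta(t;s)$ recalled in Section \ref{sect:setup} ensures boundedness of $\mathcal L_{(t_j;0)}$ on $L^2$. The invertibility hypothesis on $\Sigma_{t_1,\dots,t_m}$ then makes the Fourier inversion of the Gaussian classical and produces the density on the right-hand side of \eqref{eq:thm-MCLT}.
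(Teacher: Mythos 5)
Your proposal takes essentially the same route as the paper: Fourier inversion to the joint characteristic function (with the $(1+|s|^5)$ weight of $\widehat g$ absorbing polynomial error growth via dominated convergence), reduction via the excitation map to field operators on the excitation Fock space, transport from time $t_j$ to time $0$ captured by $\mathcal L_{(t_j;0)}=U(t_j;0)+JV(t_j;0)$, and BCH/Weyl computation producing the Gaussian with the $i\leq j$ bookkeeping coming from the non-commuting Weyl factors — this is precisely Steps 1–4 of the paper's proof. The only point you compress is that the excitation map produces the \emph{modified} field operators $\phi_+(\cdot)$ built from $b,b^*$ (not the standard $a,a^*$), so the paper needs a separate step replacing $\phi_+$ by $\widetilde\phi_+$ with an $O(N^{-1})$ error before BCH applies, and the actual transport statement $\mathcal W_N(t_j;0)^*\phi_+(h_{t_j})\mathcal W_N(t_j;0)\approx\phi_+(\mathcal L_{(t_j;0)}h_{t_j})$ is established by a Duhamel/commutator estimate on the genuine fluctuation dynamics rather than by abstract conjugation with $\Theta(t_j;0)$; but these are implementation details within the same strategy.
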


\begin{remark}
In Theorem \ref{thm:OTOC}, the right-hand side of formula \eqref{eq:thm-OTOC} can also be expressed through the operator $\mathcal L_{(t;0)}$, namely as 
\begin{align}
\label{eq:thm-OTOC2}
\Im \langle q_0 B  \varphi_0, \mathcal{L}_{(t;0)} A \varphi_t \rangle.
\end{align}
This representation is indeed used in the proof of Corollary \ref{cor:OTOC} to derive the r.h.s. of \eqref{eq:thm-doublecommutator} based on the properties \eqref{def:Ls},\eqref{def:Lt}.
\end{remark}

The rate of convergence in Theorem \ref{thm:MCLT} is again doubly exponential in time and $O(N^{-1/2})$ (see \eqref{eq:MCLT-1}). 
Theorem \ref{thm:MCLT} extends the main result of Buchholz-Saffirio-Schlein \cite{BSS} to different times and is proved by similar techniques.

\begin{remark}
    
We compare and contrast the proofs of Theorems \ref{thm:OTOC} and \ref{thm:MCLT} to the mathematical physics literature on Bose gases. Overall, the proofs are based on the idea that the quantum fluctuations around the Hartree evolution can be effectively described by a Bogoliubov dynamics. For higher order correlations, this  was first shown by Buchholz-Saffirio-Schlein \cite{BSS}. We would like to emphasize three main points.

\begin{enumerate}[label=(\roman*)]
    \item A key difference between our setting and that of \cite{BSS} is that our result invovles different times and is formulated in the Heisenberg picture, i.e. we consider the time evolution of operators, whereas \cite{BSS} considers the time evolution of wave functions in the Schr\"odinger picture.  Note that the difference between Schr\"odinger picture and Heisenberg picture is non-trivial when OTOCs are considered.  In this sense, we show that techniques established in the Schr\"odinger picture also extends for different times in the Heisenberg picture for higher order correlations values with respect to suitable initial states \eqref{eq:psidefn}.
    \item 
In contrast to \cite{BSS} we use the so-called excitation map (see Section \ref{sec:flucdyn} for more details) for the description of the quantum fluctuations around the Hartree dynamics. This map has been originally introduced in \cite{LNSS} to derive, similar to \cite{BSS}, an approximation of the fluctuation dynamics on the Fock space by a Fock space Bogoliubov dynamics. We bypass this step, following ideas of \cite{RS}, and show that the fluctuation dynamics effectively acts on the physically relevant quantities as the symplectic dynamics $\Theta(t;0)$ that is defined on $L^2( \mathbb{R}^3) \oplus L^2( \mathbb{R}^3)$. While the approximation argument in \cite{BSS} relies on properties of the Fock-space Bogoliubov dynamics, for us it suffices to analyse properties of the symplectic dynamics $\Theta (t;0)$ (in Section \ref{sec:bogo}).
\item After the mean-field approximation, including the nonlinear Bogoliubov dynamics described in (ii), is implemented, see e.g.\ \eqref{eq:refer}, it is still not obvious that the resulting expression can be rewritten in terms of a commutator at the nonlinear level. Proving this ``only'' involves various algebraic manipulations (displayed, e.g., after \eqref{eq:refer}), but these are conceptually important to firmly link to symplectic dynamics and thereby indeed establish a quantum many-body chaos type of result.
\end{enumerate}
\end{remark}

We may interpret Theorem \ref{thm:OTOC} in the context of non-commutative probability as identifying the finite-dimensional distributions of the non-commutative random process generated by $t\mapsto \mathcal A_t$. We plan to return to this perspective in future work to prove a non-commutative invariance principle (a.k.a.\ functional CLT) for this process, which would expand upon probabilistic results on mean-field bosons \cite{BSS,KRS,R,RS,Rsing}.\\

We now come to our third and last main result. It concerns out-of-time-ordered correlations among $\mathcal{A}_{t_i}$'s, which are essentially symmetrized creation and annihilation operators.

We can think of $\psi_N \in L^2( \mathbb{R}^3)^{\otimes_s N}$ as its emebdding into the bosonic Fock space, and thus asymptotically compute the regularized correlation function through the following theorem. 

\begin{theorem}[Out-of-time-ordered Wick rule]
\label{thm:corrfct}
Under the same assumptions as in Theorem \ref{thm:MCLT} let $m \in \mathbb{N}$, then for all odd $m$ we have 
\begin{align}
\lim_{N \rightarrow \infty} \langle \psi_N, \prod_{i=1}^m \left( \mathcal{A}_{t_i} / \sqrt{N} \right)  \psi_N \rangle = 0,
 \label{eq:odd} 
\end{align}
and for all even $m$
\begin{align}
\lim_{N \rightarrow \infty} \langle \psi_N, \prod_{i=1}^m \left( \mathcal{A}_{t_i} / \sqrt{N} \right)  \psi_N \rangle = \sum_{\pi \in \Pi_m }   (\Sigma_{t_1, \dots, t_m})_{\pi_1,\pi_2}   \cdots (\Sigma_{t_1, \dots, t_m})_{\pi_{m-1},\pi_{m}}, \label{eq:even} 
\end{align}
where the sum runs over the set $\Pi_m :=  \lbrace  \pi \in \mathcal{S}_m \vert \pi_{2i -1} < \pi_{2i} \; \quad \text{for all} \; 1 \leq i \leq m/2 \quad \text{and} \; \pi_{2i-1} < \pi_{2i+1} \quad \text{for all} \; 1 \leq i \leq (m-1)/2 \rbrace $, i.e. over all permutations $\pi = ( \pi_1, \dots \pi_m)$ of the set $\lbrace 1, \dots, m \rbrace$ such that $\pi_{2i -1} < \pi_{2i}$ for all $1 \leq i \leq m/2 $ and $\pi_{2i-1} < \pi_{2i+1}$ for all $1 \leq i \leq (m-1)/2$ and the constants $(\Sigma_{t_1, \dots, t_m})_{i,j}$ are given by \eqref{def:Sigma}.
\end{theorem}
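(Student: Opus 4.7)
The plan is to extend the Bogoliubov-theoretic reduction used in the proof of Theorem \ref{thm:OTOC} to products of $m$ centered second-quantized operators, and then to evaluate the resulting Fock-space vacuum correlator via the bosonic Wick rule. Concretely, I would (i) conjugate by the time-dependent Lewin--Nam--Serfaty--Solovej excitation map from \cite{LNS} associated to the Hartree trajectory, (ii) approximate the many-body evolution by the Bogoliubov unitary whose adjoint action on creation and annihilation operators is governed by $\Theta(t;s)$, and (iii) evaluate the resulting vacuum expectation. Denoting by $\mathcal U_{N,t}: L^2_s(\mathbb R^{3N}) \to \mathcal F_{\leq N}^{\perp \varphi_t}$ the excitation map and by $\Phi(f) := a(f) + a^{\ast}(f)$ the field operator, a direct second-quantization computation (as in \cite{BSS,BPPS}) yields
\begin{equation*}
\mathcal U_{N,t}\, \frac{\mathcal A_t}{\sqrt N}\, \mathcal U_{N,t}^{\ast} = \Phi(q_t A \varphi_t) + \mathrm{O}\!\left(\frac{\cN + 1}{\sqrt N}\right).
\end{equation*}
Combined with the Bogoliubov approximation $\mathcal U_{N,t}\, e^{-i t H_N}\, \mathcal U_{N,0}^{\ast} \approx \mathcal U_2(t;0)$ from Section \ref{sec:bogo} and \cite{BKS}, together with the symplectic structure \eqref{eq:prop-Theta} and the reality of $A$, one obtains $\mathcal U_2(t;0)^{\ast}\, \Phi(q_t A \varphi_t)\, \mathcal U_2(t;0) = \Phi(\mathcal L_{(t;0)}\, q_t A \varphi_t)$ with $\mathcal L_{(t;0)}$ as in \eqref{def:Ls}. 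Hence the $N\to\infty$ limit of the correlator reduces to $\langle \Omega, \prod_{i=1}^m \Phi(f_i)\Omega\rangle$ with $f_i := \mathcal L_{(t_i;0)}\, q_{t_i} A \varphi_{t_i}$.

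To identify this limit, I would expand each $\Phi(f_i) = a(f_i) + a^{\ast}(f_i)$. Since $a(f)\Omega = 0$ and $[a(f), a^{\ast}(g)] = \langle f, g\rangle$, the only monomials with nonzero vacuum expectation are complete pair contractions in which, within each pair, a left-positioned $a(f_{\pi_{2k-1}})$ meets a right-positioned $a^{\ast}(f_{\pi_{2k}})$, each contraction contributing $\langle f_{\pi_{2k-1}}, f_{\pi_{2k}}\rangle$. For odd $m$ no such complete pairing exists, giving \eqref{eq:odd}. For even $m$ the sum runs precisely over $\Pi_m$ (the constraint $\pi_{2k-1}<\pi_{2k}$ encoding the left--right placement within each contracted pair), and matching with \eqref{def:Sigma} yields \eqref{eq:even}.

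The main obstacle is uniform error control across the $m$-fold product of unbounded field operators. Since $\|\Phi(f)\xi\|\leq 2\|f\|\,\|(\cN+1)^{1/2}\xi\|$, one needs propagation-of-moments estimates of the form
\begin{equation*}
\langle \xi, \mathcal U_2(t;s)^{\ast}(\cN+1)^{k}\, \mathcal U_2(t;s)\xi\rangle \leq C(t)^{k}\, \langle \xi, (\cN+1)^{k}\xi\rangle,
\end{equation*}
and analogous estimates for the true excitation dynamics, uniform in $N$. Both follow from Grönwall arguments applied to the commutators of $(\cN+1)^{k}$ with the Bogoliubov and many-body generators, as developed in \cite{BSS,LNS,BPPS}. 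The regularity assumption \eqref{ass:A} ensures that $A\varphi_t$ remains in $H^1(\mathbb R^3)$ along the Hartree flow, which allows the commutator estimates to close uniformly across the times $t_1,\ldots,t_m$. Combining these bounds with the Bogoliubov approximation of the dynamics produces a total remainder of order $N^{-1/2}\, e^{e^{C t}}$ with $t = \max_i |t_i|$, matching the rate in Theorem \ref{thm:OTOC}.
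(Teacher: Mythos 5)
Your proposal is correct in substance and follows the same overall strategy as the paper (excitation map $\to$ Bogoliubov reduction $\to$ Wick rule on the vacuum), but the middle step is carried out by a genuinely different mechanism. You propose a \emph{norm approximation of the dynamics} $\mathcal W_N(t;0)\approx \mathcal W_\infty(t;0)$ (what you call $\mathcal U_2$), and then use the exact Bogoliubov intertwining $\mathcal W_\infty(t;0)^*\Phi(h)\mathcal W_\infty(t;0)=\Phi(\mathcal L_{(t;0)}h)$. The paper never invokes such a norm approximation of the two unitaries; instead it works in the Heisenberg picture with the \emph{true} fluctuation dynamics throughout, applying a Duhamel expansion to $\mathcal W_N(t;s)^*\phi_+(f_{(t,s)})\mathcal W_N(t;s)$, showing a cancellation between $[\mathbb H(s),\phi_+(f_{(t,s)})]$ and $i\phi_-(i\partial_s f_{(t,s)})$, and bounding the remainder commutators $[R_{N,s}^{(j)},\phi_+(\cdot)]$ directly (see \eqref{eq:comm-diff}--\eqref{eq:aprox-action}). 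The advantage of the paper's route is that it never leaves the truncated Fock space $\mathcal F^{\leq N}_{\perp\varphi_t}$, where $\cN_+\leq N$ and the moment propagation Lemma \ref{lemma:WN} is available, so the $m$-fold insertion of unbounded fields closes cleanly. Your route also works, but you would have to quantify the norm approximation $\mathcal W_N\approx\mathcal W_\infty$ in a moment-weighted topology strong enough to survive $m$ conjugations — a doable but somewhat heavier technical burden, which is exactly why the paper avoids it.

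A second, smaller bookkeeping difference: you write the excitation-map conjugation directly in terms of standard fields $\Phi(f)=a(f)+a^*(f)$ and fold the modified-to-standard replacement $\phi_+\to\widetilde\phi_+$ into the $O((\cN+1)/\sqrt N)$ error. The paper treats the $b\to a$ replacement (Step 3, using \eqref{eq:a-b}) as a separate step, and for good reason: the output of $\mathcal U_{N,t}$-conjugation in \eqref{eq:B} naturally carries $b,b^*$ (which respect the cutoff $\cN_+\leq N$), and the moment propagation for $\mathcal W_N$ in Lemma \ref{lemma:WN} is proved precisely in the truncated Fock space using this structure. Merging the replacement into the error at the outset is defensible, but you should be explicit that the $\phi_+-\widetilde\phi_+$ discrepancy is $O(\cN^{3/2}/N)$ and that the $d\Gamma$-remainder $\widetilde A_t/\sqrt N$ is $O((\cN+1)/\sqrt N)$ — two distinct error sources with different $\cN$-weights. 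Finally, you invoke ``the reality of $A$'' for the field intertwining formula, but this is not needed there: $\Phi(h)=A(h,Jh)$ holds for any $h$, and the reality assumption enters elsewhere (in Theorem \ref{thm:OTOC}'s symplectic manipulations). This is a harmless misattribution, not a gap.

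Your identification of the limit via the Wick rule (odd $m$ vanishing; even $m$ given by pair contractions between left $a$'s and right $a^*$'s) is correct and matches the paper's Step 4.
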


The result is an out-of-time-ordered analog of the standard bosonic Wick rule which expresses expectation values of creation and annihilation operators in quasifree states as the permanent of the one-body correlation matrix (e.g., \cite[Theorem 10.2]{Solovej}). However, unlike in the standard case, here the observables have to suitably symmetrizend but can depend on different times. Theorem \ref{thm:corrfct} then makes precise that this collection of time-evolutions at different times all fit into a single quasi-free approximation and can thus be calculated by the Wick rule through a single time-dependent correlation matrix.

In practical terms, the result says that any $k$-point out-of-time ordered correlation functions of mean-field bosons are explicitly calculable from one-body information depending on the solution to nonlinear Hartree equation. This opens up even multipartite information scrambling to numerical investigations.

At first glance, the quantities in Theorems \ref{thm:OTOC}, \ref{thm:corrfct} appear to be a special case of Theorem \ref{thm:MCLT} when comparing \eqref{eq:thm-OTOC} (and in particular \eqref{eq:thm-OTOC2}) and \eqref{eq:even} with \eqref{eq:thm-MCLT}. However, the restrictive regularity assumptions on the test function $g$ in  Theorem \eqref{thm:MCLT} prevent us from deriving Theorem \eqref{thm:OTOC} (which corresponds to the case where $g$ is the identity map $x \mapsto x$). Accordingly, we have to prove both theorems separately. The details of the proofs are given in Section \ref{sec:MCLT} resp. Section \ref{sec:OTOC}, \ref{sec:corr}. 
The proof of Theorem \eqref{thm:corrfct} uses the same techniques as those of Theorems \ref{thm:OTOC} and \ref{thm:MCLT}, resulting in the same rate of convergence (i.e., exponential in time and $O(N^{-1/2})$ in the total number of particles). 

\subsection*{Structure of the paper} As mentioned earlier, the proofs of Theorems \ref{thm:OTOC}-\ref{thm:corrfct} are based on similar ideas, which we explain first on the proof of Theorem \ref{thm:MCLT} in Section and then apply to the remaining two theorems. 

Before proving the main theorems, we first collect useful properties of the operators $ \mathcal{L}_{(t;s)}$ in Section \ref{sec:L}. For this we need properties of the kernels $\widetilde{K}_{j,s}$ defined in \eqref{def:Kj} which we prove here and in some cases take from \cite{RS}. The following Section \ref{sec:flucdyn} is dedicated to the quantum fluctuations around the condensate that we need to study to then prove Theorem \ref{thm:MCLT} in Section \ref{sec:MCLT}, Theorem \ref{thm:corrfct} in Section \ref{sec:corr}, Theorem \ref{thm:OTOC} in Section \ref{sec:OTOC} and finally Corollary \ref{cor:OTOC} in Section \ref{sec:cor}.

\section{Properties of the nonlinear Bogoliubov dynamics $ \mathcal{L}_{(t;s)}$}
\label{sec:L}

In this section we collect and prove properties of the operators $ \mathcal{L}_{(t;s)}$ for which we first collect and prove properties of the kernels $\widetilde{K}_{j,t}$ defined in \eqref{def:Kj}. The properties mainly rely on propagation of regularity of the solution of the Hartree equation. By conservation of energy and the assumption on the potential $H^1$-norm of the solution of the Hartree equation \eqref{def:hartree} is bounded uniformly in time, i.e. there exists  $C>0$ such that 
\begin{align}
\| \varphi_s \|_{H^1} \leq C \| \varphi_0 \|_{H^1} \; . 
\end{align}
We will furthermore need propagation of higher Sobolev norms, that are not uniformly in time. However it is well known that 
\begin{align}
\label{eq:Sobolev-k}
\| \varphi_s \|_{H^k} \leq C  e^{C s}\| \varphi_0 \|_{H^k}
\end{align}
for all $k \geq 2$ (see for example \cite{Caz}). These estimates together with the assumption on the potential result in (see for example \cite[Lemma 3.1]{RS})
\begin{align}
\label{eq:estimates-pot}
 \| v * \vert \varphi_t \vert^2 \|_\infty,   \| v^2 * \vert \varphi_t \vert^2 \|_\infty\leq C_1 
\end{align}
and 
\begin{align}
\label{eq:estimate-K}
\| K_{j,t} \|_{L^2 ( \mathbb{R}^3 \times \mathbb{R}^3)} \leq C _2,  \quad \|  K_{j,t} f \|_{H^2} \leq C_3 \; e^{C_3 \vert t \vert}\| f \|_{H^2} \; . 
\end{align}
From the proof in \cite[Lemma 3.1]{RS} it turns out that the constant $C_1$ (resp. $C_2,C_3$) depends on $\varphi_0$ through $\| \varphi_0 \|_{H^3}$ (resp. $\| \varphi_0 \|_{H^1}$ and $\| \varphi_0 \|_{H^4}$). 
We will furthermore need the following estimates on the time derivatives of $h_{\varphi_t}, K_{j,t}, \varphi_t$.

\begin{lemma}
\label{lemma:K} 
For $t \in \mathbb{R}$ let $v$ satisfy \eqref{ass:v} and  $\varphi_t$ denote the solution to the Hartree equation with $\varphi_0 \in H^2 ( \mathbb{R}^3)$. Then there exists $C>0$ such that for any $g \in H^2( \mathbb{R}^3)$
\begin{align}
\|  h_{\varphi_t} g \| \leq C \| g  \|_{H^2}, \; \| \dot{h}_{\varphi_t} g \|_2 \leq C  e^{C t }\| g \|_{H^2}
\end{align}
and for $K_{j,t}$ defined in \eqref{def:Kj} we have 
\begin{align}
\label{eq:estimates-Kj-dot}
\| \dot{K}_{j,t} \|_{L^2 ( \mathbb{R}^3 \times \mathbb{R}^3)} \leq C e^{C t}  \; . 
\end{align}
\end{lemma}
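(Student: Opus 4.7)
The plan is to reduce all three bounds to the a priori estimates \eqref{eq:estimates-pot}, \eqref{eq:estimate-K} on $V_t := v\ast|\varphi_t|^2$ and $v^2\ast|\varphi_t|^2$, together with the propagation of regularity \eqref{eq:Sobolev-k}. The Hartree equation \eqref{def:hartree} will be used at the single point where a time derivative of $\varphi_t$ appears.

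\textbf{Step 1: the bound on $h_{\varphi_t}$.} I would split $h_{\varphi_t}g = -\Delta g + V_t g$. The Laplacian piece is trivially controlled by $\|g\|_{H^2}$. For the potential piece, apply Hölder to get $\|V_t g\|_2 \le \|V_t\|_\infty\|g\|_2$, and invoke the first inequality in \eqref{eq:estimates-pot} to conclude $\|h_{\varphi_t}g\|_2\le C\|g\|_{H^2}$ with a constant independent of $t$.

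\textbf{Step 2: the bound on $\dot h_{\varphi_t}$.} Only $V_t$ carries time dependence, so $\dot h_{\varphi_t}= \dot V_t$, which acts as multiplication by
\[
\dot V_t(x) \;=\; 2\,\Re\!\int v(x-y)\,\overline{\varphi_t(y)}\,\dot\varphi_t(y)\,dy.
\]
I would estimate $\|\dot V_t\|_\infty$ by the Cauchy--Schwarz inequality in $y$, obtaining the pointwise bound $|\dot V_t(x)|^2 \le 4\,(v^2\ast|\varphi_t|^2)(x)\,\|\dot\varphi_t\|_2^2$. The first factor is $\le C$ uniformly in $t$ by \eqref{eq:estimates-pot}. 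For the second factor I use the Hartree equation $\dot\varphi_t=-i\,h_{\varphi_t}\varphi_t$, apply Step 1, and invoke \eqref{eq:Sobolev-k} with $k=2$ to get $\|\dot\varphi_t\|_2 \le C\|\varphi_t\|_{H^2}\le Ce^{Ct}$. Hence $\|\dot V_t\|_\infty \le Ce^{Ct}$, so $\|\dot h_{\varphi_t}g\|_2\le Ce^{Ct}\|g\|_{H^2}$ (in fact $\|g\|_2$ suffices, which is stronger).

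\textbf{Step 3: the Hilbert--Schmidt bound on $\dot K_{j,t}$.} Differentiating the kernel $K_{1,t}(x,y)=v(x-y)\varphi_t(x)\overline{\varphi_t(y)}$ via the product rule gives two terms of the same structure; I would treat $K_{2,t}$ in the same way. Expanding $\|\dot K_{1,t}\|_{L^2(\mathbb{R}^{3}\times\mathbb{R}^{3})}^2$ by the triangle inequality and $(a+b)^2\le 2a^2+2b^2$, one gets two integrals of the form $\int |\dot\varphi_t(x)|^2 (v^2\ast|\varphi_t|^2)(x)\,dx$ (after interchanging $x\leftrightarrow y$ in the second term, which is legitimate because $|v(x-y)|^2$ is symmetric). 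Bounding $v^2\ast|\varphi_t|^2$ uniformly by \eqref{eq:estimates-pot} and plugging in $\|\dot\varphi_t\|_2 \le Ce^{Ct}$ from Step 2 yields $\|\dot K_{j,t}\|_{L^2}\le Ce^{Ct}$.

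The only mildly delicate point is the symmetrization trick that lets both mixed terms in Step 3 be absorbed by the single uniform bound on $v^2\ast|\varphi_t|^2$ rather than forcing an $\|v^2\ast|\dot\varphi_t|^2\|_\infty$ estimate; the latter would require higher Sobolev regularity of $\varphi_0$ than the lemma assumes. Everything else is a direct application of \eqref{eq:estimates-pot}, \eqref{eq:Sobolev-k} and the Hartree equation.
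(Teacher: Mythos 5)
Your proof is correct and follows essentially the same route as the paper: bound $\|v*|\varphi_t|^2\|_\infty$ and $\|v^2*|\varphi_t|^2\|_\infty$ via \eqref{eq:estimates-pot}, control $\|\dot\varphi_t\|_2 = \|h_{\varphi_t}\varphi_t\|_2 \le C\|\varphi_t\|_{H^2}$ through the Hartree equation together with \eqref{eq:Sobolev-k}, and symmetrize $x\leftrightarrow y$ in the Hilbert--Schmidt estimate. The only cosmetic differences are that you use Cauchy--Schwarz in $y$ where the paper applies Young's inequality $ab\le \tfrac12 a^2+\tfrac12 b^2$ to bound $\|\dot V_t\|_\infty$, and that you make explicit the symmetrization that the paper records by displaying only one of the two product-rule terms in $\dot K_{j,t}$ — both choices yield the same bounds under the same $H^2$ hypothesis.
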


\begin{proof} To prove the first estimate we observe that by \eqref{def:hartree} 
\begin{align}
\| h_{\varphi_t} g \|_2 \leq \| g \|_{H^2} + \| v * \vert \varphi_t \vert^2 \|_\infty \| g \|_2 \leq C \| g \|_{H^{2}} 
\end{align}
where we used \eqref{eq:estimates-pot}. For the remaining estimates note that by the above estimate and \eqref{eq:Sobolev-k}
\begin{align}
\label{eq:estimates-dot-phi}
\| \dot{\varphi}_t \|_{2} \leq \| h_{\varphi_t}\varphi_t \|_2 \leq C \| \varphi_t \|_{H^2} \leq  C e^{C t} \| \varphi_0 \|_{H^2} \; . 
\end{align}
Then we get 
\begin{align}
\| \dot{h}_{\varphi_t} g \|_2 = \| (v * ( 2 \Re (\dot{\varphi}_t \varphi_t )) g \|_2 \leq \| v * ( 2 \Re (\dot{\varphi}_t \varphi_t ) \|_\infty \| g \|_2 
\end{align}
and by assumption on $v$
\begin{align}
\vert v*  2 \Re (\dot{\varphi}_t \varphi_t ) \vert (x) &\leq \int \vert v(x-y) \vert \vert \dot{\varphi}_t (y)\vert  \; \vert \varphi_t (y) \vert dy \notag \\
&\leq \int \vert v(x-y) \vert^2 \vert \varphi_t (y) \vert^2 + C \| \dot{\varphi}_t \|_2^2. 
\end{align}
Then from \eqref{eq:estimates-dot-phi} 
\begin{align}
\vert v*  2 \Re (\dot{\varphi}_t \varphi_t ) \vert (x) &\leq C \| \varphi_t \|_{H^2}
\end{align}
which by \eqref{eq:Sobolev-k} yields
\begin{align}
\| \dot{h}_{\varphi_t} g \|_2 \leq e^{C t } \| g \|_2 \; . 
\end{align}
To prove \eqref{eq:estimates-Kj-dot}, we observe that 
\begin{align}
\| \dot{K}_{j,t} \|_{L^2 ( \mathbb{R}^3 \times \mathbb{R}^3)}^2 \leq C \int dxdy \vert \dot{\varphi}_t (x) \vert^2 \vert v(x-y) \vert^2 \vert \varphi_t (y) \vert^2  = \langle \dot{\varphi}_t, (v^2 * \vert \varphi_t \vert^2 ) \dot{\varphi}_t  \rangle 
\end{align}
and we conclude by \eqref{eq:estimates-pot} and \eqref{eq:estimates-dot-phi} with 
\begin{align}
\| \dot{K}_{j,t} \|_{L^2 ( \mathbb{R}^3 \times \mathbb{R}^3)}^2  \leq C \|v^2* \vert \varphi_t \vert^2 \|_\infty \| \dot{\varphi}_t\|_2 \leq C e^{C t} \| \varphi_0 \|_{H^2} \; . 
\end{align}
\end{proof}

We remark that \eqref{eq:estimates-Kj-dot} easily generalizes for the projected kernels $\widetilde{K}_{j,t}$, too. 

 Next we recall some properties of the operator $\mathcal{L}_{(t;s)}$ defined in \eqref{def:Ls} proven in \cite[Lemma 3.2]{RS}. It should be noted that \cite[Lemma 3.2]{RS} is formulated for a specific function with given initial conditions. The proof given there easily generalizes to the Lemma as stated here.  

\begin{lemma}[Lemma 3.2 \cite{RS}] \label{lemma:f}
Let $v$ satisfy \eqref{ass:v} and $\varphi_0 \in H^{4} ( \mathbb{R}^3)$. For $s,t\geq 0$ let $\mathcal{L}_{(t;s)}$ be given by \eqref{def:Ls}. Then there exists $C>0$ such that for $g \in H^2 ( \mathbb{R}^3)$ we have 
\begin{align}
\label{eq:estimate-f}
\| \mathcal{L}_{(t;s)} g \|_2 \leq C_1 e^{C_1 \vert t -s \vert} \| g \|_2, \quad \| \mathcal{L}_{(t;s)}g \|_{H^k} \leq C_2 e^{e^{C_2 \max  \lbrace t,s\rbrace }} \| g\|_{H^2 }  
\end{align}
where $C_1$ (resp. $C_2$) depends on $\varphi_0$ through $\| \varphi_0 \|_{H^1}$ (resp. $\| \varphi_0 \|_{H^4}$). 
\end{lemma}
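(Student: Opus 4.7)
The plan is to establish both bounds in \eqref{eq:estimate-f} by Gronwall-type arguments applied to $f_s := \mathcal{L}_{(t;s)} g$ for fixed $t$ and $g$. By \eqref{def:Ls}, this function satisfies
\begin{equation*}
i \partial_s f_s = \bigl( h_{\varphi_s} + \widetilde{K}_{1,s} - \widetilde{K}_{2,s} J \bigr) f_s, \qquad f_t = g.
\end{equation*}
The key structural observation is that $h_{\varphi_s} + \widetilde{K}_{1,s}$ is self-adjoint (using that $v$ is real and even, so that $K_{1,s}$ is Hermitian), whereas $\widetilde{K}_{2,s} J$ is anti-linear due to the presence of $J$. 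This anti-linear piece is precisely what destroys unitarity of $\mathcal{L}_{(t;s)}$ and is the sole source of norm growth.

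For the $L^2$ estimate, I would differentiate $\|f_s\|_2^2$ in $s$ and use the equation to get
\begin{equation*}
\partial_s \|f_s\|_2^2 = 2 \,\Im\, \langle f_s, \widetilde{K}_{2,s} J f_s \rangle,
\end{equation*}
the self-adjoint contributions cancelling. The right-hand side is controlled by $2 \|\widetilde{K}_{2,s}\| \, \|f_s\|_2^2$, and from \eqref{eq:estimate-K} the operator norm of $\widetilde{K}_{2,s}$ is bounded by the Hilbert-Schmidt norm $\|K_{2,s}\|_{L^2(\mathbb{R}^3 \times \mathbb{R}^3)} \leq C$ uniformly in $s$, with $C$ depending on $\varphi_0$ only through $\|\varphi_0\|_{H^1}$. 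Integrating from $t$ to $s$ then yields $\|f_s\|_2 \leq e^{C_1 |t-s|} \|g\|_2$.

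For the Sobolev estimate I would apply $(1-\Delta)^{k/2}$ to $f_s$ and repeat the computation for $\|(1-\Delta)^{k/2} f_s\|_2^2$. The self-adjoint part no longer cancels cleanly: its commutator $[(1-\Delta)^{k/2}, h_{\varphi_s} + \widetilde{K}_{1,s} - \widetilde{K}_{2,s} J]$ survives, picking up spatial derivatives of $v*|\varphi_s|^2$ and of the kernels \eqref{def:tilde-Kj}. Using Lemma \ref{lemma:K} together with \eqref{eq:Sobolev-k}, each such commutator can be bounded in operator norm from $H^k$ to $L^2$ by a constant times $e^{Cs}$, where the dependence on $\varphi_0$ enters through $\|\varphi_0\|_{H^4}$. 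The resulting differential inequality
\begin{equation*}
\partial_s \|f_s\|_{H^k}^2 \leq C e^{Cs} \|f_s\|_{H^k}^2
\end{equation*}
integrates by Gronwall to the claimed doubly exponential bound $\exp(\exp(C_2 \max\{t,s\}))$; the analogous computation using \eqref{def:Lt} handles the evolution in $t$ and gives the symmetric dependence on $\max\{t,s\}$.

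The main technical obstacle is the commutator analysis behind the Sobolev bound. In particular, the projections $q_s = 1 - |\varphi_s\rangle\langle \varphi_s|$ appearing in $\widetilde{K}_{j,s}$ force us to track how many derivatives fall on $\varphi_s$ versus on $f_s$: each derivative landing on $\varphi_s$ costs one power of $\|\varphi_s\|_{H^{k+1}}$, which by \eqref{eq:Sobolev-k} grows as $e^{C s}$, and this is precisely the mechanism producing the doubly exponential factor and the $\|\varphi_0\|_{H^4}$-dependence quoted in the statement. Since this result is essentially Lemma 3.2 of \cite{RS} (stated there for a specific initial datum arising in the CLT analysis), the argument I just sketched is a direct linearization of theirs; extending to arbitrary $g \in H^2$ is automatic by linearity of $\mathcal{L}_{(t;s)}$.
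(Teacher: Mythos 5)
Your reconstruction is correct and, to the extent one can compare, essentially matches the source: the paper itself does not give a proof of this lemma, but instead cites \cite[Lemma 3.2]{RS} with the remark that the argument given there (stated for a specific initial datum) generalizes directly; your observation that linearity of $\mathcal{L}_{(t;s)}$ handles the generalization to arbitrary $g\in H^2$ is exactly the point the paper makes. The two mechanisms you identify are the right ones. For the $L^2$ bound, the self-adjoint part $h_{\varphi_s}+\widetilde{K}_{1,s}$ drops out of $\partial_s\|f_s\|_2^2$, and the anti-linear piece $\widetilde{K}_{2,s}J$ is the sole source of growth; since $J$ is isometric, $|\langle f_s,\widetilde{K}_{2,s}Jf_s\rangle|\le \|\widetilde{K}_{2,s}\|\,\|f_s\|_2^2$ with $\|\widetilde{K}_{2,s}\|\le\|K_{2,s}\|_{L^2(\mathbb{R}^3\times\mathbb{R}^3)}\le C$ uniformly in $s$ by \eqref{eq:estimate-K}, and Gronwall gives the single exponential. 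For the Sobolev bound, the commutator of $(1-\Delta)^{k/2}$ with the time-dependent multiplication and kernel operators is bounded only by an $e^{Cs}$ factor coming from propagation of regularity \eqref{eq:Sobolev-k}, and integrating the resulting differential inequality
\begin{equation*}
\partial_s\|f_s\|_{H^k}^2 \leq C e^{Cs}\|f_s\|_{H^k}^2
\end{equation*}
is exactly what produces the doubly exponential factor and the $\|\varphi_0\|_{H^4}$-dependence.

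Two small remarks. First, a sign slip: since the generator carries $-\widetilde{K}_{2,s}J$, the correct identity is $\partial_s\|f_s\|_2^2 = -2\Im\langle f_s,\widetilde{K}_{2,s}Jf_s\rangle$ rather than $+2\Im$; this is immaterial for the Gronwall estimate since you bound the absolute value anyway. Second, the statement of the lemma is slightly loose in that $H^k$ appears on the left but only $\|g\|_{H^2}$ on the right — your argument implicitly requires $k\le 2$ (so that $\|g\|_{H^k}\le\|g\|_{H^2}$ closes the Gronwall loop), which is the only case needed in the paper; it would be worth flagging that restriction explicitly rather than leaving $k$ unqualified.
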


\section{Many-body time evolution}
\label{sec:flucdyn}

We are interested in the dynamics of the fluctuations around the condensate. For this we embed the system in the bosonic Fock space $\mathcal{F}$ that is equipped with standard creation and annihilation operators $a^*(f), a(g)$ satisfying for any $f,g \in L^2( \mathbb{R}^3)$ the standard commutation relations 
\begin{align}
\label{eq:comm}
[a(g), a^*(f)] = \langle g,f \rangle, \quad [a(f),a(g)] = 0 = [a^*(f), a^*(g)] \; .
\end{align}
To describe the fluctuations around the condensate, we factor out the condensates contributions through the unitary $\mathcal{U}_{N,t} : L_s^2( \mathbb{R}^{3N}) \rightarrow \mathcal{F}_{\perp \varphi_t}^{\leq N}$. Any $\Psi_N \in L^2( \mathbb{R}^{3N})$ that can be uniquely decomposed as 
\begin{align}
\Psi_N = \eta_0 \varphi^{\otimes N}_t + \dots + \eta_N \quad \text{with} \quad \eta_j \in L_{\perp \varphi_t}^{2} ( \mathbb{R}^{3})^{\otimes_s j}
\end{align}
and one sets
\begin{align}
\label{def:U}
\mathcal{U}_{N,t} \Psi_N = \lbrace \eta_0, \dots, \eta_N \rbrace
\end{align}
The unitary has the following properties for $f,g \in L_{\perp \varphi_t}^{2} ( \mathbb{R}^{3})$ (see \cite{LNS})
\begin{align}
\mathcal{U}_{N,t} a^*( \varphi_t) a( \varphi_t) \mathcal{U}_{N,t}^* &= N  - \mathcal{N} \notag \\
\mathcal{U}_{N,t} a^*( f) a(g) \mathcal{U}_{N,t}^* &=  a^*(f) a(g)  \label{eq:prop-U1}
\end{align}
where $\mathcal N$ is the number of particles operator on the truncated Fock space $\mathcal{F}_{\perp \varphi_t}^{\leq N}$ (which equals the number of excitations). Moreover, 
\begin{align}
\mathcal{U}_{N,t} a^*( \varphi_t) a(f) \mathcal{U}_{N,t}^* &= \sqrt{N  - \mathcal{N} } a(f) = \sqrt{N} b(f)  \notag \\
\mathcal{U}_{N,t} a^*( f) a(\varphi_t) \mathcal{U}_{N,t}^* &=  a^*(f)\sqrt{N  - \mathcal{N} } = \sqrt{N} b^*(f) \label{eq:prop-U2}
\end{align}
where we introduced the modified creation and annihilation operators $b^*(f), b(g)$ that satisfy for $f,g \in L^2( \mathbb{R}^3)$ the modified creation and annihilation commutation relations 
\begin{align}
\label{eq:comm-b}
[b(f), b(g) ] = [b^*(f), b^*(g)] = 0, \quad [b(f), b^*(g) ] = \langle f,g\rangle \Big( 1 - \frac{\cN }{N}\Big) - \frac{1}{N} a^*(g)a(f) \; . 
\end{align}
On the one hand, the correction of the modified commutation relations compared to the standard ones \eqref{eq:comm} leads to technical difficulties in handling the modified creation and annihilation operators $b^*(f),b(g)$. On the other hand, working with $b^*(f), b(g)$ allows to work in the truncated Fock space $\mathcal{F}_{\perp \varphi_t}^{\leq N}$ where the number of particles operator $\cN$ is naturally bounded by $N$ which will be crucial for our analysis and, moreover, where the number of excitations  equals the number of particles $\mathcal{N}$. 
 
\subsection{Fluctuation dynamics} With the unitary \eqref{def:U}, we now define for $t \in \mathbb{R}$ the fluctuation dynamics (i.e. the dynamics describing the fluctuations around the condensate of the gas) 
\begin{align}
\label{def:flucdyn}
\mathcal{W}_N (t;0)  = \mathcal{U}_{N,t} e^{-iH_Nt} \mathcal{U}_{N,0}^* \; . 
\end{align}
and it follows from \eqref{eq:prop-U1},\eqref{eq:prop-U2} that 
\begin{align}
i \partial_t \mathcal{W}_N(t;0) = \mathcal{G}_N (t) \mathcal{W}_N(t;0) 
\end{align}
where 
\begin{align}
\label{def:G}
\mathcal{G}_N(t) = \mathbb{H}(t) + \sum_{j=1}^3 R_{N,t}^{(j)}
\end{align}
and 
\begin{align}
\label{def:H}
\mathbb{H} (t) = d \Gamma (h_{\varphi_t} + \widetilde{K}_{1,t} ) + \frac{1}{2}\int dxdy \left( \widetilde{K}_{2,t}(x;y) b_x^*b_y^* + \overline{\widetilde{K}}_{2,t} (x;y) b_xb_y \right)
\end{align}
with $h_{\varphi_t} = - \Delta + (v* \vert \varphi_t \vert^2)$ and $\widetilde{K}_{j,t} $ defined in  \eqref{def:tilde-Kj} and \eqref{def:Kj}. 
The remainders are given by
\begin{align}
\label{def:R1}
R_{N,t}^{(1)} =& \frac{1}{2} d \Gamma (q_t \left[ v * \vert \varphi_t \vert^2 + \widetilde{K}_{1,t} - \mu_t \right] q_t ) \frac{1-\cN (t)}{N} \notag \\
&+ \frac{\cN(t)}{\sqrt{N}}  b( q_t (v*\vert \varphi_t \vert^2) \varphi_t) \notag \\
&+{\rm h.c.} 
\end{align}
where $2 \mu_t = \int v(x-y) \vert \varphi_t (x)\vert^2 \vert \varphi_t (y)\vert^2$ and 
\begin{align}
\label{def:R23}
R_{N,t}^{(2)} =& \frac{1}{\sqrt{N}}\int v(x-y)\varphi_t (y) a^*(q_{t,x}) a (q_{t,x}) b(q_{t,y}) dxdy + {\rm h.c.} \notag \\
R_{N,t}^{(3)} =& \frac{1}{N} \int v(x-y)\ a^*(q_{t,x}) a^*(q_{t,y}) a (q_{t,x}) a(q_{t,y})  dxdy \;. 
\end{align}
The fluctuation dynamics approximately preserves the number of particles as the following Lemma shows.

 \begin{lemma}
 \label{lemma:WN}
Let $v \leq C (1-\Delta)$ and $\varphi_t$ denote the solution to the Hartree equation \eqref{def:hartree} with initial data $\varphi_0 \in H^2( \mathbb{R}^3)$. Let $k \in \mathbb{N}$, then there exists $C_k>0$ such that 
\begin{align}
\label{eq:WN-Nk}
\mathcal{W}_N^* (t;0) ( \mathcal{N} + 1 )^k \mathcal{W}_N (t;0)  \leq C_k e^{C \vert t \vert} ( \mathcal{N} + 1)^k \; . 
\end{align}
as quadratic form on the truncated Fock space $\mathcal{F}_{\perp \varphi_t}^{\leq N}$. 
\end{lemma}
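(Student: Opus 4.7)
The plan is to establish the lemma by a Gr\"onwall argument applied to the scalar quantity
\[
f_\xi(t) := \langle \xi, \mathcal{W}_N^*(t;0)(\mathcal{N}_++1)^k\mathcal{W}_N(t;0)\xi\rangle
\]
for arbitrary $\xi \in \mathcal{F}_{\perp\varphi_0}^{\leq N}$. Writing $\xi_t := \mathcal{W}_N(t;0)\xi$ and differentiating, the task reduces to proving the operator inequality
\[
\pm i[\mathcal{G}_N(t), (\mathcal{N}_++1)^k] \leq C_k (\mathcal{N}_++1)^k
\]
uniformly in $t\in\mathbb{R}$ and $N$ on $\mathcal{F}_{\perp\varphi_t}^{\leq N}$, after which the claim follows from Gr\"onwall and polarisation in $\xi$.

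My first step is to identify which pieces of $\mathcal{G}_N(t)$ commute with $\mathcal{N}_+$: the quadratic diagonal $d\Gamma(h_{\varphi_t}+\widetilde{K}_{1,t})$, the $d\Gamma$-part of $R_{N,t}^{(1)}$, and the quartic remainder $R_{N,t}^{(3)}$ all preserve the particle number and contribute nothing to the commutator. The non-commuting pieces are (i) the off-diagonal quadratic term in $\mathbb{H}(t)$ with kernel $\widetilde{K}_{2,t}$, (ii) the linear-in-$b$ piece $\tfrac{\mathcal{N}_+(t)}{\sqrt{N}} b(q_t(v*|\varphi_t|^2)\varphi_t)+\text{h.c.}$ of $R_{N,t}^{(1)}$, and (iii) the cubic remainder $R_{N,t}^{(2)}$. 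For each such operator $\mathcal{O}_\ell$ shifting particle number by $\ell\in\{\pm 1,\pm 2\}$, the commutator with $(\mathcal{N}_++1)^k$ can be rewritten via the pull-through identity
\[
[\mathcal{O}_\ell,(\mathcal{N}_++1)^k] = \bigl((\mathcal{N}_++1+\ell)^k - (\mathcal{N}_++1)^k\bigr)\mathcal{O}_\ell
\]
as a polynomial in $\mathcal{N}_+$ of degree $k-1$ multiplying $\mathcal{O}_\ell$.

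Next I would bound the three contributions in expectation against $\xi_t$. Combining this identity with the standard bounds $\|b^\sharp(f)\psi\|\leq \|f\|\,\|(\mathcal{N}_++1)^{1/2}\psi\|$ and Cauchy--Schwarz, (i) is controlled by $\|\widetilde{K}_{2,t}\|_{L^2}\langle \xi_t,(\mathcal{N}_++1)^k\xi_t\rangle$, uniformly in $t$ by \eqref{eq:estimate-K}. For (ii) I would use the truncated-space bound $\mathcal{N}_+(t)/\sqrt{N}\leq \sqrt{\mathcal{N}_+(t)}$ together with $\|q_t(v*|\varphi_t|^2)\varphi_t\|\leq C$ from \eqref{eq:estimates-pot} to convert the $1/\sqrt{N}$ prefactor into an extra $(\mathcal{N}_++1)^{1/2}$, which combines with the $b$ factor to give a power of $(\mathcal{N}_++1)$ absorbable in $(\mathcal{N}_++1)^k$. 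For (iii) I would use $a_x^*a_x\leq\mathcal{N}_+\leq N$ on the truncated Fock space to eliminate the $1/\sqrt{N}$ against one $a^*a$ factor, and control the remaining pairing via the Hardy-type bound implied by \eqref{ass:v} together with the uniform estimate $\|\varphi_t\|_{H^1}\leq C$. Summing gives the desired commutator estimate with a $t$-independent constant, and Gr\"onwall's lemma concludes.

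The main obstacle I anticipate is the bookkeeping for general $k\geq 1$: the pull-through identity produces sub-leading polynomials of degrees $k-1,k-2,\dots,0$ in $\mathcal{N}_+$ that must each be reabsorbed into $(\mathcal{N}_++1)^k$ by repeated Cauchy--Schwarz estimates of the form $\langle\xi,(\mathcal{N}_++1)^{k-j}\mathcal{O}_\ell\xi\rangle\leq C\langle\xi,(\mathcal{N}_++1)^k\xi\rangle$, naturally organised as an induction on $k$. Additional care is required for (iii), where three $b/a$ factors each shift $\mathcal{N}_+$, so the truncation bound $\mathcal{N}_+\leq N$ has to be applied at precisely the right step to preserve the correct power of $(\mathcal{N}_++1)$ in the final estimate.
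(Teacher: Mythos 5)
Your strategy matches the paper's: Gr\"onwall applied to $\langle\xi,\mathcal{W}_N^*(t;0)(\mathcal{N}_++1)^k\mathcal{W}_N(t;0)\xi\rangle$, reduced via Duhamel to a bound on $[\mathcal{G}_N(t),(\mathcal{N}_++1)^k]$ against $(\mathcal{N}_++1)^k$, with the same identification of the number-shifting pieces (off-diagonal $\widetilde K_{2,t}$ term, linear $b$-term in $R^{(1)}_{N,t}$, cubic $R^{(2)}_{N,t}$) controlled by $\|\widetilde K_{2,t}\|_{L^2}\le C$, $\|q_t(v*|\varphi_t|^2)\varphi_t\|_2\le C$, and the truncation $\mathcal{N}_+\le N$. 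The only cosmetic difference is that you package the number-shift with a pull-through identity stated as an operator inequality (modulo a sign: it should read $(\mathcal{N}_++1-\ell)^k-(\mathcal{N}_++1)^k$ if $\mathcal{O}_\ell$ raises $\mathcal{N}_+$ by $\ell$, which is immaterial since $\ell\in\{\pm1,\pm2\}$), whereas the paper telescopes $[\mathcal{G}_N,(\mathcal{N}_++1)^k]$ into $(\mathcal{N}_++1)$-weighted copies of $[\mathcal{G}_N,\mathcal{N}_+]$ and proves the weighted quadratic-form estimate \eqref{eq:dc} --- these are equivalent.
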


\begin{proof} For $k=0$, the proof can be found in \cite[Lemma 14]{LNS}. The proof is based on Duhamel's formula 
\begin{align}
\mathcal{W}_N^* (t;0) \mathcal{N}  \mathcal{W}_N (t;0)  - \mathcal{N} = \int_0^t ds \; \mathcal{W}_N^* (s;0) \left[ \mathcal{G}_N (s), \mathcal{N} \right]  \mathcal{W}_N (s;0) 
\end{align}
and the bound (see \cite[Lemmas 9,13]{LNS})
\begin{align}
\label{eq:sc}
i \Big[ \cN , \mathcal{G}_N (s) \Big] \leq  C  (\cN  +1) \; . 
\end{align}
We will show that this holds true for the second nested commutator, too, i.e. that for all $j \in \mathbb{Z}$ there exists $C_j>0$ such that 
\begin{align}
\label{eq:dc}
\big\vert \langle \xi, \; ( \cN  + 1)^{j/2}  \Big[ \cN , \mathcal{G}_N (s) \Big]\psi \rangle \big\vert \leq  C_j \| \xi \| \; \| (\cN  +1)^{(j+2)/2} \psi \| \; .  
\end{align}
Then \eqref{eq:WN-Nk} follows for general $k\geq 0$ as 
\begin{align}
\mathcal{W}_N^* (t;0) (\mathcal{N}  + 1)^k \mathcal{W}_N (t;0)  - (\mathcal{N} + 1)^k = \int_0^t ds \; \mathcal{W}_N^* (s;0) \left[ \mathcal{G}_N (s), (\mathcal{N}  + 1)^k \right]  \mathcal{W}_N (s;0) 
\end{align}
and thus with the properties of the commutator 
\begin{align}
\mathcal{W}_N^* (t;0)&  (\mathcal{N}  + 1)^k \mathcal{W}_N^* (t;0)  - (\mathcal{N} + 1)^k \notag \\
=& \int_0^t ds \; \mathcal{W}_N^* (s;0) (\mathcal{N}  + 1)^{k-1} \left[ \mathcal{G}_N (s), \mathcal{N}  \right]  \mathcal{W}_N (s;0) \notag \\
&+\int_0^t ds \; \mathcal{W}_N^* (s;0)  \left[ \mathcal{G}_N (s), \mathcal{N}  \right] (\mathcal{N}  + 1)^{k-1} \mathcal{W}_N (s;0) \; 
\end{align}
and \eqref{eq:WN-Nk} follows with \eqref{eq:dc} from Gronwall's inequality. 

It remains to prove \eqref{eq:dc}. By definition \eqref{def:H}, \eqref{def:R1}, \eqref{def:R23} we find 
\begin{align}
\big[ \mathcal{G}_N (t), \mathcal{N} \big] =& \int dxdy \left( \widetilde{K}_{2,t}(x;y) b_x^*b_y^* - \overline{\widetilde{K}}_{2,t} (x;y) b_xb_y \right) \notag \\
&+ \frac{\cN }{\sqrt{N}}  b( q_t (v*\vert \varphi_t \vert^2) \varphi_t) - {\rm h.c.}  \notag \\
&+ \frac{1}{\sqrt{N}}\int v(x-y)\varphi_t (y) a^*(q_{t,x}) a (q_{t,x}) b(q_{t,y}) dxdy - {\rm h.c.} \label{eq:L-N_comm}
\end{align}
Since $\cN  b^*(f) = b^*(f) ( \cN  + 1)$ we find for the first term of the r.h.s. 
\begin{align}
\langle \xi, \; ( \cN  + 1)^{j/2} \int dxdy \widetilde{K}_{2,t}(x;y) b_x^*b_y^* \psi \rangle  =\langle \xi, \;  \int dxdy \widetilde{K}_{2,t}(x;y) b_x^*b_y^* ( \cN  +3 )^{j/2}\psi \rangle
\end{align}
and 
\begin{align}
\vert \langle \xi, \; ( \cN  + 1)^{j/2} \int dxdy \widetilde{K}_{2,t}(x;y) b_x^*b_y^* \psi \rangle \vert   &\leq  \| K_{2,t}\|_2 \| \xi \| \; \|( \cN +1)^{(j+2)/2} \psi \| \notag \\
 &\leq C \| \xi \| \; \|( \cN +1)^{(j+2)/2} \psi \| 
\end{align}
where we used that by \cite[Lemma 10]{LNS}
\begin{align}
 \| K_{2,t}\|_2 \leq C \; .
\end{align}
The hermitian conjugate follows similarly. For the second term of the r.h.s. of \eqref{eq:L-N_comm} we find similarly using that $\psi, \xi \in \mathcal{F}_{\perp \varphi_t}^{\leq N}$
\begin{align}
\vert\langle  &\xi, ( \cN  + 1)^{j/2} \frac{\cN }{\sqrt{N}}  b( q_t (v*\vert \varphi_t \vert^2) \varphi_t) \psi \rangle \vert \notag \\
& \leq \|  q_t (v*\vert \varphi_t \vert^2) \varphi_t \|_2 \| \xi \| \; \| ( \cN  + 1)^{(j+1)/2} \psi \| \leq C \| \xi \| \; \|  ( \cN  + 1)^{(j+1)/2} \psi \|
\end{align}
where we used that 
\begin{align}
\| q_t  (v*\vert \varphi_t \vert^2) \varphi_t \|_2 \leq \|  (v*\vert \varphi_t \vert^2) \|_\infty \| \varphi_t \|_2 \leq C
\end{align}
by \cite[Lemma 10]{LNS}. The hermitian conjugate follows again similarly. The remaining term of \eqref{eq:L-N_comm} can be estimated by
\begin{align}
\vert \langle \xi,  & ( \cN  + 1)^{j/2}  \frac{1}{\sqrt{N}}\int v(x-y)\varphi_t (y) a^*(q_{t,x}) a (q_{t,x}) b(q_{t,y}) dxdy \; \psi \rangle \vert \notag \\
\leq&  \left( N^{-1} \int (v * \vert \varphi_t\vert^2 ) (x) \| a (q_{t,x} ) \xi \|^2 \right)^{1/2} \left( \int dxdy  \| a(q_{t,x}) b(q_{t,y})  ( \cN  + 1)^{j/2} \psi \|^2 \right)^{1/2} \notag \\
\leq& C \| \xi \| \| ( \mathcal{N} + 1)^{(j+2)/2}  \psi \| 
\end{align}
and thus we arrive at \eqref{eq:dc}. 
\end{proof}

\subsection{Connection between many-body and nonlinear Bogoliubov dynamics} \label{sec:bogo} For Lemma \ref{lemma:WN} it is important to work on the truncated Fock space. However, in the limit of a large number of particles $N \rightarrow \infty$, the fluctuation dynamics is asymptotically well described by $\mathcal{W}_\infty (t;0)$ that is a dynamics quadratic in creation and annihilation operators and given by 
\begin{align}
\label{def:Winfty}
i \partial_t \mathcal{W}_\infty (t;0) = \widetilde{\mathbb{H}} \mathcal{W}_\infty (t;0)
 \end{align}
 where 
\begin{align}
\label{def:Htilde}
\widetilde{\mathbb{H}} = d \Gamma (h_{\varphi_t} + \widetilde{K}_{1,t} ) + \frac{1}{2}\int dxdy \left( \tilde K_{2,t}(x;y) a_x^*a_y^* + \overline{\tilde K}_{2,t} (x;y) a_xa_y \right)
\end{align}
with $h_{\varphi_t}$ given by \eqref{def:hartree} and $\tilde K_{j,s}$ defined in \eqref{def:tilde-Kj} and \eqref{def:Kj}. Note that the generator $\widetilde{\mathbb{H}}$ is quadratic in standard creation and annihilation operators. Dynamics with generators that are quadratic in standard creation and annihilation operators are known to give rise to a Bogoliubov dynamics with explicit action on creation and annihilation operators.

Here we give a brief overview about the properties of $\mathcal{W}_\infty (t;s)$ and its relation to the Bogoliubov dynamics \eqref{def:Theta} that is widely studied in the literature. We follow the approach of \cite{BKS} where this perspective seems to originate. Namely, it follows by arguments similar\footnote{ In \cite{BKS} a slightly different Bogoliubov dynamics $\widetilde{\mathcal{W}}_\infty (t;s)$ was formulated w.r.t. to the unprojected kernels $K_{j,s}$ (see \eqref{def:Kj} ) instead of the projected kernels $\widetilde{K}_{j,s}$ of $\mathcal{W}_\infty (t;s)$ defined in \eqref{def:Winfty}. However the arguments in \cite{BKS} can be generalized to $\mathcal{W}_\infty (t;s)$. This is carried out in detail in \cite[Proposition 1.3]{R} for a more general class of dynamics with quadratic generators that $\mathcal{W}_\infty (t;s)$ belongs to by Lemma \ref{lemma:K}. See also \cite{LNS}.} to those in \cite{BKS} that $\Theta (t;s)$ is a  Bogoliubov transform associated to the asymptotic (quadratic) dynamics $\mathcal{W}_\infty (t;s)$. More precisely $\Theta (t;s) : L^2( \mathbb{R}^3) \oplus L^2( \mathbb{R}^3) \rightarrow L^2( \mathbb{R}^3) \oplus L^2( \mathbb{R}^3)$ is a bounded linear map satisfying for all $f,g \in L^2( \mathbb{R}^3)$ and $A(f,g) = a(f) + a^*(\overline{g})$ 
\begin{align}
\mathcal{W}_\infty^*(t;s) A(f,g) \mathcal{W}_\infty (t;s) = A( \Theta(t;s) (f,g) ), \; 
\end{align}
where $\mathcal{W}_\infty (t;s)$ is defined in \eqref{def:Winfty}. Moreover it follows from arguments presented in \cite{BKS} that the Bogoliubov map $\Theta (t;s)$ satisfies \eqref{eq:Theta-s} for all $s,t \in \mathbb{R}$
\begin{align}
\label{eq:Theta-t}
i\partial_t \Theta (t;s) =- \Theta (t;s)  \mathcal{T}_t  \quad \text{with} \quad  \mathcal{T}_t \quad \text{given by} \quad \eqref{eq:Theta-s}
\end{align}
and $\Theta (s;s) =\begin{pmatrix}
 1 & 0 \\
 0 &1 
\end{pmatrix} $.\footnote{Note that there is a sign discrepancy between \eqref{eq:Theta-t} and \cite[formula (2.12)]{BKS}.}  One obtains the existence of $\Theta (t,s)$ such that \eqref{eq:prop-Theta} is satisfied and then \eqref{eq:Theta-t} follows from the identity 
\begin{align}
A( i \partial_t \Theta(t,s) (f,g)) =& i \partial_t \mathcal{W}_\infty (t;s)^* A(f,g) \mathcal{W}_\infty (t;s) = -\mathcal{W}_\infty (t;s) [ \widetilde{\mathbb{H}}, A(f,g) ] \mathcal{W}_\infty (t;s)  
\end{align}
where $\mathcal{W}_\infty (t,s), \widetilde{\mathbb{H}}$ are given by \eqref{def:Winfty}, \eqref{def:Htilde}. With similar computations as in \eqref{eq:commH-phi}, the commutator can be explicitly computed 
\begin{align}
A( i \partial_t \Theta(t,s) (f,g))  =& \mathcal{W}_\infty^* (t;s)A(\mathcal{T}_t (f,g) ) \mathcal{W}_\infty (t;s) =  A(\Theta(t;s) \mathcal{T}_t (f,g)  ) 
\end{align}
and \eqref{eq:Theta-t} follows. The second identity \eqref{eq:Theta-s} follows similarly with 
\begin{align}
0 =i \partial_s \mathcal{W}_\infty (t;s) A( \Theta(t,s) (f,g))\mathcal{W}_\infty (t;s)^* = i \partial_s \mathcal{W}_\infty (s;t)^* A( \Theta(t,s) (f,g))\mathcal{W}_\infty (s;t) 
\end{align}
and thus similarly as before 
\begin{align}
0 = \mathcal{W}_\infty (s;t)^* \left( -[ \widetilde{\mathbb{H}}, A( \Theta(t,s) (f,g))] +  A( i\partial_s\Theta(t,s) (f,g)) \right)  \mathcal{W}_\infty (s;t)  
\end{align}
and we conclude by \eqref{eq:Theta-s}.
In particular, it follows that 
\begin{align}
\Theta(t;s) ( \varphi_t, \overline{\varphi}_t )  =( \varphi_s, \overline{\varphi}_s ) 
\end{align}
as in \cite[Theorem 2.2]{BKS}.\footnote{Note that  - as before - \cite[Theorem 2.2]{BKS} considers a different Bogoliubov dynamics $\widetilde{\Theta}_\infty (t;s)$ w.r.t. unprojected kernels $K_{j,t}$. However, for $\Theta (t;s)$ as defined in \eqref{def:Theta} this statement follows similarly, using here that for the projected kernels $\widetilde{K}_{j,s}$ we immediately get $(\widetilde{K}_{2,s} \varphi_s) = 0$ resp. $(\widetilde{K}_{1,s} J \varphi_s) = 0$.} Moreover, the second identity of \eqref{def:Ls} follows.

\section{Multivariate Gaussian: Proof of Theorem \ref{thm:MCLT} }
\label{sec:MCLT}

In this section we prove Theorem \ref{thm:MCLT}. The ideas and steps used for this proof will also be important later for the proofs of Theorems \ref{thm:OTOC} and Theorem \ref{thm:corrfct} in Sections \ref{sec:OTOC} resp. \ref{sec:corr}.

\begin{proof}[Proof of Theorem \ref{thm:MCLT} ]
 Theorem \ref{thm:MCLT} follows with the observation that for any $g_i \in L^1( \mathbb{R}^3)$ 
\begin{align}
\mathbb{E}  \Big[ \prod_{j=1}^m g_i( \mathcal{A}_{t_j}/\sqrt{N} ) \Big] 
=& \int_{\mathbb{R}^m} \widehat{g}_1(\lambda_1)  \cdots \widehat{g}_m( \lambda_m) \; \langle \psi_N,  e^{i \frac{\lambda_1}{\sqrt{N}} A_{t_1}} \cdots e^{i \frac{\lambda_m}{\sqrt{N}} A_{t_m}} \psi_N \rangle \;  d \lambda_1 \dots d \lambda_m \notag \\
=& \int_{\mathbb{R}^m} \widehat{g}_1(\lambda_1)  \cdots \widehat{g}_m( \lambda_m) \;  \mathbb{E}  \Big[  \prod_{j=1}^m e^{i \frac{\lambda_j}{\sqrt{N}} A_{t_j}}  \Big]d \lambda_1 \dots d \lambda_m 
\end{align}
by the bound on the characteristic function
\begin{align}
\label{eq:MCLT-1}
\Big\vert \mathbb{E} & \Big[  \prod_{j=1}^m e^{i \frac{\lambda_j}{\sqrt{N}} A_{t_j}}  \Big]  - e^{- \lambda^T \Sigma_{t_1, \dots, t_m}\lambda /2} \Big\vert \leq  C_m e^{e^{C \vert t \vert}} N^{-1/2} \prod_{j=1}^{m} ( 1+ \lambda_j^2 )^{5/2} 
\end{align}
where $\lambda = ( \lambda_i, \cdots, \lambda_m) \in \mathbb{R}^m$ and $\Sigma_{t_1, \dots, t_m} \in  \mathbb{R}^{m \times m}$ is given by \eqref{def:Sigma}. 

Accordingly, it is sufficient to show formula \eqref{eq:MCLT-1}, which we will now proof in the following. For this we write the expectation value in the bosonic Fock space using the notation 
\begin{align}
A_{t} =
e^{itH_N} (A - \langle \varphi_t, A \varphi_t \rangle ) e^{-itH_N}
\end{align} 
as 
\begin{align}
\mathbb{E} & \Big[  \prod_{j=1}^m e^{i \frac{\lambda_j}{\sqrt{N}} A_{t_j}}  \Big]   = \langle \psi_N, \prod_{j=1}^m e^{i \frac{\lambda_j}{\sqrt{N}} d \Gamma (A_{t_j})}  \psi_N \rangle \; . 
\end{align}
Using the unitary map $\mathcal{U}_{N,0}$ defined in \eqref{def:U} we have $\psi_N = \mathcal{U}_{N,0}^* \Omega $, i.e. 
\begin{align}
\mathbb{E}  \left[ \prod_{j=1}^m e^{i \frac{\lambda_j}{\sqrt{N}} A_{t_j}} \right]  =& \langle \mathcal{U}_{N,0}^* \Omega, \prod_{j=1}^m e^{i \frac{\lambda_j}{\sqrt{N}} d \Gamma (A_{t_j})}   \mathcal{U}_{N,0}^* \Omega \rangle =\langle \ \Omega, \prod_{j=1}^m e^{i\frac{\lambda_j}{\sqrt{N}} \mathcal{U}_{N,0} d \Gamma (A_{t_j}) \mathcal{U}_{N,0}^*}  \Omega \rangle
\end{align}
With the definition of the fluctuation dynamics \eqref{def:flucdyn} we find that 
\begin{align}
\frac{1}{\sqrt{N}}  & \mathcal{U}_{N,0} d \Gamma (A_{t_j}) \mathcal{U}_{N,0}^* \notag \\
&= \frac{1}{\sqrt{N}} \mathcal{W}_N (t_j;0)^* \mathcal{U}_{N,t_j} d\Gamma \left( A - \langle \varphi_{t_j}, A \varphi_{t_j} \rangle  \right)\mathcal{U}_{N,t_j}^* \mathcal{W}_N (t_j;0) \; . 
\end{align}
The properties of the unitary $\mathcal{U}_{N,t_j}$ (see \eqref{eq:prop-U1},\eqref{eq:prop-U2}) show that 
\begin{align}
\mathcal{U}_{N,t_j} & d\Gamma \left( A - \langle \varphi_{t_j}, A \varphi_{t_j} \rangle  \right)\mathcal{U}_{N,t_j}^*\notag \\
 =& d \Gamma (q_{t_j} \left( A - \langle \varphi_{t_j}, A \varphi_{t_j} \rangle  \right) q_{t_j} ) + b( q_{t_j} A \varphi_{t_j}) + b^*( q_{t_j} A \varphi_{t_j})  \notag \\
=& \sqrt{N} \phi_+ ( q_{t_j} A \varphi_{t_j} ) + \widetilde{A}_{t_j} \label{eq:B}
\end{align}
where we introduced the notation $\phi_+ (h ) = b^*(h) + b(h)$ for any $h \in L^2( \mathbb{R}^3)$ and 
\begin{align}
\widetilde{A}_{t_j} =& d \Gamma (q_{t_j} \left( A - \langle \varphi_{t_j}, A \varphi_{t_j} \rangle  \right) q_{t_j} ) \; . 
\end{align}
Thus we get 
\begin{align}
\mathbb{E}  & \Big[ \prod_{j=1}^m e^{i \frac{\lambda_j}{\sqrt{N}} A_{t_j}} \Big]  =\langle  \Omega, \prod_{j=1}^m e^{ i\lambda_j\mathcal{W}_N (t_j;0)^* ( \phi_+ ( q_{t_j} A \varphi_{t_j} ) + N^{-1/2} \widetilde{A}_{t_j})\mathcal{W}_N (t_j;0)  } \Omega \rangle. 
\end{align}

\subsection*{Step 1} In the first step we show that the operator $\widetilde{A}_{t_j}$ is negliable in the limit $N \rightarrow \infty$, i.e. that 
\begin{align}
 \lambda_j\mathcal{W}_N (t_j;0)^* ( \phi_+ ( q_{t_j} A \varphi_{t_j} ) + N^{-1/2} \widetilde{A}_{t_j})\mathcal{W}_N (t_j;0)  \approx  \lambda_j\mathcal{W}_N (t_j;0)^* \phi_+ ( q_{t_j} A \varphi_{t_j} ) \mathcal{W}_N (t_j;0). 
\end{align}

We introduce the notation $h_{t_k} =  q_{t_k} A \varphi_{t_k}$ and compare the expectation value above with 
\begin{align}
\mathbb{E}  & \Big[ \prod_{j=1}^m e^{i \frac{\lambda_j}{\sqrt{N}} A_{t_j}} \Big]  - \langle  \Omega, \prod_{j=1}^m e^{i \lambda_j\mathcal{W}_N (t_j;0)^* \phi_+ (  h_{t_j} ) \mathcal{W}_N (t_j;0)}  \Omega \rangle \notag \\
=& \sum_{j=1}^m  \langle \Omega, \left( \prod_{k=1}^{j-1} e^{i\lambda_k \mathcal{W}_N (t_k;0)^*\phi_+ ( h_{t_k}) \mathcal{W}_N (t_k;0)} \right) \notag \\
& \hspace{1cm} \times \left(e^{i \lambda_j \mathcal{W}_N (t_j;0)^* ( \phi_+ ( h_{t_j} ) + N^{-1/2} \widetilde{A}_{t_j})\mathcal{W}_N (t_j;0)  }  -   e^{i \lambda_j\mathcal{W}_N (t_j;0)^*\phi_+ ( h_{t_j})\mathcal{W}_N (t_j;0)} \right) \notag \\
& \hspace{1cm} \times  \left( \prod_{\ell = j+1}^m e^{i \lambda_\ell \mathcal{W}_N (t_{\ell};0)^* ( \phi_+ ( h_{t_{\ell}} ) + N^{-1/2} \widetilde{A}_{t_{\ell}})\mathcal{W}_N (t_{\ell};0)  } \right) \Omega \rangle  \; . 
\end{align}
We write the difference as 
\begin{align}
\mathbb{E}  & \Big[ \prod_{j=1}^m e^{i \frac{\lambda_j}{\sqrt{N}} A_{t_j}} \Big]  - \langle  \Omega, \prod_{j=1}^m e^{i \lambda_j\mathcal{W}_N (t_j;0)^* \phi_+ (  h_{t_j} ) \mathcal{W}_N (t_j;0)}  \Omega \rangle \notag \\
=&  \int_0^1 ds \sum_{j=1}^m \lambda_j  \langle \Omega, \left( \prod_{k=1}^{j-1} e^{i \lambda_k\mathcal{W}_N (t_k;0)^* \phi_+ (h_{t_k})\mathcal{W}_N (t_k;0)}  \right) e^{i (1-s) \lambda_j\mathcal{W}_N (t_j;0)^*\phi_+ (h_{t_j})\mathcal{W}_N (t_j;0)} \notag \\
& \hspace{1cm} \times  N^{-1/2} \mathcal{W}_N (t_j;0)^*  \widetilde{A}_{t_j}\mathcal{W}_N (t_j;0)   e^{i s \lambda_j \mathcal{W}_N (t_j;0)^* ( \phi_+ ( h_{t_j} ) + N^{-1/2} B_{t_j}\mathcal{W}_N (t_j;0)  } \notag \\
& \hspace{1cm} \times  \left( \prod_{\ell = j+1}^m e^{i\lambda_\ell \mathcal{W}_N (t_{\ell};0)^* ( \phi_+ ( h_{t_{\ell}} ) + N^{-1/2} \widetilde{A}_{t_{\ell}})\mathcal{W}_N (t_{\ell};0)  }\right)  \Omega \rangle  \label{eq:step1-1}
\end{align}
We recall that $\widetilde{A}_{t_j} = d \Gamma ( q_{t_j} Aq_{t_j})$ and since $\|q_{t_j} Aq_{t_j} \| \leq C (1+ \| \varphi_{t_j} \|_2 ) \| A \| $ we find with Lemma \ref{lemma:WN} 
\begin{align}
\label{eq:B-approx}
\| \widetilde{A}_{t_j}\mathcal{W}_N (t_j;0) \psi \| \leq C \| A \| \| \mathcal{N} \mathcal{W}_N (t_j;0) \psi \| \leq C e^{C\vert t_ j \vert} \| ( \cN  + 1) \psi \|
\end{align}
by Lemma \eqref{lemma:WN} for any $\psi \in \mathcal{F}_{\perp \varphi_t}^{\leq N}$. Plugging this into \eqref{eq:step1-1} we get with Lemma \ref{lemma:WN} 
\begin{align}
\Big\vert \mathbb{E}  & \Big[ \prod_{j=1}^m e^{i \frac{\lambda_j}{\sqrt{N}} A_{t_j}} \Big]  - \langle  \Omega, \prod_{j=1}^m e^{i \lambda_j\mathcal{W}_N (t_j;0)^* \phi_+ (  h_{t_j} ) \mathcal{W}_N (t_j;0)}  \Omega \rangle \Big\vert \notag \\
\leq&  C N^{-1/2} \int_0^1 ds \sum_{j=1}^m  e^{C \vert t_j \vert }\lambda_j \notag \\
& \hspace{2cm} \times \| ( \mathcal{N} + 1) e^{i (1-s) \lambda_j\mathcal{W}_N (t_j;0)^*\phi_+ (h_{t_j})\mathcal{W}_N (t_j;0)}  \prod_{k=1}^{j-1} e^{i \lambda_k\mathcal{W}_N (t_k;0)^* \phi_+ (h_{t_k})\mathcal{W}_N (t_k;0)}    \Omega \| \notag \\
\leq& C N^{-1/2}\int_0^1 ds \sum_{j=1}^m \lambda_j e^{C \vert t_j \vert } \notag \\
& \hspace{2cm} \times \| ( \mathcal{N} + 1) e^{i (1-s) \lambda_j\phi_+ (h_{t_j})\mathcal{W}_N (t_j;0)}  \prod_{k=1}^{j-1} e^{i \lambda_k\mathcal{W}_N (t_k;0)^* \phi_+ (h_{t_k})\mathcal{W}_N (t_k;0)}    \Omega \| \; . 
\end{align}
From \cite[Lemma 3.5]{BSS} we get 
\begin{align}
\Big\vert \mathbb{E}  & \Big[ \prod_{j=1}^m e^{i \frac{\lambda_j}{\sqrt{N}} A_{t_j}} \Big]  - \langle  \Omega, \prod_{j=1}^m e^{i \lambda_j\mathcal{W}_N (t_j;0)^* \phi_+ (  h_{t_j} ) \mathcal{W}_N (t_j;0)}  \Omega \rangle \Big\vert \notag \\
\leq& C N^{-1/2}\int_0^1 ds \sum_{j=1}^m \lambda_j e^{C \vert t_j \vert } \notag \\
& \hspace{2cm} \times \| ( \mathcal{N} + (1-s)^2 \lambda_j^2 \| h_{t_j}\|^2 + 1)   \prod_{k=1}^{j-1} e^{i \lambda_k\mathcal{W}_N (t_k;0)^* \phi_+ (f_{t_k})\mathcal{W}_N (t_k;0)}    \Omega \| \; . 
\end{align}
Since $\| h_{t_j}\| = \| q_{t_j} A \varphi_{t_j} \| \leq \| A \| \| \varphi_{t_j} \|_2 \leq \| A \|$ we find after applying this bound to the remaining $j-1$ terms of the product 
\begin{align}
\Big\vert \mathbb{E}  & \Big[ \prod_{j=1}^m e^{i \frac{\lambda_j}{\sqrt{N}} A_{t_j}} \Big]  - \langle  \Omega, \prod_{j=1}^m e^{i \lambda_j\mathcal{W}_N (t_j;0)^* \phi_+ (  h_{t_j} ) \mathcal{W}_N (t_j;0)}  \Omega \rangle \Big\vert \notag \\
\leq& C N^{-1/2}\int_0^1 ds \sum_{j=1}^m \lambda_j e^{C \vert t_j \vert } \| ( \mathcal{N} + \sum_{k=1}^{j} \lambda_k^2 + 1)   \Omega \| \notag \\
&\leq CN^{-1/2} m \; e^{ \sum_{j=1}^m \vert t_j \vert} \sum_{j=1}^m \lambda_j^3 \; . \label{eq:step1}
\end{align}

\subsection*{Step 2} Next we compute the approximate action of the fluctuation dynamics on the observable $\phi_+ (h_{t_j})$, i.e. we are going to show that 
\begin{align}
\mathcal{W}_N (t_j;0)^* &  \phi_+ ( h_{t_j} )\mathcal{W}_N (t_j;0)  \approx \phi_+ (\mathcal{L}_{(t_j,0)}  h_{t_j}) 
\end{align}
where $\mathcal{L}_{(t_j, 0)}$ is given by \eqref{def:Ls}. To this end we define 
\begin{align}
\label{def:fs}
f_{(t_j,0)} :=  \mathcal{L}_{(t_j,0)}  h_{t_j} 
\end{align}
and compare 
\begin{align}
\langle  \Omega,  & \prod_{j=1}^m e^{i \lambda_j \phi_+ ( h_{t_j} ) }  \Omega \rangle  - \langle  \Omega, \prod_{j=1}^m e^{i \lambda_j \phi_+ ( f_{(t_j,0)} ) }  \Omega \rangle \notag \\
=& \sum_{j=1}^m  \langle \Omega, \left( \prod_{k=1}^{j-1} e^{i\lambda_k \phi_+ (f_{(t_k,0)})} \right) \left(e^{i \lambda_j \mathcal{W}_N (t_j;0)^*  \phi_+ ( h_{t_j} ) \mathcal{W}_N (t_j;0)  }  -   e^{i \lambda_j\phi_+ (f_{(t_j,0)})} \right) \notag \\
& \hspace{1cm} \times  \left( \prod_{\ell = j+1}^m e^{i \lambda_\ell \mathcal{W}_N (t_{\ell};0)^* \phi_+ ( h_{t_{\ell}} ) \mathcal{W}_N (t_{\ell};0)  } \right) \Omega \rangle  \notag \\
\end{align}
Similarly as in the first step we write this difference as 
\begin{align}
\langle  \Omega,  & \prod_{j=1}^m e^{i \lambda_j \phi_+ ( h_{t_j} ) }  \Omega \rangle - \langle  \Omega, \prod_{j=1}^m e^{i \lambda_j \phi_+ ( f_{(t_j,0)} ) }  \Omega \rangle \notag \\
=&  \int_0^1 ds \sum_{j=1}^m \lambda_j  \langle \Omega, \left( \prod_{k=1}^{j-1} e^{i \lambda_k \phi_+ (f_{(t_k,0)})}  \right) e^{i (1-s) \lambda_j\phi_+ (f_{(t_j,0)})} \notag \\
& \hspace{1cm} \times \left( \mathcal{W}_N (t_j;0)^*  \phi_+ ( h_{t_j} ) \mathcal{W}_N (t_j;0)  -\phi_+ (f_{(t_j,0)}) \right)  \notag \\
&\hspace{1cm} \times e^{i s \lambda_j \mathcal{W}_N (t_j;0)^* \phi_+ ( h_{t_j} ) \mathcal{W}_N (t_j;0)  } \notag \\
& \hspace{1cm} \times  \left( \prod_{\ell = j+1}^m e^{i\lambda_\ell \mathcal{W}_N (t_{\ell};0)^*  \phi_+ ( h_{t_{\ell}} ) \mathcal{W}_N (t_{\ell};0)  }\right)  \Omega \rangle  \; . \label{eq:step2-1}
\end{align}
To compute the difference, we recall for $s_j \in [0,t_j]$ the definiition of the function $f_{(t_j,s_j)}$ by \eqref{def:fs} 
with $f_{t_j,t_j}= h_{t_j}$. Then 
\begin{align}
\mathcal{W}_N (t_j;0)^*  & \phi_+ ( h_{t_j} ) \mathcal{W}_N (t_j;0)  -\phi_+ (f_{(t_j,0)})   \notag \\
=& -i \int_0^{t_j} ds_j \frac{id}{ds_j} \mathcal{W}_N (s_j;0)^*  \phi_+ ( f_{(t_j,s_j)} ) \mathcal{W}_N (s_j;0) \notag \\
=& i  \int_0^{t_j} ds_j  \mathcal{W}_N (s_j;0)^*\Big(  \Big[ \phi_+ ( f_{(t_j,s_j)}), \mathcal{L}_N (s_j) ) \Big]  - i\phi_- (i\partial_s f_{(t_j,s_j)}) \Big) \mathcal{W}_N (s_j;0)  \label{eq:comm-diff}
\end{align}
where we introduced the notation $i \phi (h) = b(h) - b^*(h)$ for any $h \in L^2( \mathbb{R}^3)$. With \eqref{eq:comm-b}, we compute the commutator and find a crucial cancellation between the leading order contribution of 
\begin{align}
\label{eq:commH-phi}
\Big[  \mathbb{H},  & \; \phi_+ ( f_{(t_j,s_j)} ) \Big] \notag \\
=& i\phi_- ( (h_{\varphi_s} + \widetilde{K}_{1,s} - \widetilde{K}_{2,s}) f_{(t_j,s_j)} )  \notag \\
&+ \frac{1}{2} b^*(K_{2,s_j} J f_{(t_j,s_j)}) \frac{\cN }{N} + \frac{1}{2} \frac{\cN }{N}  b^*(K_{2,s_j} J f_{(t_j,s_j)})  - {\rm h.c.} \notag \\
&+ \frac{1}{2N} \int dxdy K_{2,s_j}(x,y)  a( f_{(t_j,s_j)})  a_x^* b_y^* + \frac{1}{2N} \int dxdy K_{2,s_j}(x,y) b_x^*  a_y^*a( f_{(t_j,s_j)}) \notag \\
&- {\rm h.c.}
\end{align}
and $ i\phi_- (i\partial_s f_{(t_j,s_j)}) $. Recall that $\|K_{2,s_j} \|_2 \leq C $ and $\| K_{2,s_j} J f_{(t_j,s_j)}\|_2 \leq C \|  f_{(t_j,s_j)} \|_ \leq C e^{C \vert t_j \vert}$. Thus the three last lines of the r.h.s. of \eqref{eq:commH-phi} do not contribute to leading order and 
\begin{align}
\| ( \Big[ \mathbb{H},  &  \; \phi_+ ( f_{(t_j,s_j)} ) \Big] - i\phi_i (\partial_s f_{(t_j,s_j)})  ) \psi \| \leq C e^{C \vert t_j \vert}N^{-1} \|( \cN  + 1)^{3/2} \psi \| \; .\label{eq:comm-H}
\end{align}
Next we show that the commutators with the remainders $\mathcal{R}_N^{(j)}$ are negligible. For this we consider the single contributions of $\mathcal{R}_N = \mathcal{R}_N^{(1)} + \mathcal{R}_N^{(2)}$ separately and start with the first one 
\begin{align}
\Big[ \mathcal{R}_{N,s_j}^{(1)},  \phi_+ ( f_{(t_j,s_j)} ) \Big] =& \frac{1}{2} i\phi_- (q_{s_j} [V*\vert \varphi_{s_j}\vert^2 + K_{1,s_j} - \mu_{s_j}]  f_{(t_j,s_j)} )\frac{1-\cN }{N} \notag \\
&- \frac{1}{N} d\Gamma ( q_{s_j} [v*\vert \varphi_{s_j}\vert^2 + K_{1,s_j} - \mu_{s_j}]q_{s_j}) i\phi_-( f_{(t_j,s_j)} ) 
\notag \\
&-{\rm h.c.}
\end{align}
Since $\| [v*\vert \varphi_{s_j}\vert^2 + K_{1,s_j} - \mu_{s_j}] \|_\infty \leq C$, we thus get 
\begin{align}
\| \Big[ \mathcal{R}_{N,s_j}^{(1)},  \phi_+ ( f_{(t_j,s_j)} ) \Big] \psi \| \leq  C e^{C \vert t_j \vert}N^{-1} \|( \cN  + 1)^{2} \psi \| \; .|\label{eq:comm-R1}
\end{align}
Moreover, 
\begin{align}
\Big[ \mathcal{R}_{N,s_j}^{(2)},  \phi_+ ( f_{(t_j,s_j)} ) \Big] =& \frac{1}{\sqrt{N}}\int v(x-y)\varphi_{s_j} (y) \overline{f}_{t_j,s_j} (x)  b^*(q_{s_j,x}) b(q_{t,y}) dxdy \notag \\
&- \frac{1}{\sqrt{N}}\int v(x-y)\varphi_{s_j} (y) f_{(t_j,s_j)} (x)  b(q_{s_j,x}) b(q_{s_j,y}) dxdy \notag \\
&-\frac{1}{\sqrt{N}}\int v(x-y)\varphi_{s_j} (y) f_{(t_j,s_j)} (y)  a^*(q_{s_j,x})a(q_{s_j,x}) ( 1- \frac{\cN }{N}) dxdy \notag \\
&-\frac{1}{N^{3/2}}\int v(x-y)\varphi_{s_j} (y)  a^*(q_{s_j,x})a(q_{s_j,x}) a^*(f_{(t_j,s_j)} ) b(q_{s_j,y})\notag \\
&- {\rm h.c.}
 \end{align}
Using that 
\begin{align}
\int dxdy v^2(x-y) \vert \varphi_t (y)\vert^2  \vert \overline{f}_{(t_j,s_j)} (x) \vert^2 \leq \| v^2 * \vert \varphi_t \vert^2 \|_\infty \| \overline{f}_{(t_j,s_j)} \|_2 \leq C e^{\vert t_j \vert}
\end{align}
for the first three lines and 
\begin{align}
\vert \langle &  \xi, (\cN  + 1)^{-1/2} \int v(x-y)\varphi_{s_j} (y)  a^*(q_{s_j,x})a(q_{s_j,x}) a^*(f_{(t_j,s_j)} ) b(q_{s_j,y})(\cN  + 2)^{1/2}  \psi \rangle \vert  \notag \\
\leq& \left( \int dx (  v^2 * \vert \varphi_{s_j} \vert^2 )(x)  \vert  \| a(q_{s_j,x}) (\cN  + 1)^{-1/2}  \xi \|^2 \right)^{1/2} \notag \\
& \hspace{2cm} \times \left( \int dxdy  \|a(q_{s_j,x}) a^*(f_{(t_j,s_j)} ) b(q_{s_j,y}) (\cN  + 1)^{1/2}  \psi \|^2 \right)^{1/2} \notag \\
&\leq  C  e^{\vert t_j \vert} \| \xi \| ( \cN  + 1)^2 \psi \| \label{eq:comm-R3-3}
\end{align}
for the last line we arrive at 
\begin{align}
\| \Big[ \mathcal{R}_{N,s_j}^{(2)},  \phi_+ ( f_{(t_j,s_j)} ) \Big]  \psi \| \leq CN^{-1/2} e^{\vert t_j \vert} \| \xi \| \; \| ( \cN  + 1)^2 \psi \| \; . \label{eq:comm-R2}
\end{align}
For the last contribution of the remainder we find 
\begin{align}
\Big[ & \mathcal{R}_{N,s_j}^{(3)},  \phi_+ ( f_{(t_j,s_j)} ) \Big]  \notag \\
=& \frac{1}{N}\int dxdy v(x-y) \left( \overline{f}_{t_j,s_j} (y) b_y^*a_x^*a_x - f_{(t_j,s_j)} (y) a_x^*a_xb_y \right)  \; . 
\end{align}
Then similarly as in \eqref{eq:comm-R3-3} with $\| f_{(t_j,s_j)} \|_{H^2} \leq e^{e^{C \vert t_j \vert}}$ we find that 
\begin{align}
\| \Big[ & \mathcal{R}_{N,s_j}^{(3)},  \phi_+ ( f_{(t_j,s_j)} ) \Big]  \psi \| \leq N^{-1} e^{e^{C \vert t_j \vert}} \| ( \cN  + 1)^2 \psi \| \; . \label{eq:comm-R3}
\end{align}
Summarizing \eqref{eq:comm-H},\eqref{eq:comm-R1},\eqref{eq:comm-R2} and \eqref{eq:comm-R3} we thus arrive for any $\psi \in \mathcal{F}_{\perp \varphi_t}^{\leq N}$ at 
\begin{align}
\| \Big(  \Big[ \mathcal{L}_N (s_j),  \phi_+ ( f_{(t_j,s_j)} ) \Big]  - i\phi_i (\partial_s f_{(t_j,s_j)}) \Big)  \psi \| \leq N^{-1/2} e^{e^{C \vert t_j \vert}} \| ( \cN  + 1)^{3/2} \psi \| \; . 
\end{align}
Plugging this into \eqref{eq:comm-diff} we get with Lemma \ref{lemma:WN}
\begin{align}
\| (\mathcal{W}_N (t_j;0)^*  & \phi_+ ( h_{t_j} ) \mathcal{W}_N (t_j;0)  -\phi_+ (f_{(t_j,s_j))} ) \psi \| \leq N^{-1/2} e^{e^{C \vert t_j \vert}} \| ( \cN  + 1)^{3/2} \psi \|\label{eq:aprox-action}
\end{align}
that we use now to estimate \eqref{eq:step2-1} by 
\begin{align}
\big\vert \langle  \Omega,  & \prod_{j=1}^m e^{i \lambda_j \phi_+ ( h_{t_j} ) }  \Omega \rangle - \langle  \Omega, \prod_{j=1}^m e^{i \lambda_j \phi_+ ( f_{(t_j,s_j)} ) }  \Omega \rangle \big\vert \notag \\
\leq&  N^{-1/2}  \int_0^1 ds \sum_{j=1}^m \lambda_j   e^{e^{C \vert t_j \vert}} \| ( \cN  + 1)^{3/2} e^{i (1-s) \lambda_j\phi_+ (f_{(t_j,s_j)})}  \left( \prod_{k=1}^{j-1} e^{i \lambda_k \phi_+ (f_{(t_k,s_k)})}  \right) \Omega \|  \; .
\end{align}
As in step 1 we conclude by \cite[Lemma 3.5]{BSS} that 
\begin{align}
\big\vert \langle  \Omega,  & \prod_{j=1}^m e^{i \lambda_j \phi_+ ( h_{t_j} ) }  \Omega \rangle - \langle  \Omega, \prod_{j=1}^m e^{i \lambda_j \phi_+ ( f_{(t_j,s_j)} ) }  \Omega \rangle \big\vert 
\leq C  N^{-1/2}   m e^{\sum_{j=1}^m \vert t_j \vert} \sum_{j=1}^m \lambda_j^3 \; . \label{eq:step2}
\end{align}

\subsection*{Step 3} In this step we show that we can replace the modified with the standard creation and annihilation operators, i.e. that 
\begin{align}
\phi_+ ( f_{(t_j,s_j)} ) \approx \widetilde{\phi}_+ ( f_{(t_j,s_j)} )
\end{align}
where $\widetilde{\phi}_+ ( f_{(t_j,s_j)} ) = a(f_{(t_j,s_j)}) + a^*(f_{(t_j,s_j)})$. Note that $\widetilde{\phi}_+$ is defined on the full bosonic Fock space, but $\phi_+$ on the truncated Fock space only. For this reason we 
\begin{align}
 \langle  \Omega,  & \prod_{j=1}^m e^{i \lambda_j \phi_+ ( f_{(t_j,s_j)} ) }  \Omega \rangle - \langle  \Omega, \prod_{j=1}^m e^{i \lambda_j \widetilde{\phi}_+ ( f_{(t_j,s_j)} ) }  \Omega \rangle \notag \\
=&  \langle  \Omega,   \prod_{j=1}^m e^{i \lambda_j \phi_+ ( f_{(t_j,s_j)} ) }  \Omega \rangle - \langle  \Omega, \prod_{j=1}^m e^{i \lambda_j\mathds{1}_{\mathcal{N} \leq N} \widetilde{\phi}_+ ( f_{(t_j,s_j)} ) \mathds{1}_{\mathcal{N} \leq N} } \Omega \rangle  \notag \\
&+ \langle  \Omega, \prod_{j=1}^m e^{i \lambda_j\mathds{1}_{\mathcal{N} \leq N} \widetilde{\phi}_+ ( f_{(t_j,s_j)} ) \mathds{1}_{\mathcal{N} \leq N} } \Omega \rangle - \langle  \Omega, \prod_{j=1}^m e^{i \lambda_j  \widetilde{\phi}_+ ( f_{(t_j,s_j)} ) } \Omega \rangle \label{eq:step30}
\end{align}
and show that the difference in the first resp. the difference in the second line are small. 
For the first, we proceed as in the previous steps  
\begin{align}
\langle  \Omega,  & \prod_{j=1}^m e^{i \lambda_j \phi_+ ( f_{(t_j,s_j)} ) }  \Omega \rangle - \langle  \Omega, \prod_{j=1}^m e^{i \lambda_j \mathds{1}_{\mathcal{N} \leq N} \widetilde{\phi}_+ ( f_{(t_j,s_j)} ) \mathds{1}_{\mathcal{N} } \leq N} \Omega \rangle \notag \\
=&  \int_0^1 ds \sum_{j=1}^m \lambda_j  \langle \Omega,  \left( \prod_{k=1}^{j-1} e^{i \lambda_k \phi_+ (f_{t_k,s_k})}  \right) e^{i (1-s) \lambda_j\phi_+ (f_{(t_j,s_j)})} \notag \\
& \hspace{1cm} \times \mathds{1}_{\mathcal{N} \leq N} \left(   \phi_+ ( f_{(t_j,s_j)} )  -\widetilde{\phi}_+ (f_{(t_j,s_j)}) \right) \mathds{1}_{\mathcal{N} \leq N}  e^{i s \lambda_j \widetilde{\phi}_+ ( f_{(t_j,s_j)} ) \mathds{1}_{\mathcal{N} \leq N}  }   \notag \\
& \hspace{7cm} \times \left( \prod_{\ell = j+1}^m e^{i\lambda_\ell  \mathds{1}_{\mathcal{N} \leq N} \widetilde{\phi}_+ ( f_{t_{\ell,s_\ell}} ) \mathds{1}_{\mathcal{N} \leq N} }\right) \Omega \rangle  \; .
\end{align}
Since 
\begin{align}
\label{eq:a-b}
\| \left(   \phi_+ ( f_{(t_j,s_j)} )  -\widetilde{\phi}_+ (f_{(t_j,s_j)}) \right)  \mathds{1}_{\mathcal{N} \leq N} \psi \| \leq& \| ( \sqrt{1- \cN /N} - 1) a((f_{(t_j,s_j)}) \mathds{1}_{\mathcal{N} \leq N}  \psi \| \notag \\
&+ \|a^*((f_{(t_j,s_j)})  ( \sqrt{1- \cN /N} - 1) \mathds{1}_{\mathcal{N} \leq N} \psi \|\notag \\
\leq& C N^{-1} e^{C \vert t_j \vert} \| ( \cN  + 1)^{3/2} \psi \| 
\end{align}
we conclude by \cite[Lemma 3.5]{BSS} that 
\begin{align}
\big\vert \langle  \Omega, &  \prod_{j=1}^m e^{i \lambda_j \phi_+ ( f_{(t_j,s_j)} ) }  \Omega \rangle - \langle  \Omega,\mathds{1}_{\mathcal{N} \leq N}  \prod_{j=1}^m e^{i \lambda_j \mathds{1}_{\mathcal{N} \leq N} \widetilde{\phi}_+ ( f_{(t_j,s_j)} )\mathds{1}_{\mathcal{N} \leq N} } \mathds{1}_{\mathcal{N} \leq N}  \Omega \rangle \big\vert \notag \\
&\leq C  N^{-1}   m e^{\sum_{j=1}^m \vert t_j \vert} \sum_{j=1}^m \lambda_j^3 \; . \label{eq:step3-1}  
\end{align}
For the second line of the r.h.s. of \eqref{eq:step30} we write 
\begin{align}
    \langle  \Omega, & \prod_{j=1}^m e^{i \lambda_j\mathds{1}_{\mathcal{N} \leq N} \widetilde{\phi}_+ ( f_{(t_j,s_j)} ) \mathds{1}_{\mathcal{N} \leq N} } \Omega \rangle - \langle  \Omega, \prod_{j=1}^m e^{i \lambda_j  \widetilde{\phi}_+ ( f_{(t_j,s_j)} ) } \Omega \rangle  \notag \\
     =& \int_0^t ds \sum_{j=1}^m \lambda_j \langle \Omega, \bigg( \prod_{k=1}^{j-1} e^{i \lambda_k \mathds{1}_{\mathcal{N} \leq N}\widetilde{\phi}_+ (f_{t_k,s_k} ) \mathds{1}_{\mathcal{N} \leq N}} \bigg) e^{i (1-s) \mathds{1}_{\mathcal{N} \leq N}\lambda_j \widetilde{\phi}_+ (f_{t_j,s_j})\mathds{1}_{\mathcal{N} \leq N}} \notag \\
   &  \hspace{1cm} \times \bigg( \mathds{1}_{\mathcal{N} \leq N}  \widetilde{\phi}_+ ( f_{(t_j,s_j)} )  \mathds{1}_{\mathcal{N} \leq N} -  \widetilde{\phi}_+ ( f_{(t_j,s_j)} ) \bigg)  e^{i s \lambda_j \widetilde{\phi}_+ ( f_{(t_j,s_j)} ) \mathds{1}_{\mathcal{N} \leq N}  }   \left( \prod_{\ell = j+1}^m e^{i\lambda_\ell   \widetilde{\phi}_+ ( f_{t_{\ell},s_\ell} )}\right) \Omega \rangle  \; .
\end{align}
Since 
\begin{align}
    \mathds{1}_{\mathcal{N} \leq N}  \widetilde{\phi}_+ ( f_{(t_j,s_j)} )  \mathds{1}_{\mathcal{N} \leq N} -  \widetilde{\phi}_+ ( f_{(t_j,s_j)} )  =  \mathds{1}_{\mathcal{N} > N}   \widetilde{\phi}_+ ( f_{(t_j,s_j)} ) +  \mathds{1}_{\mathcal{N} \leq  N} \widetilde{\phi}_+ ( f_{(t_j,s_j)} ) \mathds{1}_{\mathcal{N} > N} 
\end{align}
and $\mathds{1}_{\mathcal{N} > N} \leq \mathcal{N}/N$, we find 
\begin{align}
    \| \mathds{1}_{\mathcal{N} \leq N}  \widetilde{\phi}_+ ( f_{(t_j,s_j)} )  \mathds{1}_{\mathcal{N} \leq N} -  \widetilde{\phi}_+ ( f_{(t_j,s_j)} ) \psi \| \leq C N^{-1} e^{C \vert t_j \vert} \|( \mathcal{N} + 1)^{3/2} \psi \| \label{eq:diff-tildephi}
\end{align}
and we arrive with \cite[Lemma 3.5]{BSS} at 
\begin{align}
    \big\vert  \langle  \Omega, & \prod_{j=1}^m e^{i \lambda_j\mathds{1}_{\mathcal{N} \leq N} \widetilde{\phi}_+ ( f_{(t_j,s_j)} ) \mathds{1}_{\mathcal{N} \leq N} } \Omega \rangle - \langle  \Omega, \prod_{j=1}^m e^{i \lambda_j  \widetilde{\phi}_+ ( f_{(t_j,s_j)} ) } \Omega \rangle  \big\vert \leq  C  N^{-1}   m e^{\sum_{j=1}^m \vert t_j \vert} \sum_{j=1}^m \lambda_j^3 
\end{align}
that, together with \eqref{eq:step3-1} shows 
\begin{align}
    \big\vert  \langle  \Omega, & \prod_{j=1}^m e^{i \lambda_j \widetilde{\phi}_+ ( f_{(t_j,s_j)} ) } \Omega \rangle - \langle  \Omega, \prod_{j=1}^m e^{i \lambda_j  \phi_+ ( f_{(t_j,s_j)} ) } \Omega \rangle  \big\vert \leq  C  N^{-1}   m e^{\sum_{j=1}^m \vert t_j \vert} \sum_{j=1}^m \lambda_j^3  \; . \label{eq:step3}
\end{align}

\subsection*{Step 4} We compute the remaining expectation value with Baker Campbell Hausdorff 
\begin{align}
\langle  \Omega, & \prod_{j=1}^m e^{i \lambda_j \widetilde{\phi}_+ ( f_{(t_j,s_j)} ) }  \Omega \rangle \notag \\
&= e^{- \sum_{j=1}^m  \lambda_j^2 \| f_{(t_j,s_j)} \|^2/2}\langle  \Omega, e^{i  a^* ( f_{(t_1,s_1)} ) } e^{i  a ( f_{(t_1,s_1)} ) } \cdots e^{i  a^* ( f_{(t_m,s_m)} ) } e^{i  a ( f_{(t_m,s_m)} ) }  \Omega \rangle \notag \\
&= e^{- \sum_{j=1}^m \| f_{(t_j,s_j)} \|^2/2}\langle  \Omega,  e^{i  a ( f_{(t_1,s_1)} ) } \prod_{j=2}^{m-1} e^{i  a^* ( f_{(t_j,s_j)} ) } e^{i  a ( f_{(t_j,s_j)} ) }  e^{i  a^* ( f_{(t_m,s_m)} ) }  \Omega \rangle 
\end{align}
Using again Baker Campbell Hausdorff we find 
\begin{align}
\langle  & \Omega, e^{i  \phi_+ ( f_{t_1} ) } \cdots e^{i  \phi_+ (f_m )} \Omega \rangle  \notag \\
=& e^{- \sum_{j=1}^m \| f_{(t_j,s_j)} \|^2/2 - \sum_{j=2}^m \lambda_1 \lambda_j \langle f_{(t_1,s_1)}, f_{(t_j,s_j)}\rangle} \langle  \Omega,  e^{i  a ( f_{(t_2,s_2)} ) } \prod_{j=3}^{m-1} e^{i  a^* ( f_{(t_j,s_j)} ) } e^{i  a ( f_{(t_j,s_j)} ) }  e^{i  a^* ( f_{(t_m,s_m)} ) }  \Omega \rangle \notag \\
=& e^{- \sum_{j=1}^m \| f_{(t_j,s_j)} \|^2/2 - \sum_{k=1}^{m-1} \sum_{j=k+1}^m \lambda_j \lambda_k \langle f_{(t_k,s_k)}, f_{(t_j,s_j)}\rangle} \notag\\
=& e^{- \sum_{i,j =1}^m \Sigma_{i,j} \lambda_i \lambda_j/2}\label{eq:step4}
\end{align}
where $\Sigma \in \mathbb{R}^{m \times m}$ is given by \eqref{def:Sigma}. 

Combining \eqref{eq:step1},\eqref{eq:step2}, \eqref{eq:step3} and \eqref{eq:step4}, Theorem \ref{thm:MCLT} follows. 
\end{proof}

\section{Proof of Theorem \ref{thm:corrfct}}
\label{sec:corr}

As mentioned in the introduction, the ideas of the proof of Theorem \ref{thm:corrfct} are similar to those of the Theorem \ref{thm:MCLT}.  

\begin{proof}[Proof of Theorem \ref{thm:corrfct}]
Accordingly, we rewrite the correlation function in the bosonic Fock space 
\begin{align}
\mathbb{E} \big[ \prod_{i=1}^m (\mathcal{A}_{t_i}/ \sqrt{N} )\big] = \langle \psi_N, \prod_{i=1}^m d\Gamma ( A_{t_i} ) / \sqrt{N } \psi_N \rangle  
\end{align}
in the first step using the excitation map \eqref{def:U} as 
\begin{align}
\mathbb{E} \big[ \prod_{i=1}^m (\mathcal{A}_{t_i}/ \sqrt{N} )\big]  = \langle \mathcal{U}_{N,0}^* \psi_N, \prod_{i=1}^m \left( \mathcal{U}_{N,0}^* d\Gamma ( A_{t_i} )   \mathcal{U}_{N,0} / \sqrt{N} \right)  \mathcal{U}_{N,0}^* \psi_N \rangle  \; . 
\end{align}
With the definition of the fluctuation map \eqref{def:flucdyn} we can rewrite the observables, similarly as in \eqref{eq:B} as 
\begin{align}
 \mathcal{U}_{N,0}^* N^{-1/2} d\Gamma ( A_{t_i} ) \mathcal{U}_{N,0}  \mathcal{U}_{N,0}^* = \mathcal{W}_N(t_i;0 )^* \left( \phi_+ (q_{t_i} A \varphi_{t_i} ) + N^{-1/2} \widetilde{A}_{t_i} \right) \mathcal{W}_N(t_i;0 ) 
\end{align}
where we set $\widetilde{A}_{t_i} = d \Gamma \left( q_{t_i} (A-\langle \varphi_{t_i}, \;  A \varphi_{t_i} \rangle )q_{t_i}\right)$. Since $\mathcal{U}_{N,0}^*\psi_N = \Omega$ we are thus left with computing the expectation value 
\begin{align}
\mathbb{E} \big[ \prod_{i=1}^m (\mathcal{A}_{t_i}/ \sqrt{N} )\big]  = \langle \Omega, \prod_{i=1}^m \left( \mathcal{W}_N(t_i;0 )^* \left( \phi_+ (q_{t_i} A \varphi_{t_i} ) + N^{-1/2} \widetilde{A}_{t_i} \right) \mathcal{W}_N(t_i;0 )  \right) \Omega \rangle \;  
\end{align}
for which we follow the lines of the proof of Theorem \ref{thm:MCLT}.  

\subsection*{Step 1} As in the first step of the proof of Theorem \ref{thm:MCLT}, we start with showing that the operator $N^{-1/2} \widetilde{A}_{t_i}$ is negligible. We write 
\begin{align}
 \langle \Omega,  & \prod_{i=1}^m \left( \mathcal{W}_N(t_i;0 )^* \left( \phi_+ (q_{t_i} A \varphi_{t_i} ) + N^{-1/2} \widetilde{A}_{t_i} \right) \mathcal{W}_N(t_i;0 )  \right) \Omega \rangle \notag \\
 &  - \langle \Omega, \prod_{i=1}^m \left( \mathcal{W}_N(t_i;0 )^*  \phi_+ (q_{t_i} A \varphi_{t_i} )   \mathcal{W}_N(t_i;0 )  \right) \Omega \rangle  \notag \\
 & = N^{-1/2} \sum_{k=1}^m \langle \Omega, \prod_{i=1}^{k-1}  \mathcal{W}_N(t_i;0 )^* \left( \phi_+ (q_{t_i} A \varphi_{t_i} ) + N^{-1/2} \widetilde{A}_{t_i} \right) \mathcal{W}_N(t_i;0 )^* \notag \\
 & \hspace{2cm}  \times \mathcal{W}_N(t_k;0 )^* \widetilde{A}_{t_k} \mathcal{W}_N(t_k;0 )^*  \; \prod_{j=k+1}^{m}  \mathcal{W}_N(t_j;0 )^*  \phi_+ (q_{t_j} A \varphi_{t_j} )  \mathcal{W}_N(t_j;0 )^* \Omega \rangle \; . 
\end{align}
On the one hand, from \eqref{eq:B-approx} and Lemma \ref{lemma:WN} we have 
\begin{align}
\| \widetilde{A}_{t_k} & \mathcal{W}_N(t_k;0 )^*  \; \prod_{j=k+1}^{m}  \mathcal{W}_N(t_j;0 )^*  \phi_+ (q_{t_j} A \varphi_{t_j} )  \mathcal{W}_N(t_j;0 )^* \Omega \| \notag \\
& \leq C \| ( \mathcal{N} + 1) \mathcal{W}_N(t_k;0 )^*   \prod_{j=k+1}^{m}  \mathcal{W}_N(t_j;0 )^*  \phi_+ (q_{t_j} A \varphi_{t_j} )  \mathcal{W}_N(t_j;0 )^* \Omega \| \; \notag \\
&\leq C e^{C \vert t_k \vert + \vert t_{k+1}\vert } \notag \\
& \; \times  \| ( \mathcal{N} + 1) \phi_+ (q_{t_{k+1}} A \varphi_{t_{k+1}} )  \mathcal{W}_N(t_{k+1};0 )^* \prod_{j=k+2}^{m}  \mathcal{W}_N(t_j;0 )^*  \phi_+ (q_{t_j} A \varphi_{t_j} )  \mathcal{W}_N(t_j;0 )^* \Omega \| \;. 
\end{align} 
Furthermore for any $\psi \in \mathcal{F}^{\leq N}$ we have from the commutation relations 
\begin{align}
\| ( \mathcal{N} + 1) \phi_+ (h) \psi \| &\leq \| ( \mathcal{N} + 1) b^*(h) \psi \| + \| ( \mathcal{N}+  1) b(h) \psi \| \notag \\
&= \|  b^*(h) ( \mathcal{N} + 2)\psi  \| + \|  b(h)  \mathcal{N} \psi \| \notag \\
&\leq C \| h \|_2 \| ( \mathcal{N} + 2)^{3/2} \psi \| \; \label{eq:N-phi+}
\end{align}
that leads with Lemma \ref{lemma:WN} to 
\begin{align}
\| \widetilde{A}_{t_k} & \mathcal{W}_N(t_k;0 )^*  \; \prod_{j=k+1}^{m}  \mathcal{W}_N(t_j;0 )^*  \phi_+ (q_{t_j} A \varphi_{t_j} )  \mathcal{W}_N(t_j;0 )^* \Omega \| \notag \\
&\leq C e^{C \vert t_k \vert + \vert t_{k+1}\vert } \| ( \mathcal{N} + 1)^{3/2} \prod_{j=k+2}^{m}  \mathcal{W}_N(t_j;0 )^*  \phi_+ (q_{t_j} A \varphi_{t_j} )  \mathcal{W}_N(t_j;0 )^* \Omega \| \;. 
\end{align} 
We use \eqref{eq:N-phi+} and Lemma \ref{lemma:WN} iteratively for all the factors of the product and end up with 
\begin{align}
\| \widetilde{A}_{t_k} & \mathcal{W}_N(t_k;0 )^*  \; \prod_{j=k+1}^{m}  \mathcal{W}_N(t_j;0 )^*  \phi_+ (q_{t_j} A \varphi_{t_j} )  \mathcal{W}_N(t_j;0 )^* \Omega \| \notag \\
& \leq C_m e^{C \sum_{i=k}^m \vert t_i \vert } \| ( \mathcal{N} + 1 )^{(m-k+1)/2} \Omega \| \leq  C_m e^{C \sum_{i=k}^m \vert t_i \vert }  \; . 
\end{align}
On the other hand the same arguments show 
\begin{align}
\| \prod_{i=1}^{k-1}  \mathcal{W}_N(t_i;0 )^* \left( \phi_+ (q_{t_i} A \varphi_{t_i} ) + N^{-1/2} \widetilde{A}_{t_i} \right) \mathcal{W}_N(t_i;0 )^*\Omega \| \leq C_k e^{\sum_{i=1}^{k-1} \vert t_i \vert} \; . 
\end{align}
Thus with Cauchy Schwarz 
\begin{align}
\big\vert & \langle \Omega,   \prod_{i=1}^m \left( \mathcal{W}_N(t_i;0 )^* \left( \phi_+ (q_{t_i} A \varphi_{t_i} ) + N^{-1/2} \widetilde{A}_{t_i} \right) \mathcal{W}_N(t_i;0 )  \right) \Omega \rangle \notag \\
 & \hspace{1cm}  - \langle \Omega, \prod_{i=1}^m \left( \mathcal{W}_N(t_i;0 )^*  \phi_+ (q_{t_i} A \varphi_{t_i} )   \mathcal{W}_N(t_i;0 )  \right) \Omega \rangle  \big\vert \notag  
 \leq C_m N^{-1/2} e^{\sum_{i=1}^{m } \vert t_i \vert} \; . 
\end{align}

\subsection*{Step 2} In the second step we compute the approximate action of the fluctuation dynamics $\mathcal{W}_N(t_i, 0)$ on the observable $\phi_+ (q_{t_i} A \varphi_{t_i})$. For this we recall the definition $f_{(t_i0)} = \mathcal{L}_{(t_i; 0)} h_{t_i}$ with $h_{t_i} =q_{t_i} A \varphi_{t_i} $ that lead to 
\begin{align}
\langle \Omega, & \prod_{i=1}^m  \mathcal{W}_N(t_i;0 )^*  \phi_+ (q_{t_i} A \varphi_{t_i} )   \mathcal{W}_N(t_i;0 )   \Omega \rangle -\langle \Omega,  \prod_{i=1}^m  \phi_+ (f_{(t_i;0)} ) \Omega \rangle \notag \\
=& \sum_{k=1}^m \langle \Omega,  \prod_{i=1}^{k-1}  \mathcal{W}_N(t_i;0 )^*  \phi_+ (h_{t_i} )   \mathcal{W}_N(t_i;0 ) \notag \\
& \hspace{2cm} \times \left(  \mathcal{W}_N(t_k;0 )^*  \phi_+ (h_{t_k} )   \mathcal{W}_N(t_k;0 ) -  \phi_+ (f_{(t_k;0)} \right)  \prod_{j=k+1}^m  \phi_+ (f_{(t_j;0)} ) \Omega \rangle \; . 
\end{align}
From \eqref{eq:aprox-action} we find 
\begin{align}
\|&  \left(  \mathcal{W}_N(t_k;0 )^*  \phi_+ (h_{t_k} )   \mathcal{W}_N(t_k;0 ) -  \phi_+ (f_{(t_k;0)} \right)  \prod_{j=k+1}^m  \phi_+ (f_{(t_j;0)} ) \Omega  \| \notag \\
&\leq C N^{-1/2} e^{ e^{C \vert t_k \vert }} \| (\mathcal{N} + 1)^{3/2}  \prod_{j=k+1}^m  \phi_+ (f_{(t_j;0)} ) \Omega \| \; . 
\end{align}
Since $\| f_{(t_j;0)} \|_2 \leq C e^{C \vert t_j \vert } $ from Lemma \ref{lemma:f}, we thus find using again \eqref{eq:N-phi+} 
\begin{align}
\| & \left(  \mathcal{W}_N(t_k;0 )^*  \phi_+ (h_{t_k} )   \mathcal{W}_N(t_k;0 ) -  \phi_+ (f_{(t_k;0)} \right)  \prod_{j=k+1}^m  \phi_+ (f_{(t_j;0)} ) \Omega  \| \notag \\
&\leq C_m N^{-1/2} e^{ e^{C \vert t_k \vert }} e^{\sum_{i=k+1}^m \vert t_i \vert }  \; . 
\end{align}
Similar calculations as in the first step show 
\begin{align}
\| \prod_{i=1}^{k-1} \mathcal{W}_N(t_i;0 )^*  \phi_+ (h_{t_i} )   \mathcal{W}_N(t_i;0 ) \Omega \| \leq C_k e^{C \sum_{i=1}^{k-1} \vert t_i \vert}   
\end{align}
that lead to 
 \begin{align}
 \vert \langle \Omega, & \prod_{i=1}^m  \mathcal{W}_N(t_i;0 )^*  \phi_+ (q_{t_i} A \varphi_{t_i} )   \mathcal{W}_N(t_i;0 )   \Omega \rangle -\langle \Omega,  \prod_{i=1}^m  \phi_+ (f_{(t_i;0)} ) \Omega \rangle \vert  \notag \\
 \leq& C_m  N^{-1/2} e^{C \sum_{i=1}^m e^{C \vert t_i \vert }} \; . 
 \end{align}
 
\subsection*{Step 3} In the third step we replace the modified creation and annihilation operators with standard ones. We recall the definition $\widetilde{\phi}_+ (h) = a(h) + a^*(h)$ for all $h \in L^2( \mathbb{R}^3)$ and write (with similar arguments as in \eqref{eq:step30}
\begin{align}
\langle \Omega, & \prod_{i=1}^m   \phi_+ ( f_{(t_i;0)} )   \Omega \rangle -\langle \Omega,  \prod_{i=1}^m  \widetilde{\phi}_+ (f_{(t_i;0)} ) \Omega \rangle \notag \\
=&  \langle \Omega,  \prod_{i=1}^m   \phi_+ ( f_{(t_i;0)} )   \Omega \rangle -\langle \Omega,  \prod_{i=1}^m \big( \mathds{1}_{\mathcal{N} \leq N} \widetilde{\phi}_+ (f_{(t_i;0)} ) \mathds{1}_{\mathcal{N} \leq N} \big) \Omega \rangle \notag \\
&+ \langle \Omega,  \prod_{i=1}^m \big( \mathds{1}_{\mathcal{N} \leq N} \widetilde{\phi}_+ (f_{(t_i;0)} ) \mathds{1}_{\mathcal{N} \leq N} \big) \Omega \rangle - \langle \Omega,  \prod_{i=1}^m   \widetilde{\phi}_+ ( f_{(t_i;0)} )   \Omega \rangle \label{eq:step301}
\end{align}
and show that the differences in the first and second line of the r.h.s. of \eqref{eq:step301} are small. For the terms in the first line we have 
\begin{align}
\langle \Omega, & \prod_{i=1}^m   \phi_+ ( f_{(t_i;0)} )   \Omega \rangle -\langle \Omega,  \prod_{i=1}^m  \big( \mathds{1}_{\mathcal{N} \leq N} \widetilde{\phi}_+ (f_{(t_i;0)} ) \mathds{1}_{\mathcal{N} \leq N} \big)  \Omega \rangle \notag \\
=& \sum_{k=1}^m \langle \Omega,  \prod_{i=1}^{k-1}    \phi_+ ( f_{(t_i;0)} ) \mathds{1}_{\mathcal{N} \leq N}\left(  \phi_+ ( f_{(t_k;0)}) -  \widetilde{\phi}_+ (f_{(t_k;0)} ) \right) \mathds{1}_{\mathcal{N} \leq N} \prod_{j=k+1}^m   \big( \mathds{1}_{\mathcal{N} \leq N} \widetilde{\phi}_+ (f_{(t_i;0)} )  \big)   \Omega \rangle \; . 
\end{align}
With \eqref{eq:a-b} we find 
\begin{align}
\| &\left(  \phi_+ ( f_{(t_k;0)}) -  \widetilde{\phi}_+ (f_{(t_k;0)} ) \right)  \prod_{j=k+1}^m  \big( \mathds{1}_{\mathcal{N} \leq N} \widetilde{\phi}_+ (f_{(t_j;0)} ) \mathds{1}_{\mathcal{N} \leq N} \big) \Omega  \|   \notag \\
&\quad \leq  CN^{-1/2} e^{C \vert t_k \vert} \| ( \mathcal{N} + 1)^{3/2} \prod_{j=k+1}^m  \big( \mathds{1}_{\mathcal{N} \leq N} \widetilde{\phi}_+ (f_{(t_j;0)} ) \mathds{1}_{\mathcal{N} \leq N} \big) \Omega \| \; . 
\end{align}
We remark that \eqref{eq:N-phi+} holds true for $\widetilde{\phi}_+$, too, as the commutation relation of $a^*(h),a(h)$ with $\mathcal{N}$ are similar to the commutation relations of  $b^*(h),b(h)$ with $\mathcal{N}$. Thus we find that 
\begin{align}
\| &\left(  \phi_+ ( f_{(t_k;0)}) -  \widetilde{\phi}_+ (f_{(t_k;0)} ) \right)  \prod_{j=k+1}^m  \big( \mathds{1}_{\mathcal{N} \leq N} \widetilde{\phi}_+ (f_{(t_j;0)} ) \mathds{1}_{\mathcal{N} \leq N} \big)  \Omega  \|  \leq  C_m N^{-1/2} e^{C  \sum_{j=k}^m \vert t_j \vert}  
\end{align}
and again from \eqref{eq:N-phi+} 
\begin{align}
\| \prod_{i=1}^{k-1}    \phi_+ ( f_{(t_i;0)}  )\Omega \| \leq C_k e^{C  \sum_{j=1}^{k-1} \vert t_j \vert}  \;  
\end{align}
that all together leads to 
\begin{align}
\vert \langle \Omega, & \prod_{i=1}^m   \phi_+ ( f_{(t_i;0)} )   \Omega \rangle -\langle \Omega,  \prod_{i=1}^m  \big( \mathds{1}_{\mathcal{N} \leq N} \widetilde{\phi}_+ (f_{(t_i;0)} ) \mathds{1}_{\mathcal{N} \leq N} \big)  \Omega \rangle \vert \leq C_m N^{-1/2} e^{C  \sum_{j=1}^{m} \vert t_j \vert}\label{eq:step311}
\end{align}
For the second line of the r.h.s. of \eqref{eq:step301} we find 
\begin{align}
\langle \Omega, & \prod_{i=1}^m  \big( \mathds{1}_{\mathcal{N} \leq N} \widetilde{\phi}_+ (f_{(t_i;0)} ) \mathds{1}_{\mathcal{N} \leq N} \big)  \Omega \rangle -\langle \Omega,  \prod_{i=1}^m   \widetilde{\phi}_+ ( f_{(t_i;0)} )   \Omega \rangle  \notag \\
=& \sum_{k=1}^m \langle \Omega,  \prod_{i=1}^{k-1}    \widetilde{\phi}_+ ( f_{(t_i;0)} ) \left( \mathds{1}_{\mathcal{N} \leq N}  \widetilde{\phi}_+ ( f_{(t_k;0)}) \mathds{1}_{\mathcal{N} \leq N} -  \widetilde{\phi}_+ (f_{(t_k;0)} ) \right)  \prod_{j=k+1}^m   \big( \mathds{1}_{\mathcal{N} \leq N} \widetilde{\phi}_+ (f_{(t_i;0)} ) \mathds{1}_{\mathcal{N} \leq N} \big)   \Omega \rangle \; . 
\end{align}
we find with \eqref{eq:diff-tildephi} and similar arguments as before 
\begin{align}
    \big\vert \langle \Omega, & \prod_{i=1}^m  \big( \mathds{1}_{\mathcal{N} \leq N} \widetilde{\phi}_+ (f_{(t_i;0)} ) \mathds{1}_{\mathcal{N} \leq N} \big)  \Omega \rangle -\langle \Omega,  \prod_{i=1}^m   \widetilde{\phi}_+ ( f_{(t_i;0)} )   \Omega \rangle \big\vert 
    \leq  C_m N^{-1} e^{C  \sum_{j=1}^{m} \vert t_j \vert}
\end{align}
yielding with \eqref{eq:step311} to 
\begin{align}
    \vert \langle \Omega, & \prod_{i=1}^m   \phi_+ ( f_{(t_i;0)} )   \Omega \rangle -\langle \Omega,  \prod_{i=1}^m   \widetilde{\phi}_+ (f_{(t_i;0)} )   \Omega \rangle \vert \leq C_m N^{-1/2} e^{C  \sum_{j=1}^{m} \vert t_j \vert} \; . 
\end{align}

\subsection*{Step 4} We are left with computing 
\begin{align}
\langle \Omega,  \prod_{i=1}^m  \widetilde{\phi}_+ (f_{(t_i;0)} ) \Omega \rangle = \langle \Omega,  \prod_{i=1}^m  \left( a^* (f_{(t_i;0)} ) + a (f_{(t_i;0)} ) \right) \Omega \rangle \; . 
\end{align}
First we observe that for odd $m$ this is a sum of expectation values each of an odd number of creation and annihilation operators and that, thus, all vanish and \eqref{eq:odd}. Second, assume that $m$ is even. With the commutation relations we find 
\begin{align}
\langle \Omega,  \prod_{i=1}^m  \widetilde{\phi}_+ (f_{(t_i;0)} ) \Omega \rangle  =& \langle \Omega,  a (f_{(t_1;0)}) \prod_{i=2}^{m-1}  \widetilde{\phi}_+ (f_{(t_i;0)} ) a^*( f_{(t_m;0)} ) \Omega \rangle \notag \\
=& \sum_{ \pi \in \Pi_m} ( \Sigma_{t_1, \dots, t_m})_{\pi_1, \pi_2} \cdots ( \Sigma_{t_1, \dots, t_m})_{\pi_m, \pi_{m-1}}
\end{align}
where the sum runs over the set $\Pi_m :=  \lbrace  \pi \in \mathcal{S}_m \vert \pi_{2i -1} < \pi_{2i} \; \quad \text{for all} \; 1 \leq i \leq m/2 \quad \text{and} \; \pi_{2i-1} < \pi_{2i+1} \quad \text{for all} \; 1 \leq i \leq (m-1)/2 \rbrace $, i.e. over all permutations $\pi = ( \pi_1, \dots \pi_m)$ of the set $\lbrace 1, \dots, m \rbrace$ such that $\pi_{2i -1} < \pi_{2i}$ for all $1 \leq i \leq m/2 $ and $\pi_{2i-1} < \pi_{2i+1}$ for all $1 \leq i \leq (m-1)/2$ and the constants $(\Sigma_{t_1, \dots, t_m})_{i,j}$ are given by \eqref{def:Sigma}.\end{proof}

\section{Proof of Theorem \ref{thm:OTOC}} 
\label{sec:OTOC}

As explained in the introduction we use similar techniques as in the proof of the other two Theorems \ref{thm:MCLT} and Theorem \ref{thm:corrfct} in Sections \ref{sec:MCLT} resp. Section \ref{sec:corr} to prove Theorem \ref{thm:OTOC} in this section.

\begin{proof}[Proof of Theorem \ref{thm:OTOC}]
We recall that by definition \eqref{def:Atj} 
\begin{align}
A_t^{(j)} = e^{iH_Nt} \left( A^{(j)} - \langle \varphi_t, A \varphi_t \rangle \right) e^{-iH_N t}
\end{align}
and start with the observation that by symmetry of $\psi_N$ we have 
\begin{align}
 \langle \psi_N, &  (i[A_t^{(1)} , (N-1)B^{(2)}_0 + B_0^{(1)} ])^2 \psi_N \rangle \notag \\
=& -\langle \psi_N,  \Big( (N-1)^2  ([ A_t^{(1)}, B^{(2)}_0 ] )^2+ ([ A_t^{(1)}, B^{(1)}_0 ] )^2 ] \Big) \psi_N \rangle  \notag \\
 &- \langle \psi_N,  \Big(   (N-1) [ A_t^{(1)}, B^{(2)}_0 ][ A_t^{(1)}, B^{(1)}_0 ] - (N-1) [ A_t^{(1)}, B^{(1)}_0 ][ A_t^{(1)}, B^{(2)}_0 ]\Big) \psi_N \rangle \notag \\
=& \frac{(N-1)^2}{N^2(N-1)^2} \sum_{i\not= j, k \not=\ell }^N \langle \psi_N, [A^{(j)}_t, B^{(i)}_0] \; [B^{(k)}_0, A^{(\ell)}_t] \psi_N \rangle\\
&+ \frac{1}{N^2} \sum_{i,k=1 }^N \langle \psi_N,  [A^{(i)}_t, B^{(i)}_0 ] \;  [B^{(k)}_0, A^{(k)}_t ])^2 \psi_N \rangle \notag \\
&+ \frac{(N-1)}{N^2(N-1)} \sum_{i\not= j, k,\ell =1 }^N \langle \psi_N, [A^{(j)}_t, B^{(i)}_0] \; [B^{(k)}_0, A^{(\ell)}_t] \psi_N \rangle \notag \\
&+ \frac{(N-1)}{N^2(N-1)} \sum_{k\not= \ell, i,j=1 }^N \langle \psi_N, [A^{(j)}_t, B^{(i)}_0] \; [B^{(k)}_0, A^{(\ell)}_t] \psi_N \rangle  \notag \\
=&\frac{1}{N^2} \sum_{i,j,k, \ell =1}^N  \langle \psi_N, (i [A^{(j)}_t, B^{(i)}_0]\;  [B_0^{(k)}, A_t^{(\ell)}] \psi_N \rangle  \notag \\
=& \frac{1}{N^2}\langle \psi_N, [d\Gamma (A_t), d\Gamma (B_0)] \; [d\Gamma (B_0), d\Gamma (A_t)]\psi_N \rangle \; . \label{eq:square}
\end{align}
Thus the goal in the following is to approximately compute expectation values of the form
\begin{align}
\frac{1}{N^2} \langle \psi_N, d \Gamma (A_{1,t_1}) d\Gamma (B_{t_2}) d \Gamma (D_{t_3}) d\Gamma (E_{t_4}) \psi_N \rangle \; 
\end{align}
where $A,B,D,E: L^2( \mathbb{R}^3) \rightarrow L^2( \mathbb{R}^3)$ denote self-adjoint operators satisfying \eqref{ass:A} and the operators $A_{t_1}, B_{t_2}, D_{t_3}, E_{t_4}$ are defined as in \eqref{def:Atj} for $t_i \in \mathbb{R}$. (Notice that at this point it would be easier to use the functoriality property $[d\Gamma(A),d\Gamma(B)]=d\Gamma([A,B])$, but we partly view the proof as warm-up for the proof of the more general Theorem \ref{thm:corrfct} later on. For this, the following calculation is more instructive.)
For this we define 
\begin{align}
f^{(1)}_{(t,0)} := \mathcal{L}_{(t,0)} q_t A \varphi_t, \quad f^{(2)}_{(t;0)} := \mathcal{L}_{(t,0)} q_t B \varphi_t, \quad f^{(3)}_{(t;0)} := \mathcal{L}_{(t,0)} q_t D \varphi_t, \quad f^{(4)}_{(t;0)} := \mathcal{L}_{(t,0)} q_t E \varphi_t
\end{align}
with $\mathcal{L}_{(t;0)}$ given by \eqref{def:Ls} and find from 
Theorem \ref{thm:corrfct} 
\begin{align}\label{eq:rateref}
\Big \vert &\langle \psi_N,  N^{-2} d \Gamma (A_{t_1}) d\Gamma (B_{t_2}) d \Gamma (D_{t_3}) d\Gamma (E_{t_4})\psi_N \rangle \notag \\ 
&\quad    -\langle f_{(t_1,0)}, f_{(t_4,0)} \rangle  \langle f_{(t_2,0)}, f_{(t_3,0)} \rangle  - \langle f_{(t_1,0)}, f_{(t_2,0)} \rangle  \langle f_{(t_3,0)}, f_{(t_4,0)} \rangle \notag \\
&	\hspace{7cm} -  \langle f_{(t_1,0)}, f_{(t_3,0)} \rangle  \langle f_{(t_2,0)}, f_{(t_4,0)} \rangle  \Big\vert  \notag \\
&\leq C N^{-1/2}  e^{e^{ C( \vert t_1 \vert + \vert t_2 \vert + \vert t_3 \vert + \vert t_4 \vert) }}  \; . 
\end{align}
In particular we get from \eqref{eq:square}
\begin{align}
\label{eq:rate2}
 \langle \psi_N, & i \Big(   (i[ A_t^{(1)}, (N-1) B^{(2)}_0+ B_0^{(1)} ])^2 \Big) \psi_N \rangle \notag\\
 =&  \langle \psi_N, [d\Gamma (A_t), d\Gamma (B_0)] \; [d\Gamma (B_0), d\Gamma (A_t)]\psi_N \rangle \; \notag \\
=& \langle \psi_N, \left[ d\Gamma (A_t) d\Gamma (B_0) d\Gamma (B_0) d\Gamma (A_t)  - d\Gamma (A_s) d\Gamma (A_t) d\Gamma (A_s) d\Gamma (A_t) \right]  \psi_N \rangle \notag \\
&- \langle \psi_N, \left[ d\Gamma (A_t) d\Gamma (B_0) d\Gamma (A_t) d\Gamma (B_0)  - d\Gamma (B_0) d\Gamma (A_t) d\Gamma (A_t) d\Gamma (B_0) \right]  \psi_N \rangle  
\end{align}
and thus denoting in abuse of notation we set 
\begin{align}
\label{def:f,g}
g_0 :=  q_0 B \varphi_0, \quad f_t := f_{(t,0)} = \mathcal{L}_{(t;0)} q_t O \varphi_t 
\end{align}
and find 
\begin{align}
 \lim_{N \rightarrow \infty} \langle \psi_N, & i \Big(   (i[ A_t^{(1)}, (N-1) B^{(2)}_0+ B_0^{(1)} ])^2 \Big) \psi_N \rangle \notag\\
 =&  \langle f_{t}, g_{0} \rangle  \langle g_{0}, f_{t} \rangle + \| f_t \|^2 \| g_0 \|^2 + \langle f_{t}, g_{0} \rangle  \langle g_{0}, f_{t} \rangle  \notag \\
  &-  \langle g_{0}, f_{t} \rangle  \langle g_{0}, f_{t}\rangle -  \langle g_{0}, f_{t} \rangle  \langle f_{t}, g_{0} \rangle - \| g_0 \|^2 \| f_t \|^2 \notag \\
 &-  \langle f_{t}, g_{0} \rangle  \langle f_{t}, g_{0} \rangle  - \langle f_{t}, g_{0}\rangle \langle g_{0}, f_{t}\rangle   - \| g_0 \|^2 \| f_t \|^2\notag \\
 &+ \langle g_{0}, f_{t} \rangle  \langle f_{t}, g_{0} \rangle + \langle f_{t}, g_{0} \rangle  \langle g_{0}, f_{t} \rangle + \| f_t \|^2 \| g_0 \|^2\notag \\
 =& 2  \vert  \langle g_{0}, f_{t} \rangle \vert^2 - 2 \Re \langle g_{0}, f_{t} \rangle^2 \notag \\
 =& 2 (\Im \langle g_{0}, f_{t} \rangle )^2   \; . 
\end{align}
Thus, in order to conclude with Theorem \ref{thm:OTOC} it remains to show that  
\begin{align}
\label{eq:U,V-f}
\Im \langle g_{0}, f_{t} \rangle = -\frac{i}{2} \langle ( \varphi_0,  J \varphi_0),  S \left[ B ,  \Theta (t;0) A\Theta (t;0)^{-1} \right] (\varphi_0,  J  \varphi_0) \rangle_{L^2 \oplus L^2 }  
\end{align}
where $\Theta(t;s)$ is given by \eqref{def:Theta}. To this end we observe that by definition of $g_0, f_t$ in \eqref{def:f,g} we have 
\begin{align}
\label{eq:refer}
\Im \langle g_{0}, f_{t} \rangle  = \langle q_0 B \varphi_0,  \mathcal{L}_{(t,0)}    \; A q_t \varphi_t \rangle 
\end{align} 
 where $\mathcal{L}_{(t;0)}$ satisfies the first identity of \eqref{def:Ls}. As a first step we write 
\begin{align}
 2 \Im \langle g_{0}, f_{(t;0)} \rangle 
 =&2 \Im \langle B \varphi_0, \mathcal{L}_{(t,0)}  A \varphi_t \rangle  \notag \\
 &- 2  \langle \varphi_0, B \varphi_0 \rangle \; \Im \langle \varphi_0,\mathcal{L}_{(t,0)}    A \varphi_t \rangle \notag \\
 &- 2 \langle \varphi_t, A \varphi_t \rangle \; \Im \langle B \varphi_0, \mathcal{L}_{(t,0)}  \varphi_t \rangle \notag \\
 & +  \langle \varphi_0, B \varphi_0 \rangle \;\langle \varphi_t, A \varphi_t \rangle \;  \Im \langle \varphi_0, \mathcal{L}_{(t),0}  \varphi_t \rangle 
 \label{eq:splitting-Im}
\end{align}
and the last three lines vanish as we explain in the following. To see this we write the vector $ (\mathcal{L}_{(t,0)}   A\varphi_t, J \mathcal{L}_{(t,0)}  A \varphi_t)  \in L^2( \mathbb{R}^3) \oplus L^2( \mathbb{R}^3)$ as 
\begin{align}
\label{eq:fst-anders}
 (\mathcal{L}_{(t,0)}  A\varphi_t, J \mathcal{L}_{(t,0)}   A \varphi_t) = \Theta(t;0) ( A\varphi_t, J A \varphi_t) \; .
\end{align}
Next we use this formalism to show that the last three lines of the r.h.s. of \eqref{eq:splitting-Im} vanish. For the second term of \eqref{eq:splitting-Im} we write 
\begin{align}
2 \Im \langle \varphi_0, \mathcal{L}_{(t;0)} A \varphi_t \rangle =& -i \langle ( \varphi_0, - J \varphi_0) , \;  ( \mathcal{L}_{(t;0)}A \varphi_t, J \mathcal{L}_{(t;0)} A \varphi_t ) \rangle_{L^2 \oplus L^2 } \notag \\
=& -i \langle ( \varphi_0, - J \varphi_0) , \;  \Theta (t;0) (A \varphi_t, J A \varphi_t ) \rangle_{L^2 \oplus L^2 }
\end{align}
that with \eqref{eq:prop-Theta} becomes 
\begin{align}
2 \Im \langle \varphi_0, \mathcal{L}_{(t;0)}  A \varphi_t \rangle =& -i  \langle \Theta (t;0)^* ( \varphi_0, - J \varphi_0) , \;   (A \varphi_t, JA \varphi_t ) \rangle_{L^2 \oplus L^2 } \notag \\
=& -i\langle \Theta (t;0)^{-1}   ( \varphi_0, J \varphi_0) , \;   (A \varphi_t, - JA \varphi_t ) \rangle_{L^2 \oplus L^2 } \notag \\
=& -i  \langle ( \varphi_t,  J \varphi_t) , \;   (A \varphi_t, -JA \varphi_t ) \rangle_{L^2 \oplus L^2 } \notag \\
=&  2 \Im \langle \varphi_t, A \varphi_t \rangle \notag \\
=& 0 \; \label{eq:term1}
\end{align}
where we concluded by \eqref{eq:prop-bogo}. For the second term we proceed analogously and calculate 
\begin{align}
2 \Im \langle B \varphi_0, \mathcal{L}_{(t;0)} \varphi_t \rangle =& -i  \langle   ( B\varphi_0, - JB \varphi_0) , \;  \Theta (t;0) ( \varphi_t, J \varphi_t ) \rangle_{L^2 \oplus L^2 } \notag \\
=&-i  \langle   ( B\varphi_0, - J B\varphi_0) , \;   (\varphi_0, J \varphi_0 ) \rangle_{L^2 \oplus L^2 } \notag \\
=& 2 \Im \langle \varphi_0, B \varphi_0 \rangle \notag \\
=& 0 \; . \label{eq:term2}
\end{align} 
and finally for the third term 
\begin{align}
2 \Im \langle  \varphi_0, \mathcal{L}_{(t;0)} \varphi_t \rangle =& -i  \langle   ( \varphi_0, - J \varphi_0) , \;  \Theta (t;0) ( \varphi_t, J \varphi_t ) \rangle_{L^2 \oplus L^2 } \notag \\
=&-i  \langle   ( \varphi_0, - J \varphi_0) , \;   ( \varphi_0, J \varphi_0 ) \rangle_{L^2 \oplus L^2 } \notag \\
=& 2 \Im \langle \varphi_0,  \varphi_0 \rangle \notag \\
=& 0\; .\label{eq:term3}
\end{align} 
Summarizing \eqref{eq:term1}, \eqref{eq:term2} and \eqref{eq:term3} we find from \eqref{eq:splitting-Im} that we are left with calculating 
\begin{align}
\label{eq:Im-1}
2 \Im \langle g_{0}, f_{t} \rangle =& \Im \langle B \varphi_0, \mathcal{L}_{(t;0)} A \varphi_t \rangle \notag \\
=&  -i  \langle ( \varphi_0, J\varphi_0),   S B  \Theta (t;0) A(\varphi_t, J \varphi_t) \rangle_{L^2 \oplus L^2 } \notag \\
=&  -i  \langle ( \varphi_0,  J \varphi_0),  S B  \Theta (t;0) A\Theta (t;0)^{-1} (\varphi_0,  J  \varphi_0) \rangle_{L^2 \oplus L^2 } 
\end{align}
where we used that $A,B$ are real and self-adjoint operators. As the right and left hand side are both real, we can take the real part of both sides and obtain 
\begin{align}
2 \Im \langle g_{0}, & f_{t} \rangle \notag \\
=&    \frac{i}{2} \langle ( \varphi_0,  J \varphi_0), \left( S B  \Theta (t;0) A\Theta (t;0)^{-1} - ( \Theta (t;0)^{-1} )^* A \Theta (t;0)^* B S \right) (\varphi_0,  J  \varphi_0) \rangle_{L^2 \oplus L^2 } 
\end{align}
From \eqref{eq:prop-Theta} we have $( \Theta (t;0)^{-1} )^* = S \Theta (t;0) S$ and $\Theta (t;s)^* = S \Theta (t;0)^{-1} S$ and we conclude by 
\begin{align}
2 \Im \langle g_{0},  f_{t} \rangle =   -\frac{i}{2} \langle ( \varphi_0,  J \varphi_0),  S \left[ B ,  \Theta (t;0) A\Theta (t;0)^{-1} \right] (\varphi_0,  J  \varphi_0) \rangle_{L^2 \oplus L^2 } 
\end{align}
that is the r.h.s. of \eqref{eq:thm-OTOC}.
\end{proof}

\section{Proof of Corollary \ref{cor:OTOC}} 
\label{sec:cor}

The proof of Corollary \ref{cor:OTOC} shown here in this Section is based on Theorem \ref{thm:OTOC} and the properties of $\widetilde{K}_{j,t}$ and $\mathcal{L}_{(t,s)}$ from Lemmas \ref{lemma:K} resp. Lemma \ref{lemma:f}.  

\begin{proof}[Proof of Corollary \ref{cor:OTOC}] We first prove part (ii) and part (i) afterwards. 

\subsection*{Proof of (ii)} The bound follows immediately from Lemma \ref{lemma:f}. 

\subsection*{Proof of (i)} We are interested in an expansion for small times $\vert t \vert \leq T$ of (see the first line of \ref{eq:Im-1})
\begin{align}
\Im \langle \varphi_0, B \mathcal{L}_{(t;0)} A  \varphi_t \rangle  = \Im \langle \varphi_0, B \mathcal{L}_{(t;0)} A \mathcal{L}_{(0,t)} \varphi_0 \rangle 
\end{align}
where the last identity follows from \eqref{def:Lt}. We write with Duhamel's formula 
 \begin{align}
\Im \langle \varphi_0, B \mathcal{L}_{(t;0)} A \mathcal{L}_{(0;t)} \varphi_0 \rangle  =  \Im \langle \varphi_0, B A \varphi_0 \rangle - \int_0^t ds \frac{d}{ds} \Im \langle \varphi_0, B \mathcal{L}_{(t,s)} A \mathcal{L}_{(s;t)} \varphi_0 \rangle \; . 
 \end{align}
With \eqref{def:Ls} we find 
 \begin{align}
 \Im \langle \varphi_0, B \mathcal{L}_{(t;0)} A \mathcal{L}_{(0;t)} \varphi_0 \rangle  =&  \Im \langle \varphi_0, B A \varphi_0 \rangle \notag \\
 &+  \int_0^t ds  \Re \langle \varphi_0, B \left[  h_{\varphi_s} + \widetilde{K}_{1,s} - \widetilde{K}_{2,s} J, \;  \mathcal{L}_{(t,s)} A \mathcal{L}_{(s;t)}  \right] \varphi_0 \rangle \notag \\ 
 \end{align}
that we can further write as 
 \begin{align}
 \Im & \langle \varphi_0, B \mathcal{L}_{(t;0)} A \mathcal{L}_{(0;t)} \varphi_0 \rangle \notag \\
  =&  \Im \langle \varphi_0, B A \varphi_0 \rangle - t \Re \langle \varphi_0, B \left[  h_{\varphi_t} + \widetilde{K}_{1,t} - \widetilde{K}_{2,t} J, \;   A  \right] \varphi_0 \rangle  \notag \\ 
 &-  \int_0^t \int_s^t ds d\tau  \; \Re \langle \varphi_0, B \left[  \dot{h}_{\varphi_\tau} + \dot{\widetilde{K}}_{1,\tau} - \dot{\widetilde{K}}_{2,\tau} J, \;  \mathcal{L}_{(t,\tau)} A \mathcal{L}_{(\tau;t)} \right] \varphi_0 \rangle \notag \\ 
  &+  \int_0^t \int_s^t ds d\tau  \; \Im \langle \varphi_0, B \left[h_{\varphi_\tau} + \widetilde{K}_{1,\tau} - \widetilde{K}_{2,\tau} J, \;  \left[ h_{\varphi_\tau} + \widetilde{K}_{1,\tau} - \widetilde{K}_{2,\tau} J, \;  \mathcal{L}_{(t,\tau)} A \mathcal{L}_{(\tau;t)}  \right]\right] \varphi_0 \rangle \; . 
 \end{align}
We apply once more Duhamel's formula for the second term of the r.h.s. and arrive at 
 \begin{align}
 \Im & \langle \varphi_0, B \mathcal{L}_{(t;0)} A \mathcal{L}_{(0;t)} \varphi_0 \rangle \notag \\
 =&  \Im \langle \varphi_0, B A \varphi_0 \rangle - t \Re \langle \varphi_0, B \left[  h_{\varphi_0} + \widetilde{K}_{1,0} - \widetilde{K}_{2,0} J, \;   A  \right] \varphi_0 \rangle  \notag \\ 
  &- t \int_0^t ds \; \Re \langle \varphi_0, B \left[  \dot{h}_{\varphi_s} + \dot{\widetilde{K}}_{1,s} - \dot{\widetilde{K}}_{2,s} J, \;   A  \right] \varphi_0 \rangle \notag \\
 &-  \int_0^t \int_s^t ds d\tau  \; \Re \langle \varphi_0, B \left[  \dot{h}_{\varphi_\tau} + \dot{\widetilde{K}}_{1,\tau} - \dot{\widetilde{K}}_{2,\tau} J, \;  \mathcal{L}_{(t,\tau)} A \mathcal{L}_{(\tau;t)} \right] \varphi_0 \rangle \notag \\ 
  &+  \int_0^t \int_s^t ds d\tau  \; \Im \langle \varphi_0, B \left[h_{\varphi_\tau} + \widetilde{K}_{1,\tau} - \widetilde{K}_{2,\tau} J, \;  \left[ h_{\varphi_\tau} + \widetilde{K}_{1,\tau} - \widetilde{K}_{2,\tau} J, \;  \mathcal{L}_{(t,\tau)} A \mathcal{L}_{(\tau;t)}  \right]\right] \varphi_0 \rangle \; . \label{eq:Duhamel}
 \end{align}
We show in the following that the last three lines of the r.h.s. of \eqref{eq:Duhamel} are $O(t^2)$ and thus are sub-leading for all $t \ll 1$. For the term in the second line of the r.h.s. \eqref{eq:Duhamel} we find 
\begin{align}
 t & \int_0^t ds \;  \vert \langle \varphi_0, B \left[  \dot{h}_{\varphi_s} + \dot{\widetilde{K}}_{1,s} - \dot{\widetilde{K}}_{2,s} J, \;   A  \right] \varphi_0 \rangle \vert \notag \\
 & \leq t \int_0^t ds \| B \varphi_0 \|_2 \left( \| \dot{h}_{\varphi_s} A \varphi_0 \|_2 + \| \dot{\widetilde{K}}_{2,s}J A \varphi_0 \|_2 + \| \dot{\widetilde{K}}_{1,s} A \varphi_0 \|_2 \right) \; \notag \\
 &+ t \int_0^t ds \| A B \varphi_0 \|_2 \left( \| \dot{h}_{\varphi_s} \varphi_0 \|_2 + \| \dot{\widetilde{K}}_{2,s}J \varphi_0 \|_2 + \| \dot{\widetilde{K}}_{1,s}\varphi_0 \|_2 \right) \; . 
\end{align}
From Lemma \ref{lemma:K} we have 
\begin{align}
\| \dot{h}_{\varphi_s} A \varphi_0 \|_2 \leq e^{ C s} \| A \varphi_0 \|_2 \leq C \| A \varphi_0 \|_2 , \quad \| \dot{h}_{\varphi_s}  \varphi_0 \|_2 \leq e^{ C s} \|  \varphi_0 \|_2 \leq C \|  \varphi_0 \|_2 
\end{align}
for all $0 \leq s \leq t \leq 1$ and similarly 
\begin{align}
\| \dot{\widetilde{K}}_{j,s} A \varphi_0 \|_2 \leq e^{ C s} \| A \varphi_0 \|_2 \leq C \| A \varphi_0 \|_2 , \quad \| \dot{\widetilde{K}}_{j,s} \varphi_0 \|_2 \leq e^{ C s} \|  \varphi_0 \|_2 \leq C \|  \varphi_0 \|_2 \; 
\end{align}
so that we arrive at 
\begin{align}
 t & \int_0^t ds \;  \vert \langle \varphi_0, B \left[  \dot{h}_{\varphi_s} + \dot{\widetilde{K}}_{1,s} + \dot{\widetilde{K}}_{2,s} J, \;   A  \right] \varphi_0 \rangle \vert \notag \\
 &\leq  C \left( \| A B \varphi_0 \|_2 + \| A \varphi_0 \|_2 \; \| B \varphi_0 \|_2 \right) t  \int_0^t ds \leq C  \left( \| A B \varphi_0 \|_2 + \| A \varphi_0 \|_2 \; \| B \varphi_0 \|_2\right)  t^2 \; . \label{eq:estimate1}
\end{align}
For the term in the third line of the r.h.s. of \eqref{eq:Duhamel} we proceed with similar ideas and find 
\begin{align}
 \int_0^t & \int_s^t ds d\tau  \; \vert \langle \varphi_0, B \left[  \dot{h}_{\varphi_\tau} + \dot{\widetilde{K}}_{1,\tau} - \dot{\widetilde{K}}_{2,\tau} J, \;  \mathcal{L}_{(t,\tau)} A \mathcal{L}_{(\tau;t)}  \right] \varphi_0 \rangle \vert \notag \\
 \leq& \int_0^t \int_s^t ds d\tau \;  \|B \varphi_0 \|_2  \notag \\
 & \hspace{1cm} \times \left( \| \dot{h}_{\varphi_\tau}  \mathcal{L}_{(t,\tau)} A \mathcal{L}_{(\tau;t)}  \varphi_0 \|_2  +\| \dot{\widetilde{K}}_{1,\tau}  \mathcal{L}_{(t,\tau)} A \mathcal{L}_{(\tau;t)}  \varphi_0 \|_2 + \| \dot{\widetilde{K}}_{2,\tau} J  \mathcal{L}_{(t,\tau)} A \mathcal{L}_{(t;\tau)}^-  \varphi_0 \|_2 \right) \notag \\
 &+  \int_0^t \int_s^t ds d\tau \;  \|B \varphi_0 \|_2 \notag \\
 \leq& \int_0^t \int_s^t ds d\tau \;  \|B \varphi_0 \|_2  \notag \\
 & \hspace{1cm} \times \left( \|  \mathcal{L}_{(t,\tau)} A \mathcal{L}_{(\tau;t)} \dot{h}_{\varphi_\tau}  \varphi_0 \|_2  +\|  \mathcal{L}_{(t,\tau)} A \mathcal{L}_{(\tau;t)}  \dot{\widetilde{K}}_{1,\tau}  \varphi_0 \|_2 + \|  \mathcal{L}_{(t,\tau)} A \mathcal{L}_{(\tau;t)} \dot{\widetilde{K}}_{2,\tau} J  \varphi_0 \|_2 \right) 
\end{align}
Since by Lemma \ref{lemma:f} 
\begin{align}
\label{eq:estiamte-LL}
\| \mathcal{L}_{(t,\tau)} A \mathcal{L}_{(\tau;t)} g \|_2 \leq C e^{ c (t-\tau)} \| A \mathcal{L}_{(\tau;t)}  g \|_2  \leq C e^{ e^{c (t-\tau)}} \| \mathcal{L}_{(\tau;t)}  g \|_{H^2} \leq C  \| g \|_{H^2} 
\end{align}
for all $0 \leq \tau \leq t$ and any $g \in H^2$, we conclude with 
\begin{align}
 \int_0^t & \int_s^t ds d\tau  \; \vert \langle \varphi_0, B \left[  \dot{h}_{\varphi_\tau} + \dot{\widetilde{K}}_{1,\tau} + \dot{\widetilde{K}}_{2,\tau} J, \;  \mathcal{L}_{(t,\tau)} A \mathcal{L}_{(\tau;t)}  \right] \varphi_0 \rangle \vert \leq C  \int_0^t  \int_s^t ds d\tau \leq C t^2 \; . \label{eq:estimate2}
\end{align}
For the remaining forth term of the r.h.s. of \eqref{eq:splitting-Im} we find 
\begin{align}
 \int_0^t  & \int_s^t ds d\tau  \; \Im \langle \varphi_0, B \left[h_{\varphi_\tau} + \widetilde{K}_{1,\tau} - \widetilde{K}_{2,\tau} J, \;  \left[ h_{\varphi_\tau} + \widetilde{K}_{1,\tau} - \widetilde{K}_{2,\tau} J, \;  \mathcal{L}_{(t,\tau)} A \mathcal{L}_{(\tau;t)}  \right]\right] \varphi_0 \rangle  \notag \\
\leq& \int_0^t \int_s^t ds d\tau   \|  (h_{\varphi_\tau} + \widetilde{K}_{1,\tau} - \widetilde{K}_{2,\tau} J  ) B \varphi_0 \|_2 \| ( h_{\varphi_\tau} + \widetilde{K}_{1,\tau} - \widetilde{K}_{2,\tau} J ) \mathcal{L}_{(t,\tau)} A \mathcal{L}_{(\tau;t)} \varphi_0 \|_2 \notag \\
&+  2 \int_0^t \int_s^t ds d\tau   \| B \varphi_0 \|_2 \| ( h_{\varphi_\tau} + \widetilde{K}_{1,\tau} - \widetilde{K}_{2,\tau} J ) \mathcal{L}_{(t,\tau)} A \mathcal{L}_{(\tau;t)}  (h_{\varphi_\tau} + \widetilde{K}_{1,\tau} - \widetilde{K}_{2,\tau} J  ) \varphi_0 \|_2 \notag \\
  &+  \int_0^t \int_s^t ds d\tau   \| B \varphi_0 \|_2 \| \mathcal{L}_{(t,\tau)} A \mathcal{L}_{(\tau;t)}  (h_{\varphi_\tau} + \widetilde{K}_{1,\tau} - \widetilde{K}_{2,\tau} J  )^2 \varphi_0 \|_2
\end{align}
and we conclude with Lemma \ref{lemma:f} similarly as in \eqref{eq:estiamte-LL} that 
\begin{align}
\label{eq:estimate3}
 \int_0^t  & \int_s^t ds d\tau  \; \vert  \Im \langle \varphi_0, B \left[h_{\varphi_\tau} + \widetilde{K}_{1,\tau} - \widetilde{K}_{2,\tau} J, \;  \left[ h_{\varphi_\tau} + \widetilde{K}_{1,\tau} - \widetilde{K}_{2,\tau} J, \;  \mathcal{L}_{(t,\tau)} A \mathcal{L}_{(\tau;t)}  \right]\right] \varphi_0 \rangle  \vert \notag \\
 & \hspace{2cm} \leq C t^2. 
\end{align}
Summarizing \eqref{eq:estimate1}, \eqref{eq:estimate2} and \eqref{eq:estimate3} we thus get
\begin{align}
 \Im & \langle \varphi_0, B \mathcal{L}_{(t;0)} A \mathcal{L}_{(0;t)} \varphi_0 \rangle  \notag \\
 &= \Im \langle \varphi_0, B A \varphi_0 \rangle - t \Re \langle \varphi_0, B \left[  h_{\varphi_0} + \widetilde{K}_{1,0} - \widetilde{K}_{2,0} J, \;   A  \right] \varphi_0 \rangle + O(t^2) \; . 
\end{align}
\end{proof}

\section{Conclusions and future directions}
In this work, we consider bosons in mean-field scaling. This is a truly interacting model in which it is nonetheless possible to describe the particles quite precisely. Building on techniques developed in Bogoliubov theory for these systems (e.g., \cite{BKS,BSS,R}), we rigorously describe the large-$N$ asymptotics of out-of-time-ordered correlators for bosons by an effective nonlinear dynamics with symplectic structure. This is a rigorous many-body manifestation of the correspondence principle underlying quantum chaos. It can be understood in the context of the insight that the mean-field limit can be mathematically viewed as a many-body analog of a semiclassical limit; this observation goes back to Hepp \cite{H} and Ginibre-Velo \cite{GV}; see also \cite{RoS}. 

Our results open the door to studying many-body quantum information scrambling and entanglement generation in this paradigmatic interacting system. We give a few possibilities of interesting questions that can be tackled now, some of which we plan to return to in future work.

\begin{itemize}
    \item A central question in quantum chaos is to describe the intermediate-time growth of the OTOC, cf.\ \eqref{eq:Poisson} for the few-body version. Many-body analogs of this are extremely scarce, even on a non-rigorous level \cite{LSHOH,XS}. Our first main result can be seen as a many-body analog of the first identity in \eqref{eq:Poisson}. It would be extremely interesting to  also derive a version of the last $\approx $ in \eqref{eq:Poisson} (which involves the classical Lyapunov exponent), or more even modestly, a lower bound on the many-body OTOC that grows exponentially in time for intermediate times. Our result reduces this  many-body problem to a question about the  right-hand side of \eqref{eq:thm-OTOC}, i.e., a    question about the semiclassical growth behavior of a nonlinear dispersive PDE. This has two consequences: (i) our result opens up the vast toolkit of nonlinear dispersive PDE for quantum many-body chaos. (ii) our result makes numerical investigations of many-body information scrambling much more feasible, because the right-hand side of \eqref{eq:thm-OTOC} does not suffer anymore from the ``curse of dimensionality'' and can therefore be calculated quickly and precisely in many examples.
    
    \item One of the few topics that can rival quantum many-body chaos in terms of relevance and appeal in the wider physics community is many-body localization (MBL), which, can be expressed as slow growth of the OTOC \cite{CZHF,FZSZ}. Our result therefore also reduces the study of MBL mean-field bosons to a question about nonlinear dispersive PDE. More precisely, the results all go through in the presence of a random on-site external potential, which then manifests for the nonlinear Bogoliubov dynamics in the same way. Hence, it suffices to derive a nonlinear variant of Anderson localization as opposed to full MBL, which is actually conceivable from current techniques, see, e.g., \cite{CYZ}. We will return to this problem in a future paper with J.~Zhang.
    
    \item From a probabilistic perspective, various questions have been asked about dilute bosons in recent years: starting form the central limit theorem perspective of \cite{BKS,Rsing}, also large deviations \cite{RS} and Edgeworth expansions have been considered \cite{BP}. From a probabilistic perspective, another important extension of the CLT is  the pathwise functional CLT (or ``invariance principle'').  Our second main result, Theorem \ref{thm:MCLT}, is a stepping stone in this regard, as it can be formulated in terms of non-commutative probability to be identifying the so-called finite-dimensional distributions of this non-commutative time-indexed stochastic process with the natural Gaussian object given via the time-dependent  Wick rule. Making all this precise will require setting up a suitable free probability framework.
\end{itemize}

Finally, with an eye towards experimental realizations, it would be interesting to extend our results to the Gross-Pitaevskii regime which describes more realistic system of dilute bosons.

\section*{Acknowledgments}
The authors thank Christian Brennecke and Andreas Deuchert for helpful discussions. They also thank the referees for a number of suggestions that you improved manuscript. The research of ML is supported by  the Deutsche Forschungsgemeinschaft (DFG, German Research Foundation) through grant TRR 352-470903074.


\end{document}